\newtheorem{theorem}{Theorem}
\newtheorem{lemma}[theorem]{Lemma}
\newtheorem{proposition}{Proposition}
\newtheorem{condition}{Condition}
\DeclareMathOperator\tr{tr}
\DeclareMathOperator\diag{diag}
\DeclareMathOperator\col{col}
\DeclareMathOperator\cov{cov}
\DeclareMathOperator\sgn{sgn}
\DeclareMathOperator\var{var}
\DeclareMathOperator*{\argmin}{arg\,min}
\DeclareMathOperator*{\argmax}{arg\,max}
\newcommand{\A}{{A}}
\newcommand{\B}{{B}}
\newcommand{\D}{{D}}
\newcommand{\bE}{{E}}
\newcommand{\E}{{E}}
\newcommand{\cE}{\mathcal{E}}
\newcommand{\F}{{F}}
\newcommand{\cF}{\mathcal{F}}
\newcommand{\G}{{G}}
\newcommand{\bfH}{{H}}
\newcommand{\I}{{I}}
\newcommand{\J}{{J}}
\newcommand{\K}{{K}}
\newcommand{\cL}{\mathcal{L}}
\newcommand{\M}{{M}}
\newcommand{\cM}{\mathcal{M}}
\newcommand{\cN}{\mathcal{N}}
\newcommand{\bfP}{{P}}
\newcommand{\Q}{{Q}}
\newcommand{\R}{{R}}
\newcommand{\bR}{\mathbb{R}}
\newcommand{\bfS}{{S}}
\newcommand{\cS}{\mathcal{S}}
\newcommand{\bfT}{{T}}
\newcommand{\cT}{\mathcal{T}}
\newcommand{\cP}{\mathcal{P}}
\newcommand{\bP}{\mathbb{P}}
\newcommand{\U}{{U}}
\newcommand{\V}{{V}}
\newcommand{\bV}{\mathbb{V}}
\newcommand{\W}{{W}}
\newcommand{\X}{{X}}
\newcommand{\Y}{{Y}}
\newcommand{\Z}{{Z}}
\newcommand{\zero}{{0}}
\newcommand{\ii}{1}
\newcommand{\bfa}{{a}}
\newcommand{\bfb}{{b}}
\newcommand{\bfe}{{e}}
\newcommand{\bfu}{{u}}
\newcommand{\bfv}{{v}}
\newcommand{\bx}{{x}}
\newcommand{\bDelta}{{\Delta}}
\newcommand{\bGamma}{{\Gamma}}
\newcommand{\bLambda}{{\Lambda}}
\newcommand{\bOmega}{{\Omega}}
\newcommand{\bPi}{{\Pi}}
\newcommand{\bSigma}{{\Sigma}}
\newcommand{\bTheta}{{\Theta}}
\newcommand{\bepsilon}{{\epsilon}}
\newcommand{\bphi}{\phi}
\newcommand{\bmu}{\mu}
\newcommand{\bxi}{\xi}
\newcommand{\clr}{\text{clr}}
\newcommand{\ignore}[1]{}
\newenvironment{psmallmatrix}
  {\left(\begin{smallmatrix}}
  {\end{smallmatrix}\right)}
\def\T{{ \mathrm{\scriptscriptstyle T} }}
\def\v{{\varepsilon}}
\title{Principal component analysis for high-dimensional compositional data}
\author{Jingru Zhang \footnote{University of Pennsylvania, jingru.zhang@pennmedicine.upenn.edu} \qquad Wei Lin \footnote{Peking University, weilin@math.pku.edu.cn}}
\date{}                                       
\begin{document}

\maketitle

\begin{abstract}
Dimension reduction for high-dimensional compositional data plays an important role in many fields, where the principal component analysis of the basis covariance matrix is of scientific interest.
In practice, however, the basis variables are latent and rarely observed, and standard techniques of principal component analysis are inadequate for compositional data because of the simplex constraint. 
To address the challenging problem, we relate the principal subspace of the centered log-ratio compositional covariance to that of the basis covariance, and prove that the latter is approximately identifiable with the diverging dimensionality under some subspace sparsity assumption. 
The interesting blessing-of-dimensionality phenomenon enables us to propose the principal subspace estimation methods by using the sample centered log-ratio covariance. 
We also derive nonasymptotic error bounds for the subspace estimators, which exhibits a tradeoff between identification and estimation.
Moreover, we develop efficient proximal alternating direction method of multipliers algorithms to solve the nonconvex and nonsmooth optimization problems. Simulation results demonstrate that the proposed methods perform as well as the oracle methods with known basis. Their usefulness is illustrated through an analysis of word usage pattern for statisticians. \\ \\
\textbf{Keywords} \quad Basis; Centered log-ratio transformation; High dimensionality; Identifiability; Sparsity; Subspace estimation.
\end{abstract}

\section{Introduction}
Compositional data arise naturally in a variety of applications, such as microbiome studies \citep{li2015microbiome}, chemical composition analysis \citep{van2019regional}, and text analysis \citep{blei2003latent}. In many contemporary datasets, the number of variables $p$ is often comparable to or even larger than the number of observations $n$. For example, in text analysis, tens of thousands of words could be collected, while the number of authors would be only tens or hundreds. 
One of the main tools for exploratory analysis of such high-dimensional compositional data is principal component analysis.
To facilitate interpretation, it is of scientific interest to find out sparse low-dimensional subspace that explains most of the variance.

The analysis of text corpora has attracted more and more attention in recent years \citep{cimiano2005learning,gretarsson2012topicnets}. For each author, it is easy to survey words in abstracts from some papers. After aligning these words to the reference vocabulary, one can quantify the relative abundances of words. Since only a part of papers are collected, this procedure only provides a relative, rather than absolute, measure of abundances. Therefore, the analysis usually starts from normalizing the observed data by the total number of counts and the resulting proportions fall into a class of high-dimensional compositional data.

Consider a text dataset with $p$ distinct words. We use $\W=(W_1,\ldots,W_p)^\T$ with $W_j>0$ for all $j$ to represent the abundances of the $p$ words, which is called the \emph{basis}. The observed \emph{compositional} data $\X=(X_1,\ldots,X_p)^\T$ are generated from $\W$ via 
\begin{equation*}
X_j = \frac{W_j}{\sum_{i=1}^pW_i},\quad j=1,\ldots,p.
\end{equation*}
An important problem in text analysis is how to select representative words from the whole vocabulary, which would highlight the word usage pattern for authors and benefit further data exploration such as clustering.
Ideally, the problem could be solved by analyzing sparse principal subspace on a covariance matrix of (transformed) basis, rather than a covariance matrix of (transformed) compositions \citep{pearson1897form}. In this paper, we focus on the sparse principal subspace of the \emph{basis covariance matrix} $\bOmega=(\omega_{ij})_{p\times p}$, which is defined by
\[
\omega_{ij} = \cov(Y_i,Y_j),
\]
where $Y_j=\log W_j$ is the log basis. 
In text analysis, the compositional data $\X$ are usually available, while the basis $\W$ are rarely observed. We need to seek a proxy for $\bOmega$ with the hope that its principal subspace could approximate that of $\bOmega$ well.

Principal component analysis for compositional data has been studied by many researchers. As pointed out by \cite{aitchison1983principal}, compositional data frequently display marked curvature owing to the simplex constraint. A direct application of principal component analysis to the raw compositions is unable to capture this nonlinear structure. This prompts the exploration in transformations for compositions. For example, \cite{aitchison1983principal} recommended transforming compositional data $\X$ to Euclidean data $\Z$ by the centered log-ratio transformation
\[
\Z  = (Z_1,\ldots,Z_p)^\T = \clr(\X) = \left(\log\frac{X_1}{g(\X)},\ldots,\log\frac{X_p}{g(\X)}\right)^\T,
\]
where $g(\bx)=(\Pi_{j=1}^px_j)^{1/p}$ is the geometric mean of a vector $\bx=(x_1,\ldots,x_p)^\T$, and defined the \emph{centered log-ratio covariance matrix} $\bGamma=(\gamma_{ij})_{p\times p}$ by
\[
    \gamma_{ij} = \cov(Z_i,Z_j).
\]
\cite{filzmoser2009principal} proposed to use the isometric log-ratio transformation, where the transformed data was full rank but cannot be interpreted directly. \cite{scealy2015robust} considered the power transformed compositional data and conducted principal component analysis in
a tangent space. 
Although these existing works on principal component analysis for compositional data have shown good performances under some practical scenarios, none of them provide a transparent interpretation in the sense of basis or theoretical analysis to ensure their effectiveness.
In addition, previous work does not consider the high-dimensional setting, which we are particularly interested in.

Some calculations show that $\bGamma$ is related to $\bOmega$ through the identity
\begin{equation}\label{eq:identity}
\bGamma=(\I-p^{-1}\J)\bOmega (\I-p^{-1}\J)\equiv \G\bOmega \G,
\end{equation}
where $\G=\I-p^{-1}\J$ with $\I$ being the $p\times p$ identity matrix and $\J$ being the $p\times p$ all-ones matrix. This nice property enables us to take $\bGamma$ as a proxy for $\bOmega$.
However, owing to the  singularity of $\G$, $\bOmega$ is not uniquely determined by $\bGamma$, which would make a big difference in principal subspace between $\bOmega$ and $\bGamma$. 
To address the unidentifiability issue of $\bOmega$, \cite{cao2019large} worked under some sparsity assumption on $\bOmega$, and proved that the difference between $\bOmega$ and $\bGamma$ can vanish asymptotically. The sparsity assumption on the whole matrix $\bOmega$ seems too strong for principal component analysis, since only the low-dimensional principal subspace is of interest. 

In this paper, we only assume the principal subspace of $\bOmega$ is sparse. Based on that, we prove the difference in principal subspace between $\bGamma$ and $\bOmega$ vanishes as the dimension goes to infinity. The sparsity assumption on the principal subspace, on the other hand, has shown its necessity in principal component analysis for high-dimensional Euclidean data \citep{paul2007asymptotics, nadler2008finite, johnstone2009consistency}. For example, \cite{ma2013sparse} and \cite{cai2013sparse} considered sparse principal subspace estimation for the spiked covariance matrix. \cite{vu2013minimax} introduced two complementary notions of subspace sparsity, row sparsity and column sparsity, and analyzed sparse principal subspace estimation without Gaussian or spiked covariance assumptions.
Since the existing work on principal component analysis for high-dimensional data only provides the estimation error bound, one straightforward idea is separating the approximation and estimation processes. However, as illustrated later, the sparsity on the principal subspace of $\bOmega$ may not hold for $\bGamma$. Therefore, the idea of separation would lead to a fairly large estimation error bound.

By relating the principal subspace of $\bGamma$ to that of $\bOmega$, this paper bridges the gap between principal component analysis on compositions and principal component analysis on basis. The connection enables the principal subspace estimation to enjoy a direct interpretation in terms of the basis. Specifically, we adopt the subspace sparsity introduced in \cite{vu2013minimax}. Under the subspace sparsity assumption, we prove that the principal subspace of the basis covariance matrix is asymptotically identifiable. The error bound of the sparse principal subspace estimation consists of two terms: One accounts for the estimation error caused by finite samples; another comes from the approximation error caused by the proxy. 
We will see the dimension $p$ plays opposite roles in these two terms, which reveals an intriguing tradeoff between estimation and identification.
We emphasize our theoretical analysis is not a straightforward extension of existing principal component analysis methods for high-dimensional Euclidean data, since the idea of separating the approximation and estimation processes is infeasible.
Additionally, the sparse principal subspace estimation can be formulated as a nonconvex and nonsmooth optimization problem. To solve this problem, we propose an alternating direction method of multipliers \citep{boyd2011distributed} algorithm. Simulation studies suggest that the proposed methods outperform the methods based on other commonly used transformations. We illustrate our methods by analyzing a text dataset in order to identify representative words and research directions of statisticians.


\section{Methodology}\label{sec:setup}

\subsection{Setup}
We first introduce some notation. For any matrix $\A =(a_{ij})\in\mathbb{R}^{m\times n}$, let  $\|\A\|_2=\sigma_{\max}(\A)$, $\|\A\|_F=(\sum_{i,j}a_{ij}^2)^{1/2}$, $\|\A\|_{\max}=\max_{i,j}|a_{ij}|$ and $\|\A\|_*=\sum_{i=1}^{\min(m,n)}\sigma_i(\A)$, where $\{\sigma_i(\A)\}$ are singular values of $\A$ and $\sigma_{\max}(\A)$ represents the largest singular value. For two vectors or matrices $\A$ and $\B$ of compatible dimension, define the inner product $\langle \A,\B\rangle = \tr(\A^\T\B)$. Let $\I_{p\times p}$ be the $p\times p$ identity matrix, where the subscript $p\times p$ sometimes is omitted when it is clear from the context. Let $\ii_p$ be the $p$-dimensional all-ones vector and $\J=\ii_p\ii_p^\T$ be the $p\times p$ all-ones matrix. Denote by $\bV_{p,d}$ the class of $p\times d$ matrices with orthonormal columns. We use $\diag(x_1,\ldots,x_p)$ to denote the diagonal matrix with the diagonal elements $(x_1,\ldots,x_p)$. For an orthogonal projection matrix $\E$, we use $\E^\bot$ to denote $\I-\E$.

Recall that we are interested in the principal subspace of $\bOmega$, and its sample covariance matrix is 
\[
\bfS_Y = \frac{1}{n}\sum_{j=1}^n(\Y_j-\overline{\Y})(\Y_j-\overline{\Y})^\T,
\]
where $\overline{\Y}=\sum_{j=1}^n\Y_j/n$. However, owing to the unavailability of the basis, we are actually working on the sample covariance matrix of $\bGamma$:
\[
\bfS_{Z} = \frac{1}{n}\sum_{j=1}^n(\Z_j-\overline \Z)(\Z_j-\overline \Z)^\T,
\]
where $\Z_j=\G\Y_j$ and $\overline \Z=\sum_{j=1}^n\Z_j/n$. 

A key question is under what conditions the principal subspace of $\bGamma$ approximates that of $\bGamma$ well. 
In general, they can be very different.
Recall that $\G = \I_{p\times p}-p^{-1}\J$. Let $\G = \R_0\D_0\R_0^\T$ be the spectral decomposition of $\G$, 
where $\D_0=\diag(0,1,\ldots,1)$ and $\R_0$ is orthonormal. Consider $\bOmega=\R_0\bLambda \R_0^\T$, where $\bLambda = \diag(\lambda_1,\ldots,\lambda_p)$ with $\lambda_1 > \lambda_2\geq\lambda_3\geq\ldots\geq\lambda_p\geq 0$. Then
\[
\bGamma = \G\bOmega \G = \R_0\D_0\bLambda \D_0\R_0^\T = \R_0\tilde{\bLambda}\R_0^\T,
\]
where $\tilde{\bLambda} =\D_0\bLambda \D_0=\diag(0,\lambda_2,\ldots,\lambda_p)$. Thus, in this case, it is impossible to recover the first principal subspace of $\bOmega$ from $\bGamma$.

\subsection{Principal subspace for compositional data}
Let $\col(\U)$ denote the span of the columns of $\U$. For $\U\in\bV_{p,d}$, the orthogonal projection matrix for $\col(\U)$ is $\U\U^\T$.

Consider the spectral decomposition of the basis covariance matrix $\bOmega$:
\[
\bOmega = \sum_{j=1}^p\lambda_j\bfv_j\bfv_j^\T= \V_0\bLambda_0\V_0^\T+\V_1\bLambda_1\V_1^\T
\]
where $\lambda_1\geq\lambda_2\geq\ldots\geq\lambda_p\geq 0$ are the eigenvalues, $\bfv_1,\ldots,\bfv_p\in \bR^p$ are the associated orthonormal eigenvectors, $\V_0=(\bfv_1,\ldots,\bfv_d)\in\bR^{p\times d}$, $\V_1=(\bfv_{d+1},\ldots,\bfv_p)\in\bR^{p\times (p-d)}$, $\bLambda_0=\text{diag}(\lambda_1,\ldots,\lambda_d)$, and $\bLambda_1=\text{diag}(\lambda_{d+1},\ldots,\lambda_p)$. 
The $d$-dimensional \emph{principal subspace} of $\bOmega$ is denoted by
\[
\cS_\bOmega = \text{span}\{\bfv_1,\ldots,\bfv_d\} = \col(\V_0)
\]
and the associated projection matrix is $\V_0\V_0^\T$. 
We assume that the eigengap satisfies $\lambda_d-\lambda_{d+1}>0$, which ensures the principal subspace $\cS_\bOmega$ is uniquely defined. 

Similarly, let the spectral decomposition of the centered log-ratio covariance matrix $\bGamma$ be
\[
\bGamma = \sum_{j=1}^pa_j\bfu_j\bfu_j^\T= \U_0\A_0\U_0^\T+\U_1\A_1\U_1^\T
\]
and the corresponding $d$-dimensional \emph{principal subspace} be
\[
\cS_\bGamma = \text{span}\{\bfu_1,\ldots,\bfu_d\} = \col(\U_0)
\]
with the associated projection matrix $\U_0\U_0^\T$.


Let $\cE$ and $\cF$ be two $d$-dimensional subspaces of $\bR^p$. Let $\E$ and $\F$ denote the corresponding projection matrices, and the singular values of $\E\F^\bot$ be $s_1,\ldots,s_d,0,\ldots,0$. To measure the difference between $\cE$ and $\cF$, we adopt the distance  
\[
\|\sin\bTheta(\cE,\cF)\|_F,
\]
where $\bTheta(\cE,\cF)=\diag(\arcsin(s_1),\ldots,\arcsin(s_d))$ is the angle operator between $\cE$ and $\cF$ \citep{stewart1990matrix}.
We will frequently use the following identity
\[
    \|\sin\bTheta(\cE,\cF)\|_F^2 = \sum_{j=1}^d s_j^2 = \|\E\F^{\perp}\|_F^2=\frac{1}{2}\|\E-\F\|_F^2.
\]

\subsection{Sparse principal subspace for compositional data}
We impose structural constraints on $\cS_\bOmega$ for two reasons. First, since $\bGamma=\G\bOmega \G$ and $\G$ is singular, the basis covariance matrix $\bOmega$ and its principal subspace $\cS_\bOmega$ are unidentifiable in general. However, with additional constraints on $\V_0$,  we may be able to bound the difference between $\cS_\bOmega$ and $\cS_\bGamma$. Second, 
additional structural constraints are necessary for reducing the estimation error, since standard principal subspace estimation would be inconsistent under the high-dimensional setting. Specifically, we assume  $\cS_\bOmega$ to be sparse in the sense of \cite{vu2013minimax}. 

We first introduce the class of row sparse principal subspace.
For a $p\times d$ matrix $\A$, define the $(2,q)$-norm, $q\in[0,1]$, as the usual $l_q$ norm of the vector of row-wise $l_2$ norms of $\A$:
\[
\|\A\|_{2,q} = \|(\|\bfa_{1*}\|_2,\ldots,\|\bfa_{p*}\|_2)\|_q,
\]
where $\bfa_{j*}\in\bR^d$ denotes the $j$th row of $\A$. 
For $0\leq q\leq1$ and $d\leq R_q\leq d^{q/2}\times p^{1-q/2}$, $\cS_\bOmega$ is row sparse if it belongs to
\begin{equation*}
\cM_q(R_q)=
\begin{cases}
\{\col(\U): \U\in\bV_{p,d} \text{ and } \|\U\|_{2,q}^q\leq R_q\}, & \text{ if } 0<q\leq1 \\
\{\col(\U): \U\in\bV_{p,d} \text{ and } \|\U\|_{2,0}\leq R_0\}, & \text{ if } q=0.
\end{cases}
\end{equation*}
The constraint that $d\leq R_q\leq d^{q/2}p^{1-q/2}$ is owing to the fact that the columns are orthonormal for any $\U \in\bV_{p,d}$. We refer to  \cite{vu2013minimax} for more details.
The row sparsity defined above ensures that the subspace is generated by only a small subset of $p$ variables.
We estimate the row sparse $\cS_\bOmega$ by solving the following constrained optimization problem
\begin{align}\label{opt-row}
\text{maximize } & ~\langle \bfS_Z,\U\U^\T\rangle \\ \notag
\text{subject to } & ~\U\in\bV_{p,d} \\ \notag
& ~\|\U\|_{2,q}^q\leq R_q \quad (\text{or }\|\U\|_{2,0}\leq R_0 \text{ if } q=0).
\end{align}

As complementary, we also consider the class of column sparse principal subspace.
For a $p\times d$ matrix $\A$, define the $(*,q)$-norm, $q\in[0,1]$, as the maximal $l_q$ norm of its columns:
\[
\|\A\|_{*,q}=\max_{1\leq j\leq d}\|\bfa_{* j}\|_q,
\]
where $\bfa_{* j}\in\bR^p$ denotes the $j$th column of $\A$.
For $0\leq q\leq1$ and $1\leq R_q\leq p^{1-q/2}$, $\cS_\bOmega$ is column sparse if it belongs to
\begin{equation*}
\cM_q^*(R_q)=
\begin{cases}
\{\col(\U): \U\in\bV_{p,d} \text{ and } \|\U\|_{*,q}^q\leq R_q\}, & \text{ if } 0<q\leq1 \\
\{\col(\U): \U\in\bV_{p,d} \text{ and } \|\U\|_{*,0}\leq R_0\}, & \text{ if } q=0.
\end{cases}
\end{equation*}
The column sparsity sets constraints on each column of $\U\in\bV_{p,d}$ in the usual sense of $l_q$ sparse, and the column sparse $\cS_\bOmega$ can be estimated by solving
\begin{align}\label{opt-col}
\text{maximize } & ~\langle \bfS_Z,\U\U^\T\rangle \\ \notag
\text{subject to } & ~ \U\in\bV_{p,d} \\ \notag
& ~ \|\U\|_{*,q}^q\leq R_q \quad (\text{or }\|\U\|_{*,0}\leq R_0 \text{ if } q=0).
\end{align}

\section{Theory}\label{sec:theory}

\subsection{Identifiability of the principal subspace}\label{sec:identify}

We use $a = O(b)$ to denote that $a$ and $b$ are of the same order, and $a=o(b)$ to denote that $a$ is of a smaller order than $b$.
We will repeatedly use the quantities defined below
\begin{equation}\label{eq:sigma}
\sigma_1^2=\frac{\lambda_1\lambda_{d+1}}{(\lambda_d-\lambda_{d+1})^2}, \quad\quad\sigma_2^2=\frac{\lambda_1^2}{(\lambda_d-\lambda_{d+1})^2},
\end{equation}
and
\begin{equation*}
c(q)=
\begin{cases}
\frac{2-q}{2(1-q)}\left\{\frac{2(1-q)}{q}\right\}^{q/(2-q)} & \text{if } q\in (0,1)\\
2 & \text{if } q=0,1.
\end{cases}
\end{equation*}

\begin{theorem}\label{th:identify}
If $\cS_\bOmega\in\cM_q(R_q)\cup\cM_q^*(R_q)$, $q\in [0,1]$, then 
\begin{equation*}\label{ineq:iden}
\|\sin\bTheta(\cS_\bOmega,\cS_\bGamma)\|_F^2 \leq \frac{9c(q)^2\sigma_2^2d^2R_q^{2/(2-q)}}{p}.
\end{equation*}
\end{theorem}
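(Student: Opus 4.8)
The plan is to control $\|\sin\bTheta(\cS_\bOmega,\cS_\bGamma)\|_F^2 = \|\V_0\V_0^\T - \U_0\U_0^\T\|_F^2/2$ by applying a Davis--Kahan-type perturbation bound with $\bGamma = \G\bOmega\G$ playing the role of a perturbed version of $\bOmega$. Writing $\bGamma - \bOmega = \G\bOmega\G - \bOmega = -p^{-1}\J\bOmega - p^{-1}\bOmega\J + p^{-2}\J\bOmega\J$, the key observation is that each term involves $\bOmega$ multiplied by the rank-one matrix $p^{-1}\J = p^{-1}\ii_p\ii_p^\T$ on at least one side, so the size of the perturbation is governed by $\bOmega\ii_p$ (equivalently $\ii_p^\T\bOmega\ii_p$). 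The sin-$\Theta$ theorem then gives
\[
\|\sin\bTheta(\cS_\bOmega,\cS_\bGamma)\|_F \leq \frac{\|(\bGamma-\bOmega)\V_0\|_F + \text{(lower-order)}}{\lambda_d - \lambda_{d+1}},
\]
so it suffices to bound $\|(\bGamma-\bOmega)\V_0\|_F$, or more directly the relevant projected quantities, in terms of $p^{-1}\|\ii_p^\T\V_0\|$ and the spectral quantities in $\sigma_2^2 = \lambda_1^2/(\lambda_d-\lambda_{d+1})^2$.

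The next step is to exploit the subspace sparsity to bound $\|\ii_p^\T\V_0\|_2$. For $\cS_\bOmega \in \cM_q(R_q)$, i.e. $\|\V_0\|_{2,q}^q \le R_q$, I would bound the $\ell_2$ norm of the $d$-vector $\ii_p^\T\V_0 = \sum_{j=1}^p \bfv_{j*}$ (sum of the rows of $\V_0$). Since each row has $\ell_2$-norm at most $1$ (columns are orthonormal, so $\|\bfv_{j*}\|_2 \le 1$), a standard $\ell_q$-into-$\ell_1$ truncation argument — split rows into those with large versus small $\ell_2$-norm at a threshold, bound the large ones by a counting argument using $\|\V_0\|_{2,q}^q \le R_q$, bound the tail by the same — yields $\|\ii_p^\T\V_0\|_2 \le \|\V_0\|_{2,1} \le c(q)^{1/2}\, d^{1/2}\, R_q^{1/(2-q)}$ up to the precise combinatorial constant captured by $c(q)$ (this is exactly the mechanism by which $c(q)$ and the exponent $1/(2-q)$ enter). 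The column-sparse case $\cS_\bOmega \in \cM_q^*(R_q)$ is handled analogously, applying the truncation columnwise and then summing the $d$ columns, giving the same bound with the same $c(q)$.

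Assembling: each term of $\bGamma - \bOmega$ contributes, after projecting onto $\cS_\bOmega$ and using $\|\bOmega\|_2 = \lambda_1$, a factor $\lambda_1 \cdot p^{-1}\|\ii_p^\T\V_0\|_2 \le \lambda_1 p^{-1} c(q)^{1/2} d^{1/2} R_q^{1/(2-q)}$; there are three such terms (the $p^{-2}\J\bOmega\J$ term is genuinely smaller, of order $p^{-2}$, so it is absorbed), which accounts for the constant $9 = 3^2$ after squaring. Dividing by $(\lambda_d-\lambda_{d+1})^2$, squaring, and bookkeeping the $d^{1/2}\cdot d^{1/2}$ from the two appearances of the projected perturbation gives
\[
\|\sin\bTheta(\cS_\bOmega,\cS_\bGamma)\|_F^2 \;\le\; \frac{9\,c(q)^2\,\sigma_2^2\,d^2\,R_q^{2/(2-q)}}{p}.
\]
The main obstacle I anticipate is not the perturbation bound itself but getting the sparsity-to-$\ell_1$ truncation constant to come out as exactly $c(q)$ with the stated exponent, and making sure the Davis--Kahan step is applied in a form (e.g. the one-sided variant $\|\sin\bTheta\|_F \le \|(\bGamma-\bOmega)\V_0\|_F/(\lambda_d - \lambda_{d+1})$, or a version using $\lambda_d(\bOmega) - \lambda_{d+1}(\bGamma)$) that cleanly produces $\lambda_d - \lambda_{d+1}$ in the denominator rather than a gap involving $\bGamma$'s eigenvalues — which would require an extra Weyl-inequality detour. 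A secondary subtlety is whether the cross terms in $\|(\bGamma-\bOmega)\V_0\|_F$ should be bounded term-by-term (triangle inequality, giving the factor $3$) or more carefully; the stated constant $9$ suggests the crude triangle-inequality route is what is intended.
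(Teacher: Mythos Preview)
Your proposal is correct and follows essentially the same route as the paper: Davis--Kahan applied to the pair $(\bOmega,\bGamma)$, then bound $\|(\bGamma-\bOmega)\V_0\|$ term-by-term via $\|\J\V_0\|_2\le c(q)\sqrt{pd}\,R_q^{1/(2-q)}$, which in turn comes from the $\ell_q$-to-$\ell_1$ truncation argument you describe (Raskutti--Wainwright--Yu Lemma~5). The factor $3$ from the triangle inequality and the passage $\|\cdot\|_F\le\sqrt d\,\|\cdot\|_2$ give the $9$ and the $d^2$.

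One point worth noting on the denominator issue you flagged: the paper does \emph{not} need a Weyl detour. Because $\bGamma=\G\bOmega\G$ with $\G$ an orthogonal projection of rank $p-1$, $\bGamma$ is a compression of $\bOmega$, and Cauchy interlacing immediately gives $a_{d+1}\le\lambda_{d+1}$. The Davis--Kahan gap $\delta=\lambda_d-a_{d+1}$ is therefore at least $\lambda_d-\lambda_{d+1}$ with no further work. Also, a minor slip: the truncation constant is $c(q)$, not $c(q)^{1/2}$; the paper obtains it by bounding $\max_j\|\bfv_j\|_1\le c(q)R_q^{1/(2-q)}$ columnwise (from $\max_j\|\bfv_j\|_q^q\le R_q$ and $\|\bfv_j\|_2=1$) rather than via $\|\V_0\|_{2,1}$.
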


The proof of Theorem \ref{th:identify} is deferred to Appendix \ref{pf:identify}.
The basic intuition is as follows.
Assume that ${v}$ is an eigenvector of $\bOmega$ satisfying $\bOmega \bfv = \lambda \bfv$. If ${v}^\T\ii_p=0$,  $\bfv$ is also an eigenvector of $\bGamma$ since 
\[
\bGamma\bfv = (\I-p^{-1}\J)\bOmega(\I-p^{-1}\J)\bfv = (\I-p^{-1}\J)\lambda \bfv = \lambda \bfv.
\]
If $\bfv$ is sparse, then $\langle \bfv/\|\bfv\|_2, \ii_p/p^{1/2}\rangle=O(\|\bfv\|_0^{1/2}/p^{1/2})$ goes to zero in the high-dimensional regime. Hence the principal subspace of $\bOmega$ could be approximated by that of $\bGamma$.

Theorem \ref{th:identify} ensures that $\cS_\bOmega$ is approximately identifiable as long as $\sigma_2^2d^2R_q^{2/(2-q)}=o(p)$. Under this condition, the difference between $\cS_\bOmega$ and $\cS_\bGamma$ vanishes asymptotically, which allows us to estimate $\cS_\bOmega$ based on the sample covariance matrix of $\bGamma$.

\subsection{Standard principal subspace estimation}\label{sec:pca}
In this section, we analyze the standard principal subspace estimation obtained by the eigen decomposition of $\bfS_Z$. We assume that there exist i.i.d.\ random vectors $\bfT_1,\ldots,\bfT_n\in\bR^p$ with $\bE (\bfT_1)=\zero$ and $\var(\bfT_1)=\I_{p\times p}$, such that 
\begin{equation}\label{yi}
\Y_k = \bmu+\bOmega^{1/2}\bfT_k \quad\text{ and }\quad \|\bfT_k\|_{\psi_2}\leq 1
\end{equation}
for $k=1,\ldots,n$, where $\|\cdot\|_{\psi_2}$ is the sub-Gaussian norm \citep{vershynin2018high} defined by 
\[
\|\bfT\|_{\psi_2}=\sup_{{b}:\|{b}\|_2\leq 1}\inf\Big\{C>0:\bE\exp\Big|\frac{\langle \bfT,{b}\rangle}{C}\Big|^2 \leq 2\Big\}.
\]
Denote by $\cP_q(R_q)$ the class of distributions on $\Y_1,\ldots,\Y_n$ that satisfy (\ref{yi}) and $\cS_\bOmega\in\cM_q(R_q)$.
Denote by $\cP_q^*(R_q)$ the class of distributions on $\Y_1,\ldots,\Y_n$ that satisfy (\ref{yi}) and $\cS_\bOmega\in\cM_q^*(R_q)$.


\begin{theorem}[Standard principal component analysis]\label{th:pca-Sz}
 Let $q\in [0,1]$. Let $\bfS_Z=\widehat{\V}\widehat{\bLambda}\widehat{\V}^\T$ be the eigen decomposition of $\bfS_Z$, where $\widehat{\bLambda}=\diag(\widehat{\lambda}_1,\ldots,\widehat{\lambda}_p)$ with $\widehat{\lambda}_1\geq\widehat\lambda_2\geq\ldots\geq\widehat\lambda_p\geq 0$ and the orthonormal eigenvector matrix $\widehat{\V}=(\widehat{\bfv}_1,\ldots,\widehat{\bfv}_p)$. Let $\widehat{\V}_0=(\widehat{\bfv}_1,\ldots,\widehat{\bfv}_d)$ and $\widehat{\cS}_Z=\col(\widehat \V_0)$. If $\Y_i \stackrel{iid}{\sim}\bP\in\cP_q(R_q)\cup\cP_q^*(R_q)$, $i=1,\ldots, n$, then for any $\tau\in (0,1/2)$, we have 
\begin{equation}\label{ineq:Sz}
\|\sin\bTheta(\widehat{\cS}_Z,\cS_\bOmega)\|_F^2\leq \frac{c_1}{\delta^2}\left(d\zeta_1^2\vee\frac{c(q)^2\lambda_1^2d^2R_q^{2/(2-q)}}{p}\right)
\end{equation}
with probability at least $1-2\tau$, where $\zeta_1 = c_2\lambda_1\max\left((p-\log\tau)^{1/2}/n^{1/2},(p-\log\tau)/n\right)$, $\delta=\max\left(\lambda_d-\lambda_{d+1}-\zeta_1,0\right)$, $c_1$ and $c_2$ are positive constants.
\end{theorem}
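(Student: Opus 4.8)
The plan is to bound $\|\sin\bTheta(\widehat{\cS}_Z,\cS_\bOmega)\|_F$ directly, viewing $\bfS_Z$ as a perturbation of $\bOmega$ whose error splits into a small stochastic part $\bfS_Z-\bGamma$ and a structurally benign bias part $\bGamma-\bOmega$ that is large in operator norm but negligible once restricted to the sparse subspace $\cS_\bOmega$. Since $\widehat{\cS}_Z=\col(\widehat\V_0)$ and $\cS_\bOmega=\col(\V_0)$ are the top-$d$ eigenspaces of $\bfS_Z$ and $\bOmega$, I would invoke the asymmetric Davis--Kahan $\sin\bTheta$ inequality \citep{stewart1990matrix}, which gives, whenever $\lambda_d>\widehat{\lambda}_{d+1}$,
\[
\|\sin\bTheta(\widehat{\cS}_Z,\cS_\bOmega)\|_F\le\frac{\|(\bfS_Z-\bOmega)\V_0\|_F}{\lambda_d-\widehat{\lambda}_{d+1}}.
\]
It then remains to lower bound the denominator by $\delta$ on a high-probability event and to upper bound the numerator by a multiple of $\sqrt{d}\,(\zeta_1\vee c(q)\lambda_1 R_q^{1/(2-q)}/\sqrt{p})$. (If $\delta=0$ the asserted bound is vacuous, so we may assume $\lambda_d-\lambda_{d+1}>\zeta_1$.)

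For the denominator, the key structural observation is that $\bGamma=\G\bOmega\G$ with $\G=\I-p^{-1}\J$ the orthogonal projection onto $\ii_p^\perp$, so the nonzero eigenvalues of $\bGamma$ are those of the compression of $\bOmega$ to a hyperplane and, by Cauchy interlacing, the $(d+1)$th eigenvalue of $\bGamma$ is at most $\lambda_{d+1}$; this is what lets the population gap $\lambda_d-\lambda_{d+1}$ enter rather than the a priori smaller gap of $\bGamma$. Combining this with Weyl's inequality, the identity $\bfS_Z-\bGamma=\G(\bfS_Y-\bOmega)\G$, and $\|\G\|_2=1$,
\[
\widehat{\lambda}_{d+1}\le\lambda_{d+1}+\|\bfS_Z-\bGamma\|_2\le\lambda_{d+1}+\|\bfS_Y-\bOmega\|_2.
\]
Writing $\Y_k-\overline{\Y}=\bOmega^{1/2}(\bfT_k-\overline{\bfT})$ and invoking the standard deviation bound for sample covariance matrices of sub-Gaussian vectors \citep{vershynin2018high} (the sample mean contributing only a lower-order term), $\|\bfS_Y-\bOmega\|_2\le\zeta_1$ with probability at least $1-2\tau$; on this event the denominator is at least $\lambda_d-\lambda_{d+1}-\zeta_1=\delta$.

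For the numerator, I would split $\|(\bfS_Z-\bOmega)\V_0\|_F\le\|(\bfS_Z-\bGamma)\V_0\|_F+\|(\bGamma-\bOmega)\V_0\|_F$, the first term being at most $\|\bfS_Z-\bGamma\|_2\|\V_0\|_F\le\sqrt{d}\,\zeta_1$. For the second, expand $\bGamma-\bOmega=-(p^{-1}\J\bOmega+p^{-1}\bOmega\J-p^{-2}\J\bOmega\J)$; using $\bOmega\V_0=\V_0\bLambda_0$ and $\J=\ii_p\ii_p^\T$, each summand is rank one along $\ii_p$ and proportional to $\V_0^\T\ii_p$, and crude operator-norm bounds on $\bOmega$ and $\J$ give $\|(\bGamma-\bOmega)\V_0\|_F\le C\lambda_1\|\V_0^\T\ii_p\|_2/\sqrt{p}$. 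Finally $\|\V_0^\T\ii_p\|_2\le\|\V_0\|_{2,1}$ in the row-sparse case and $\|\V_0^\T\ii_p\|_2\le\sqrt{d}\max_{j}\|\bfv_j\|_1$ in the column-sparse case, and the constraint $\cS_\bOmega\in\cM_q(R_q)\cup\cM_q^*(R_q)$ bounds each of these by $O(c(q)\sqrt{d}\,R_q^{1/(2-q)})$ through the same $\ell_q$-to-$\ell_2$ interpolation used in Theorem \ref{th:identify} --- this is precisely the blessing-of-dimensionality input. Squaring, combining with the denominator bound, and using $(a+b)^2\le2(a^2\vee b^2)$ yields the claimed inequality (in fact with the power of $d$ in the second term reduced to $1$, which a fortiori implies the stated bound).

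The main obstacle is the denominator: one must rule out that the centered log-ratio projection $\G(\cdot)\G$ collapses the spectral gap at level $d$. Cauchy interlacing handles the $(d+1)$th eigenvalue of $\bGamma$ for free, so no lower bound on its $d$th eigenvalue is needed; by contrast, a naive Davis--Kahan bound through $\|\bfS_Z-\bOmega\|_2$ fails because $\|\bGamma-\bOmega\|_2$ is of order $\lambda_{d+1}$ and does not vanish with $p$, so the bias only becomes negligible after restriction to $\cS_\bOmega$ --- this is what makes the analysis genuinely different from ordinary high-dimensional PCA and forces a joint treatment of approximation and estimation. An alternative route is the triangle inequality $\|\sin\bTheta(\widehat{\cS}_Z,\cS_\bOmega)\|_F\le\|\sin\bTheta(\widehat{\cS}_Z,\cS_\bGamma)\|_F+\|\sin\bTheta(\cS_\bGamma,\cS_\bOmega)\|_F$ with Theorem \ref{th:identify} controlling the second term, but then Davis--Kahan for the first term additionally requires a sparsity-based lower bound $a_d\ge\lambda_d-o(\lambda_d-\lambda_{d+1})$ obtained from the min-max principle with test subspace $\G\cS_\bOmega$, so the direct argument is cleaner.
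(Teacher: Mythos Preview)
Your proof is correct and follows essentially the same route as the paper: asymmetric Davis--Kahan applied to $\bfS_Z$ versus $\bOmega$, then split $\bfS_Z-\bOmega$ into the stochastic piece $\bfS_Z-\bGamma$ (controlled by sub-Gaussian operator-norm concentration) and the bias $\bfH=\bGamma-\bOmega$ restricted to $\V_0$ (controlled via the sparsity bound on $\|\V_0^\T\ii_p\|_2$), with Cauchy interlacing plus Weyl handling the denominator. The only cosmetic differences are that the paper applies concentration directly to $\bfS_Z-\bGamma$ (using $a_1\le\lambda_1$) rather than passing through $\|\bfS_Y-\bOmega\|_2$, and bounds $\|\bfH\V_0\|_F$ via $\sqrt d\,\|\bfH\V_0\|_2$ (producing the stated $d^2$) rather than your sharper direct Frobenius bound.
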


The proof of Theorem \ref{th:pca-Sz} is provided in Appendix \ref{pf:pca-Sz}. The upper bound in Theorem \ref{th:pca-Sz} contains two parts, where the first term represents the estimation error and the second term accounts for the approximation error. 
The estimation error term coincides with the existing results (for example, \cite{johnstone2009consistency}), although we work without the spiked assumption for the covariance matrix and aim to estimate the principal subspace rather than the principal eigenvector. 
The approximation error term is specific for compositional data, which demonstrates the benefit of the sparsity structure on the principal subspace.

In particular, when $p/n \ll 1$, with probability at least $1-2/n$, the order of upper bound in (\ref{ineq:Sz}) is 
\begin{equation*}
\sigma_2^2\left\{\frac{d(p+\log n)}{n}\vee \frac{d^2R_q^{2/(2-q)}}{p}\right\}.
\end{equation*}
Assuming $d=O(1)$ and $\sigma_2=O(1)$, we see that $\widehat\cS_Z$ is consistent if both $p/n$ \emph{and} $R_q^{2/(2-q)}/p$ converge to zero. 
In the high-dimensional setting, the approximation error still goes to zero asymptotically, while the estimation error cannot vanish. It indicates that standard principal component analysis does not fully utilize the sparsity structure, which inspires us to incorporate the sparsity information into the estimation procedure. Specifically, we consider the constrained estimators \eqref{opt-row} and \eqref{opt-col}. 

\subsection{Sparse principal subspace estimation}\label{sec:sppca}
Let 
\[
\v_n=2^{1/2}R_q^{1/2}\left(\frac{d+\log p}{n}\right)^{1/2-q/4}.
\]
We need the following regularity conditions.
\begin{condition}\label{condition1}
There exists constants $c_1,c_2,c_3,c_4>0$ such that
\begin{equation*}
\v_n\leq 1,
\end{equation*}
\begin{equation*}
c_1\lambda_1\left(1+\frac{2c(q)^2R_q^{2/(2-q)}d}{p}\right)\left(\frac{d+\log n}{n}\right)^{1/2}+ c_2\lambda_{d+1}(\log n)^{5/2}\v_n \leq \frac{1}{2}(\lambda_d-\lambda_{d+1}),
\end{equation*}
\begin{equation*}
c_3\lambda_{d+1}(\log n)^{5/2}\v_n \leq (\lambda_1\lambda_{d+1})^{1/2-q/4}(\lambda_d-\lambda_{d+1})^{q/2},
\end{equation*}
\begin{equation*}
c_4\lambda_{d+1}(\log n)^{5/2}\v_n^2 \leq (\lambda_1\lambda_{d+1})^{1-q/2}(\lambda_d-\lambda_{d+1})^{q-1}.
\end{equation*}
\end{condition}

\begin{condition}\label{condition2}
There exists a constant $c_5>0$ such that
\begin{equation*}
c(q)R_q^{1/(2-q)}(\lambda_1d)^{1/2}\leq c_5(\lambda_{d+1}p)^{1/2}.
\end{equation*}
\end{condition}

Conditions \ref{condition1} and \ref{condition2} are required to bound the estimation error and approximation error, respectively. They are quite mild under the high-dimensional sparse subspace setting, allowing $R_q$, $d$ and $\lambda_j ~(j=1,d,d+1)$ to grow with the sample size $n$. For example, when $q=0$, let $p=n^a$, $R_0=n^b$, $d=n^c$, $\lambda_1=n^{\iota_1}$, $\lambda_d=n^{\iota_2}$ and $\lambda_{d+1}=n^{\iota_3}$ with $a>b>c>0$ and $\iota_1\geq\iota_2>\iota_3$. Then Conditions \ref{condition1} and \ref{condition2} can be satisfied for large enough $n$ if $b+c<1$, $\iota_1<\iota_2+(1-c)/2$ and $b+c+\iota_1<a+\iota_3$.

\begin{theorem}\label{th:upperbound1}
(Row sparse upper bound). Let $q\in[0,1]$, $\widehat \V_0$ be any solution of  (\ref{opt-row}), and $\widehat{\cS}_Z=\col(\widehat \V_0)$. If $ \Y_i\stackrel{iid}{\sim}\bP\in\cP_q(R_q)$, $i=1,\ldots,n$, and Conditions \ref{condition1} and \ref{condition2} hold, then
\begin{equation}\label{ineq:upper1}
\|\sin\bTheta(\widehat{\cS}_Z,\cS_\bOmega)\|_F^2\leq c_1\left\{R_q\sigma_1^{2-q}\left(\frac{d+\log p}{n}\right)^{1-q/2}\vee\left(\frac{c(q)^2\sigma_2^2d^2R_q^{2/(2-q)}}{p}\right)\right\}
\end{equation}
with probability at least $1-4/n-6\log n/n-1/p$. Here, $\sigma_1^2, \sigma_2^2$ are defined in (\ref{eq:sigma}) and $c_1$ is a positive constant.
\end{theorem}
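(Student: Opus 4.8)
The plan is to follow the classical two-step template for analyzing constrained PCA estimators, but carefully routed through the proxy covariance $\bGamma$ so that the identification error from Theorem \ref{th:identify} enters additively. Write $\widehat\E = \widehat\V_0\widehat\V_0^\T$, $\E_\bOmega = \V_0\V_0^\T$, and $\E_\bGamma = \U_0\U_0^\T$ for the three projection matrices. Since $\|\sin\bTheta(\widehat\cS_Z,\cS_\bOmega)\|_F \le \|\sin\bTheta(\widehat\cS_Z,\cS_\bGamma)\|_F + \|\sin\bTheta(\cS_\bGamma,\cS_\bOmega)\|_F$ (triangle inequality for the $\sin\bTheta$ distance, which is a metric on $d$-dimensional subspaces), and the second term is controlled by Theorem \ref{th:identify} — contributing the $c(q)^2\sigma_2^2 d^2 R_q^{2/(2-q)}/p$ piece after squaring — it suffices to bound $\|\sin\bTheta(\widehat\cS_Z,\cS_\bGamma)\|_F^2$ by (a constant times) $R_q\sigma_1^{2-q}\{(d+\log p)/n\}^{1-q/2}$ up to the same additive identification term, using Condition \ref{condition1}. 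Condition \ref{condition2} is precisely what guarantees the eigengap of $\bGamma$ is comparable to that of $\bOmega$, i.e. $a_d - a_{d+1} \gtrsim \lambda_d-\lambda_{d+1}$, since $\bGamma=\G\bOmega\G$ only perturbs eigenvalues by an amount governed by $\|\G\bOmega\G-\bOmega\|_2$, which is small relative to the gap exactly when $c(q)R_q^{1/(2-q)}(\lambda_1 d)^{1/2}\lesssim(\lambda_{d+1}p)^{1/2}$.

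For the estimation term, I would first establish the basic inequality coming from optimality: since $\widehat\V_0$ maximizes $\langle\bfS_Z,\U\U^\T\rangle$ over $\bV_{p,d}\cap\{\|\U\|_{2,q}^q\le R_q\}$ and $\U_0$ (which spans $\cS_\bGamma$) lies in the feasible set — here using that $\cS_\bOmega\in\cM_q(R_q)$ and a small-perturbation argument that $\cS_\bGamma$ is nearly as row-sparse — we get $\langle\bfS_Z,\widehat\E-\E_\bGamma\rangle\ge 0$, hence $\langle\bGamma,\E_\bGamma-\widehat\E\rangle\le\langle\bfS_Z-\bGamma,\widehat\E-\E_\bGamma\rangle$. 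The left side is lower-bounded by $(a_d-a_{d+1})\|\sin\bTheta(\widehat\cS_Z,\cS_\bGamma)\|_F^2$ by the standard curvature/Davis–Kahan-type inequality for the variance objective. The right side is a linear functional of the fluctuation matrix $\bfS_Z-\bGamma$ against a difference of projections that is itself row-sparse (supported, roughly, on at most $2R_q$-ish rows in the $q=0$ case, with the analogous weak-$\ell_q$ statement otherwise); this is bounded by $\|\bfS_Z-\bGamma\|_{(2,q)\text{-restricted}}\cdot\|\widehat\E-\E_\bGamma\|_{2,q'}$ via a Cauchy–Schwarz/Hölder argument over the relevant row blocks, exactly as in \cite{vu2013minimax}. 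The key probabilistic input is a concentration bound on the restricted norm of $\bfS_Z-\bGamma$: since $\Z_k=\G\Y_k$ with $\Y_k$ sub-Gaussian via \eqref{yi}, $\bfS_Z$ concentrates around $\bGamma$ with the $\max$-norm / block-norm scaling $\lambda_1\sqrt{(d+\log p)/n}$ up to polylog factors (the $(\log n)^{5/2}$ in Condition \ref{condition1} absorbs the sub-exponential tails of quadratic forms). Combining, one gets a quadratic inequality of the form $x^2 \lesssim A\sqrt{R_q}\,x^{1-q/2}$ after rescaling, which solves to $x^{1+q/2}\lesssim A\sqrt{R_q}$, i.e. $\|\sin\bTheta\|_F^2\lesssim R_q\sigma_1^{2-q}\{(d+\log p)/n\}^{1-q/2}$; the $\sigma_1$ rather than $\sigma_2$ appears because the fluctuation scales with $\lambda_1$ while curvature scales with $(\lambda_d-\lambda_{d+1})$, and the extra $\lambda_{d+1}^{1/2}$ factor comes from the variance of $\bfS_Z-\bGamma$ restricted to the complement of $\cS_\bGamma$, which is the standard refinement giving $\lambda_1\lambda_{d+1}$ instead of $\lambda_1^2$ inside the bound.

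The main obstacle I anticipate is the bookkeeping around Condition \ref{condition1}: one must verify that the self-consistency/iteration used to bound $\|\widehat\E-\E_\bGamma\|_{2,q}$ in terms of $\|\sin\bTheta\|_F$ does not lose factors, that the eigengap of $\bfS_Z$ survives (hence $\delta$-type quantities stay positive), and that the identification error fed in from Theorem \ref{th:identify} is genuinely of smaller or equal order so it can be absorbed into the max. Concretely, the subtle point is that $\cS_\bGamma$ is only \emph{approximately} row-sparse, not exactly in $\cM_q(R_q)$, so when invoking feasibility of a near-$\U_0$ competitor in \eqref{opt-row} one pays a lower-order correction; controlling this correction and showing it is dominated by the two named terms — which is exactly where the third and fourth inequalities of Condition \ref{condition1} and Condition \ref{condition2} are consumed — is the delicate part. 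The probabilistic tail arithmetic yielding the stated $1-4/n-6\log n/n-1/p$ is routine union-bounding of the several high-probability events (operator-norm concentration of $\bfS_Z-\bGamma$ at level $1-2/n$ or so, a restricted-max-norm bound at level $1-1/p$, and a $(\log n)$-indexed chaining event at level $1-6\log n/n$) and I would defer it to the appendix.
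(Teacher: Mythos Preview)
Your proposal has a genuine gap, and it is precisely the one the paper flags in Section~\ref{sec:separate} as making the separation route infeasible. You plan to insert $\cS_\bGamma$ as an intermediary via the triangle inequality and then, for the estimation term, invoke optimality of $\widehat\V_0$ against $\U_0$ (a basis for $\cS_\bGamma$) as a feasible competitor in \eqref{opt-row}. But $\U_0$ need not be feasible: $\cS_\bOmega\in\cM_q(R_q)$ does \emph{not} imply $\cS_\bGamma\in\cM_q(R_q)$, or even approximately so. The paper's example with $\bOmega=\diag(\lambda_1,\ldots,\lambda_p)$ shows that the leading eigenvector of $\bGamma$ can have $\|\bxi_1\|_q^q$ of order $p^{1-q}$ even when $R_q=1$ for $\bOmega$. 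So the ``lower-order correction'' you anticipate from using a near-$\U_0$ competitor is in fact of order $p^{1-q}$ in the sparsity parameter, which inflates the estimation bound to roughly $p^{1-q}\{(d+\log p)/n\}^{1-q/2}$ rather than $R_q\{(d+\log p)/n\}^{1-q/2}$. This is not absorbable.

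The paper's proof avoids the intermediary entirely. It applies the curvature lemma (Lemma~\ref{lem0}) directly with $\A=\bOmega$ and competitor $\V_0$, which \emph{is} feasible by assumption, to obtain
\[
(\lambda_d-\lambda_{d+1})\,\widehat\v^2 \le \langle \bfS_Z-\bOmega,\ \widehat\bPi-\bPi\rangle,
\]
where $\bPi=\V_0\V_0^\T$. The right-hand side is then split as $\langle \bfS_Z-\bGamma,\widehat\bPi-\bPi\rangle+\langle \bGamma-\bOmega,\widehat\bPi-\bPi\rangle$. The first piece is handled by the Vu--Lei $T_1,T_2,T_3$ decomposition (with $\bPi$ the $\bOmega$-projector, not the $\bGamma$-projector), and the second, deterministic piece $\langle\bfH,\widehat\bPi-\bPi\rangle$ is bounded directly using $\|\J\V_0\|_2$-type estimates (Lemmas~\ref{lem:vjbound} and \ref{lem:v0hv0}), yielding the approximation term $c(q)^2\sigma_2^2 d^2 R_q^{2/(2-q)}/p$. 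In short, the bias $\bfH$ is handled \emph{inside} the basic inequality rather than \emph{before} it, so feasibility is only ever invoked for the genuinely sparse $\V_0$. Your argument can be repaired by switching to this direct comparison; the probabilistic bookkeeping you sketch for $\bfS_Z-\bGamma$ then goes through essentially as in Vu--Lei, with the extra $\bfH$-dependent constants in Lemmas~\ref{lem:T1} and \ref{lem:T2} accounting for the fact that $\V_0$ is not an eigenbasis of $\bGamma$.
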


The proof of this theorem is deferred to Appendix \ref{pf:upperbound}.
The upper bound in Theorem \ref{th:upperbound1} consists of two terms. The first one is the estimation error, which only depends on $\log p$ thanks to the enforcing of sparsity. By contrast, Theorem \ref{th:pca-Sz} suggests that the estimation error of standard principal component analysis  depends on $p$.  The second term is the approximation error, which decreases with increasing $p$. Thus, the sparsity structure benefits both estimation and approximation for compositional data.

According to \cite{vu2013minimax}, if the basis could be observed, the bound of the principal subspace estimation based on $\bfS_Y$ would be the first term on the right side of \eqref{ineq:upper1}, which is optimal up to a constant with an additional condition that 
\begin{equation}\label{ineq:addcond0}
R_q^{2/(2-q)}\leq p^\iota \text{ for some constant }\iota<1.
\end{equation}
In addition, if
\begin{equation}\label{ineq:addcond1}
\frac{\sigma_2^2d^2R_q^{2/(2-q)}}{p} \lesssim R_q\sigma_1^{2-q}\left(\frac{d+\log p}{n}\right)^{1-q/2},
\end{equation}
the approximation error can be bounded by the estimation error. Hence the bound in Theorem \ref{th:upperbound1} would be optimal with additional conditions (\ref{ineq:addcond0}) and (\ref{ineq:addcond1}).

Let us further assume that $d$, $\sigma_1^2$ and $\sigma_2^2$ can be bounded by a universal constant. Then the upper bound in Theorem \ref{th:upperbound1} can be simplified to
\begin{align*}
\left(\frac{R_q^{2/(2-q)}\log p}{n}\right)^{1-q/2}\vee \left(\frac{R_q^{2/(2-q)}}{p}\right).
\end{align*}
The estimator $\widehat{\cS}_Z$ is consistent when $R_q^{2/(2-q)}\log p/n$ \emph{and} $R_q^{2/(2-q)}/p$ go to zero asymptotically. The additional condition (\ref{ineq:addcond1}) can be simplified to
\[
n^{1-q/2}R_q^{q/(2-q)}\lesssim p(\log p)^{1-q/2}.
\]
It suggests that for compositional data, to achieve the optimal rate, the dimension $p$ should be sufficiently large, which is rather different from the story for Euclidean data.

Notice that $\cM_q^*(R_q)\subseteq\cM_q(dR_q)$. With the similar technique, we can derive the estimation error for the column sparse class with $R_q$ replaced by $dR_q$, while the approximation error is unchanged.
\begin{theorem}\label{th:upperbound2}
(Column sparse upper bound). Let $q\in[0,1]$, $\widehat \V_0$ be any solution of (\ref{opt-col}), and $\widehat{\cS}_Z=\col(\widehat \V_0)$. Assume that $\Y_i\stackrel{iid}{\sim}\bP\in\cP_q^*(R_q)$, $i=1,\ldots,n$, Condition \ref{condition1} holds with $R_q$ replaced by $dR_q$, and Condition \ref{condition2} holds. Then
\begin{equation*}\label{ineq:upper2}
\|\sin\bTheta(\widehat{\cS}_Z,\cS_\bOmega)\|_F^2\leq c_1\left\{dR_q\sigma_1^{2-q}\left(\frac{d+\log p}{n}\right)^{1-q/2}\vee\left(\frac{c(q)^2\sigma_2^2d^2R_q^{2/(2-q)}}{p}\right)\right\}
\end{equation*}
with probability at least $1-4/n-6\log n/n-1/p$. Here, $\sigma_1^2,\sigma_2^2$ are defined in (\ref{eq:sigma}) and $c_1$ is a positive constant.
\end{theorem}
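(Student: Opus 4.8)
The plan is to reduce to Theorem~\ref{th:upperbound1} via the inclusion $\cM_q^*(R_q)\subseteq\cM_q(dR_q)$, while keeping the two distinct roles of the sparsity radius carefully separated. First I would record the inclusion: for $\U\in\bV_{p,d}$ with $\|\U\|_{*,q}^q\le R_q$, the elementary inequality $(\sum_j a_j)^{q/2}\le\sum_j a_j^{q/2}$, valid for $q\in[0,1]$ and $a_j\ge 0$, applied to each row gives $\|\bfu_{i*}\|_2^q\le\sum_j|u_{ij}|^q$, and summing over $i$ yields $\|\U\|_{2,q}^q\le\sum_j\|\bfu_{*j}\|_q^q\le dR_q$. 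Hence $\cS_\bOmega\in\cM_q^*(R_q)\subseteq\cM_q(dR_q)$, and since every feasible point of \eqref{opt-col} lies in $\cM_q^*(R_q)$, so does any maximizer $\widehat\cS_Z$, and therefore $\widehat\cS_Z\in\cM_q(dR_q)$ as well. In particular $\bP\in\cP_q^*(R_q)$ implies $\bP\in\cP_q(dR_q)$.

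Next I would re-run the argument of Theorem~\ref{th:upperbound1}, splitting the bound exactly as there into an estimation part governed by $\|\sin\bTheta(\widehat\cS_Z,\cS_\bGamma)\|_F$ and an approximation part governed by $\|\sin\bTheta(\cS_\bGamma,\cS_\bOmega)\|_F$, combined through the identity $\|\sin\bTheta(\cE,\cF)\|_F^2=\tfrac12\|\E-\F\|_F^2$ and the triangle inequality $\|\E-\F\|_F\le\|\E-\bfH\|_F+\|\bfH-\F\|_F$. For the estimation part, the only structural inputs used in the proof of Theorem~\ref{th:upperbound1} are that both $\cS_\bOmega$ and the feasible estimator lie in the relevant row-sparse class and that $\bfS_Z$ concentrates around $\bGamma$ uniformly over the associated sparse cone; both hold here with $R_q$ replaced throughout by $dR_q$, which is precisely why Condition~\ref{condition1} is imposed with that substitution (and why $\v_n$ is replaced by $2^{1/2}(dR_q)^{1/2}((d+\log p)/n)^{1/2-q/4}$). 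This produces an estimation contribution $\lesssim dR_q\,\sigma_1^{2-q}((d+\log p)/n)^{1-q/2}$ on the same high-probability event, since the failure probabilities of the underlying concentration inequalities depend only on $n$ and $p$, not on the sparsity radius, so the bound $1-4/n-6\log n/n-1/p$ is unchanged.

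For the approximation part, the point is \emph{not} to inherit a bound from $\cM_q(dR_q)$, which would inflate the radius to $(dR_q)^{2/(2-q)}$, but to invoke Theorem~\ref{th:identify} directly through its column-sparse branch: since $\cS_\bOmega\in\cM_q^*(R_q)$,
\[
\|\sin\bTheta(\cS_\bOmega,\cS_\bGamma)\|_F^2\le\frac{9c(q)^2\sigma_2^2d^2R_q^{2/(2-q)}}{p},
\]
which is exactly the approximation term in the statement, unchanged. Everywhere the proof of Theorem~\ref{th:upperbound1} uses the identifiability bound or Condition~\ref{condition2} to control the bias in the basic inequality $\langle\bfS_Z,\widehat\V_0\widehat\V_0^\T\rangle\ge\langle\bfS_Z,\V_0\V_0^\T\rangle$, I would substitute the column-sparse versions; note that Condition~\ref{condition2} is assumed with $R_q$ (not $dR_q$), consistent with this. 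Taking the maximum of the two contributions gives the asserted bound.

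The main obstacle is exactly the bookkeeping above: one must confirm that the proof of Theorem~\ref{th:upperbound1} uses the identifiability estimate only as a black box --- i.e.\ it never re-derives the approximation error from membership in the row-sparse class --- so that it is legitimate to feed in the sharper column-sparse bound while simultaneously enlarging the estimation radius to $dR_q$. A secondary check is compatibility of the two substitutions inside expressions where estimation and approximation quantities are multiplied (for instance in the perturbed eigengap denominator $\lambda_d-\lambda_{d+1}$ minus an operator-norm deviation of $\bfS_Z$ from $\bOmega$): Condition~\ref{condition1} with $dR_q$ is designed precisely to keep that deviation below $\tfrac12(\lambda_d-\lambda_{d+1})$, so the $\delta$-type denominators remain bounded away from zero and the stated probability guarantee survives.
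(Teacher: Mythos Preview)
Your high-level plan matches the paper's: exploit the inclusion $\cM_q^*(R_q)\subseteq\cM_q(dR_q)$, rerun the proof of Theorem~\ref{th:upperbound1} with the estimation radius inflated to $dR_q$, and for the approximation part invoke Theorem~\ref{th:identify} directly on the column-sparse class so that the second term stays with $R_q$. That is exactly what the paper does, and the paper gives no separate proof beyond the sentence preceding the statement.

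There is, however, a genuine misdescription in your second paragraph that would, if executed literally, break the argument. You write that the proof of Theorem~\ref{th:upperbound1} ``splits the bound into an estimation part governed by $\|\sin\bTheta(\widehat\cS_Z,\cS_\bGamma)\|_F$ and an approximation part governed by $\|\sin\bTheta(\cS_\bGamma,\cS_\bOmega)\|_F$, combined through \ldots\ the triangle inequality.'' This is precisely the separation route that Section~\ref{sec:separate} shows to be \emph{infeasible}: $\cS_\bGamma$ need not be sparse even when $\cS_\bOmega$ is, so one cannot control $\|\sin\bTheta(\widehat\cS_Z,\cS_\bGamma)\|_F$ by a sparse-PCA bound with radius $dR_q$ (the example there gives $\|\bxi_1\|_q^q$ of order $p^{1-q}$). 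The actual proof of Theorem~\ref{th:upperbound1} never passes through $\cS_\bGamma$ as an intermediate subspace. It starts from the curvature inequality of Lemma~\ref{lem0},
\[
(\lambda_d-\lambda_{d+1})\,\widehat\v^2\le\langle \bfS_Z-\bOmega,\widehat\bPi-\bPi\rangle,
\]
then writes $\bfS_Z-\bOmega=(\bfS_Z-\bGamma)+(\bGamma-\bOmega)=\W_\bGamma+\bfH$ and decomposes the inner product into the five pieces $T_1,\ldots,T_5$. The places where the sparsity radius enters as an \emph{estimation} parameter are the bounds $\|\bPi^\bot\widehat\bPi\|_{2,q}^q\le 2R_q$ (inside $T_2$) and the definition of $\v_n$ (inside $T_3$); these are the occurrences to replace by $dR_q$. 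The places where it enters as an \emph{approximation} parameter are the bounds coming from Lemma~\ref{lem:vjbound} and Lemma~\ref{lem:v0hv0} (inside $\alpha$, $\beta$, $T_4$, $T_5$), and those hold for $\cM_q^*(R_q)$ with $R_q$ unchanged. Your final paragraph, which mentions the basic inequality $\langle \bfS_Z,\widehat\V_0\widehat\V_0^\T\rangle\ge\langle \bfS_Z,\V_0\V_0^\T\rangle$, is on the correct track; but the triangle-inequality description above it should be dropped.
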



\subsection{Infeasibility of separating the estimation and approximation processes} \label{sec:separate} 
Comparing the error bound in Theorem \ref{th:upperbound1} with that in \cite{vu2013minimax}, we see that the estimation error part is unchanged and only an approximation error term is added, which coincides with the error bound for identification in Theorem \ref{th:identify}. 
One may ask whether we can separate the estimation and approximation processes and then derive the error bound by adding the two parts together by applying
\begin{equation}\label{eq:sep}
\|\sin\bTheta(\widehat{\cS}_Z,\cS_\bOmega)\|_F \leq \|\sin\bTheta(\widehat{\cS}_Z,\cS_\bGamma)\|_F + \|\sin\bTheta(\cS_\bGamma,\cS_\bOmega)\|_F.
\end{equation}
The separation, however, would give rise to a much worse bound. 

If applying \cite{vu2013minimax} to bound the first term on the right side of \eqref{eq:sep}, one needs to ensure the sparsity of $\cS_\bGamma$ instead of $\cS_\bOmega$. Although $\cS_\bOmega$ is under some sparsity assumption and $\cS_\bGamma$ and $\cS_\bOmega$ are close to each other, the sparsity of $\cS_\bGamma$ may not hold as illustrated by the following example.
Consider $d=1$ and $\bOmega=\text{diag}(\lambda_1,\ldots,\lambda_p)$ with $\lambda_1>\lambda_2=\ldots=\lambda_p$. In this case, $\cS_{\bOmega}=\text{span}\{{e}_1\}$ and $R_q=1$, where $e_1=(1,0,\ldots,0)^\T$. Since $\bGamma=\G\bOmega \G$, the first  eigenvector of $\bGamma$ is  
\[
\bxi_1 = \frac{1}{\sqrt{(p^2-p)}}(p-1,-1,\ldots,-1)^\T.
\]
Some calculations give $\|\bxi_1\|_q^q=O(p^{1-q})$, which implies that the sparsity factor $R_q$ of $\cS_\bGamma$ becomes $O(p^{1-q})$. Hence, directly applying \cite{vu2013minimax} to the first term on the right hand side of \eqref{eq:sep} yields a $O(p^{1-q}/n^{1-q/2})$ term, which is much worse than ours. 

\section{Proximal alternating direction method of multipliers algorithms}\label{sec:admm}
The sparse principal subspaces are estimated by solving the optimization problems (\ref{opt-row}) and (\ref{opt-col}), which are very challenging owing to the nonconvexity and nonsmoothness.  In the literature, many efforts have been made to the case $q=1$, for which the problems can be recast to certain convex problems \citep{vu2013fantope, wang2014tighten, gu2014sparse, qiu2019gradient, wang2020principal}. However, this observation  does not hold when $q<1$. In this paper, we adopt the proximal alternating direction method of multipliers algorithm proposed by \cite{zhang2019primal} to address these challenges.
\cite{zhang2019primal} studied the convergence of the algorithm for $q\in (0,1]$. 
Our numerical results in Section \ref{sec:numerical} demonstrate that this method also works very well for the case $q=0$, which is not considered in \cite{zhang2019primal}. 

 We first consider the row sparse principal subspace estimator by solving the penalized version of  (\ref{opt-row}): 
\begin{align}\label{opt-row-new}
\text{minimize }& -\langle \bfS_Z,\U\U^\T\rangle + \alpha\|\V\|_{2,q}^q + \frac{\mu}{2}\|\Y\|_F^2\\ \notag
\text{subject to }& \U\in\bV_{p,d} \\ \notag
& \U-\V-\Y = 0.
\end{align}
Here, the constraint $\|\U\|_{2,q}^q\leq R_q$ in (\ref{opt-row}) is replaced by the penalty $\alpha\|\V\|_{2,q}^q$ and the hyperparameter $\mu$ is set to be large enough to ensure the closeness between $\U$ and $\V$. 
The augmented Lagrangian function for problem (\ref{opt-row-new}) is
\begin{align*}
\cL_\beta(\U,\V,\Y,\bLambda) = -\langle \bfS_Z,\U\U^\T\rangle + \alpha\|\V\|_{2,q}^q + \frac{\mu}{2}\|\Y\|_F^2  + \langle \V-\U+\Y,\bLambda\rangle +\frac{\beta}{2}\|\V-\U+\Y\|_F^2,
\end{align*}
where $\bLambda$ is the Lagrange multiplier, $\beta>0$ is a penalty hyperparameter.
We define the following approximation to the augmented Lagrangian function:
\begin{align*}
\widehat\cL_\beta^\U(\U;\widehat \U,\widehat \V,\widehat \Y,\bLambda) = & -\langle \bfS_Z,\widehat \U\widehat \U^\T\rangle +  \frac{\mu}{2}\|\widehat \Y\|_F^2 -2\langle \bfS_Z\widehat \U,\U-\widehat \U\rangle + \alpha\|\widehat \V\|_{2,q}^q \\ 
&+ \langle \widehat \V-\U+\widehat \Y,\bLambda\rangle +\frac{\beta}{2}\|\widehat \V-\U+\widehat \Y\|_F^2. 
\end{align*}
The linearized proximal alternating direction method of multipliers algorithm is described in Algorithm \ref{admm1}.
\vspace{-0.6cm}
\begin{algorithm}\label{admm1}
\caption{Linearized proximal alternating direction method of multipliers for row sparsity.}
\begin{tabbing}
  \qquad \enspace Input: Initial values $\U^0,\V^0,\Y^0,\bLambda^0$ and hyperparameters $\alpha,\beta,\mu,\rho$ \\
  \qquad \enspace For $k=0$ to $k=K-1$ \\
  \qquad\qquad $\U^{k+1} = \arg\min\limits_{\U\in\bV_{p,d}} \widehat\cL_\beta^\U(\U;\U^k,\V^k,\Y^k,\bLambda^k)+\frac{\rho}{2}\|\U-\U^k\|_F^2$ \\
  \qquad\qquad $\V^{k+1} = \arg\min\limits_\V\cL_\beta(\U^{k+1},\V,\Y^k,\bLambda^k)+\frac{\rho}{2}\|\V-\V^k\|_F^2$ \\
  \qquad\qquad $\Y^{k+1} = \frac{1}{\mu+\beta}\left[\beta(\U^{k+1}-\V^{k+1})-\bLambda^{k+1}\right]$\\
  \qquad\qquad $\bLambda^{k+1} = \bLambda^k+\beta(\V^{k+1}-\U^{k+1}+\Y^{k+1})$ \\
  \qquad \enspace Output $\U^K$, $\V^K$
\end{tabbing}
\end{algorithm}

In $\U$-update of Algorithm \ref{admm1}, the subproblem to be solved is
\begin{equation}\label{sub1}
\U^{k+1} = \argmin\limits_{\U\in\bV_{p,d}} -\langle 2\A,\U\rangle = \argmin\limits_{\U\in\bV_{p,d}}\|\A-\U\|_F^2,
\end{equation}
where $\A = \bfS_Z\U^k+\frac{1}{2}(\bLambda^k+\beta \V^k+\beta \Y^k+\rho \U^k)$. Let $\A=\Q\D\bfP^\T$ be the SVD decomposition of $\A$. Then the solution of (\ref{sub1}) is  $\U^{k+1}=\Q\bfP^\T$.
$V$-update of Algorithm \ref{admm1} consists of $p$ decoupled subproblems:
\begin{align*}
& \bfv_{i*} = \arg\min_{\bfv_{i*}}\frac{\beta+\rho}{2}\|\bfv_{i*}\|_2^2+\alpha\|\bfv_{i*}\|_2^q+\bfv_{i*}^\T\bfb_{i*},~i=1,\ldots,p,\text{ if }q\in(0,1], \\
& \bfv_{i*} = \arg\min_{\bfv_{i*}}\frac{\beta+\rho}{2}\|\bfv_{i*}\|_2^2+\alpha \text{I}(\|\bfv_{i*}\|_2\neq 0)+\bfv_{i*}^\T\bfb_{i*},~i=1,\ldots,p,\text{ if }q=0, 
\end{align*}
where $\bfb_{i*}^\T$ is the $i$th row of the matrix $\B=\bLambda^k+\beta(\Y^k-\U^{k+1})-\rho \V^k$ and $\text{I}(\cdot)$ is an indicator function. 
After some calculations, we obtain the solutions for $q=1$ and $q=0$.
\begin{proposition}
For $q=1$, the solution to $V$-update of Algorithm \ref{admm1} is $\bfv_{i*} = \min(\alpha-\|\bfb_{i*}\|_2,0)\bfb_{i*}/\{(\beta+\rho)\|\bfb_{i*}\|_2\}$.
For $q=0$, the solution to $V$-update of Algorithm \ref{admm1} is
$\bfv_{i*} = -\text{I}(\|\bfb_{i*}\|_2^2>2\alpha(\beta+\rho))\bfb_{i*}/(\beta+\rho)$.
\end{proposition}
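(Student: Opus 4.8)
The plan is to solve each of the $p$ decoupled subproblems by reducing it to a scalar minimization over the radial variable $r=\|\bfv_{i*}\|_2\ge 0$. Fix $i$ and abbreviate $\bfv=\bfv_{i*}$, $\bfb=\bfb_{i*}$. The first observation is that both the quadratic term $\tfrac{\beta+\rho}{2}\|\bfv\|_2^2$ and the penalty term ($\alpha\|\bfv\|_2$ when $q=1$, $\alpha\,\mathrm{I}(\|\bfv\|_2\neq 0)$ when $q=0$) depend on $\bfv$ only through $r$, whereas by Cauchy--Schwarz the linear term obeys $\bfv^\T\bfb\ge -r\|\bfb\|_2$, with equality at $\bfv=-r\,\bfb/\|\bfb\|_2$ when $\bfb\neq 0$. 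Hence, for any candidate $\bfv$ one may replace it by $-\|\bfv\|_2\,\bfb/\|\bfb\|_2$ without increasing the objective, so it suffices to minimize a one-dimensional function of $r\ge 0$ and then take the direction anti-parallel to $\bfb$. (When $\bfb=\zero$ the objective is nondecreasing in $r$, giving $\bfv_{i*}=\zero$, which both stated formulas return under the usual conventions.)

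For $q=1$, this reduction yields $f(r)=\tfrac{\beta+\rho}{2}r^2+(\alpha-\|\bfb\|_2)r$ on $r\ge 0$, a convex quadratic whose unconstrained vertex is at $r^\star=(\|\bfb\|_2-\alpha)/(\beta+\rho)$. Therefore the constrained minimizer is $r=\max\{r^\star,0\}=\max\{\|\bfb\|_2-\alpha,0\}/(\beta+\rho)$. Substituting $\bfv=-r\,\bfb/\|\bfb\|_2$ and using the identity $-\max\{\|\bfb\|_2-\alpha,0\}=\min\{\alpha-\|\bfb\|_2,0\}$ gives $\bfv_{i*}=\min(\alpha-\|\bfb_{i*}\|_2,0)\,\bfb_{i*}/\{(\beta+\rho)\|\bfb_{i*}\|_2\}$, as claimed.

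For $q=0$ the penalty is discontinuous, so I would compare the two regimes directly. On $\{\bfv=\zero\}$ the objective equals $0$. On $\{\bfv\neq\zero\}$, after the anti-alignment reduction the objective is $\tfrac{\beta+\rho}{2}r^2-\|\bfb\|_2 r+\alpha$, whose infimum over $r>0$ is attained at $r=\|\bfb\|_2/(\beta+\rho)$ (for $\bfb\ne 0$) and equals $\alpha-\|\bfb\|_2^2/\{2(\beta+\rho)\}$. This value is smaller than $0$ precisely when $\|\bfb\|_2^2>2\alpha(\beta+\rho)$, in which case the minimizer is $\bfv=-\bfb/(\beta+\rho)$; otherwise $\bfv=\zero$ is optimal. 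This is exactly $\bfv_{i*}=-\mathrm{I}(\|\bfb_{i*}\|_2^2>2\alpha(\beta+\rho))\,\bfb_{i*}/(\beta+\rho)$.

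I do not expect a genuine obstacle here; the argument is a short convexity and case analysis. The only points needing a line of care are the degenerate case $\bfb_{i*}=\zero$ (settled by monotonicity in $r$) and, for $q=0$, the boundary tie $\|\bfb_{i*}\|_2^2=2\alpha(\beta+\rho)$, where the minimizer is not unique and the stated formula simply selects the branch $\bfv_{i*}=\zero$.
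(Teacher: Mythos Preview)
Your proposal is correct and follows essentially the same route the paper takes: the paper states the result ``after some calculations,'' and for the neighboring case $q\in(0,1)$ in Appendix~\ref{supp:algo1} it uses the first-order condition $k_1\bfv+k_2q\|\bfv\|_2^{q-2}\bfv+\bxi=0$ to see that the minimizer is anti-parallel to $\bfb_{i*}$ and then solves the resulting scalar equation in $\|\bfv\|_2$. Your use of Cauchy--Schwarz to justify the anti-alignment is the same reduction phrased slightly differently, and your handling of the degenerate and tie cases is fine.
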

The subproblems also have closed-form solutions for some $q\in(0,1)$, which is discussed in details in Appendix \ref{supp:algo1}.

Analogously, we can estimate the column sparse principal subspace by a similar alternating direction method of multipliers algorithm, and provide the details in Appendix \ref{supp:algo2}.

In  (\ref{opt-row-new}), there are two important hyperparameters $\alpha$ and $\mu$. The former  controls the sparsity of $\V$ and the latter controls the  difference between $\U$ and $\V$. 
The output $\U^K$ is orthonormal and $\V^K$ is sparse but not necessarily orthonormal. We set  $\mu$ to be large enough  such that $\V^K$ is nearly orthonormal, and take $\V^K$ as the sparse subspace estimator.  It is found that $\mu=1000$ works very well in our experiments (See details in Appendix \ref{supp:mu}).
 
The hyperparameter $\alpha$ is chosen through 5-fold cross-validation. 
Denote by ${\V}^{(-u)}(\alpha)$ the estimate based on the training data excluding the $u$th fold, and $\bfS_Z^{(u)}$ the sample covariance matrix on the test data including only the $u$th fold ($u=1,\ldots,5$). 
We choose the optimal value $\widehat\alpha$ which maximizes the sum of first $d$ eigenvalues
\[
\widehat\alpha = \argmax_\alpha\sum_{u=1}^5\langle \bfS_Z^{(u)},\V^{(-u)}(\alpha)(\V^{(-u)}(\alpha))^\T\rangle.
\]
With the optimal $\widehat\alpha$, we then compute the sparse subspace estimator based on the whole dataset as our final result.

We set  $\Y^0=\bLambda^0=\zero_{p\times d}$, and find that setting $\U^0$ and $\V^0$ to be the first $d$ principal eigenvectors of $\bfS_Z$ works very well. 
According to the theoretical analysis in \cite{zhang2019primal}, we set $\beta=5.8\|\bfS_Z\|_2$ and $\rho=6.14\|\bfS_Z\|_2$.

\section{Numerical studies}\label{sec:numerical}
\subsection{Simulations}\label{sec:simulation}
In this paper, we focus on the centered log-ratio transformed compositions. In principle, the sparse principal subspace estimation can also be applied to other nonlinear transformations. Specifically, we consider the following transformations for comparison and include the oracle method as a baseline.
\begin{enumerate}
\item[(i)] Oracle: assuming that the basis could be observed and applying the algorithms to the sample covariance matrix of the log-basis $\Y$.
\item[(ii)] Log: applying the algorithms to the sample covariance matrix of the log-transformed compositions $\log\X$.
\item[(iii)] Raw: applying the algorithms to the sample covariance matrix of the raw compositions $\X$.
\item[(iv)] Power: applying the algorithms to the sample covariance matrix of the power transformed compositions proposed by \cite{scealy2015robust}.
\end{enumerate}

We set $p=500$, $d=5$, and consider $n=250$, $500$, and $1000$. We generate the log-basis vectors $\Y_k \in\mathbb{R}^p$ ($k=1,\ldots,n$) in two ways:
\begin{enumerate}
\item[(i)] multivariate normal distribution, $\Y_k\stackrel{iid}{\sim}\cN_p(\bmu,\bOmega)$;
\item[(ii)] multivariate gamma distribution, $\Y_k=\bmu+\F\U_k/\sqrt{10}$, where $\F\F^\T=\bOmega$ and the components of $\U_k$ are independent gamma variables with shape parameter 10 and scale parameter 1.
\end{enumerate}
The abundances $\W_k=(W_{k1},\ldots,W_{kp})^\T$ can be obtained by $W_{kj}=\exp({Y_{kj}})$.
Then $\X_k=(X_{k1},\ldots,X_{kp})^\T$ with $X_{kj}=W_{kj}/\sum_{i=1}^pW_{ki}$, and $\Z_k=(Z_{k1},\ldots,Z_{kp})^\T$ with $Z_{kj}=\log(X_{kj}/g(\X_k))$. 

In both cases, we take the components of $\bmu$ randomly from the uniform distribution on $[0,10]$.
For the basis covariance matrix $\bOmega$, we consider two scenarios:
\begin{enumerate}
\item[(i)] Row sparsity, $\cS_\bOmega\in\cM_0(R_0)$ with $R_0=10$. Define the sparse principal subspace $\V$ as $\V=\begin{psmallmatrix}
\V^{(1)} \\
{0}
\end{psmallmatrix}$, where $\V^{(1)}$ is a ${R_0\times d}$ matrix with orthonormal columns generated by sampling its entries from a standard Gaussian distribution and then orthonormalizing them;
\item[(ii)] Column sparsity, $\cS_\bOmega\in\cM_0^*(R_0)$ with $R_0=10$. Define the sparse principal subspace $\V$ as $\V=\begin{psmallmatrix}
\V^{(1)}  \\
{0} 
\end{psmallmatrix}$ with a block diagonal matrix $\V^{(1)}=\text{blkdiag}(\V^{(11)},\V^{22})$, where $\V^{(11)}$ is a ${R_0\times \lceil d/2\rceil}$ matrix with orthonormal columns generated by sampling its entries from a standard Gaussian distribution and then orthonormalizing them, and $\V^{(22)}$ is a ${R_0\times \lfloor d/2\rfloor}$ matrix generated similarly.
\end{enumerate}
We sample a $p\times p$ matrix $\K$ from a Wishart distribution with $p+10$ degrees of freedom and scale matrix $\I/p$. Let $\lambda_{d+1}=\|(\I-\V\V^\T)\K(\I-\V\V^\T)\|_2$ and $\bOmega=\V\D\V^\T+(\I-\V\V^\T)\K(\I-\V\V^\T)$, where $\D=\diag(\lambda_1,\ldots,\lambda_d)$ with $\lambda_i = \left\{3.6-2(i-1)/(d-1)\right\}\lambda_{d+1}$, $i=1,\ldots,d$.

\begin{table}
\def~{\hphantom{0}}
\caption{Comparisons of mean squared error (standard errors) with $p=500$ and a normal log-basis distribution based on 100 simulations}{
\begin{tabular}{@{}llcccc@{}}
         & & \multicolumn{2}{c}{Row sparsity} & \multicolumn{2}{c}{Column sparsity}\\ 
         & Method & $q=0$ & $q=1$ & $q=0$ & $q=1$ \\ [5pt]
$n=250$ & Oracle & 0.016 (0.0005)  &  0.019 (0.0005) &  0.124 (0.0045) & 0.088 (0.0023)\\
         & Proposed & 0.017 (0.0006)  &  0.019 (0.0005) & 0.124 (0.0046) &  0.088 (0.0023)\\
         & Log & 1.650 (0.0879)  &  1.238 (0.0513) & 1.138 (0.0970) &  0.943 (0.0359)\\
         & Raw & 2.515 (0.0109)  &  2.506 (0.0153) & 2.500 (0.0000)  &  2.500 (0.0000)\\ 
         & Power & 2.489 (0.0074) & 2.474 (0.0120) & 2.500 (0.0000)  &  2.500 (0.0000)\\ 
$n=500$ & Oracle & 0.008 (0.0003) & 0.010 (0.0003) & 0.042 (0.0017) &  0.045 (0.0009) \\
         & Proposed & 0.008 (0.0003) & 0.011 (0.0003)  & 0.043 (0.0016) &  0.045 (0.0009) \\
         & Log & 1.318 (0.0603)  &  1.202 (0.0263)  &  0.862 (0.0741) &  0.970 (0.0238) \\
         & Raw & 2.493 (0.0050)  &  2.482 (0.0068)  &  2.500 (0.0000) & 2.500 (0.0000) \\ 
         & Power & 2.500 (0.0000) &  2.486 (0.0058) &  2.500 (0.0000) & 2.500 (0.0000) \\ 
$n=1000$ & Oracle  & 0.004 (0.0001) &  0.005 (0.0002)  &  0.018 (0.0007) &   0.024 (0.0005)\\
    & Proposed  & 0.004 (0.0001)  & 0.006 (0.0002)  &  0.018 (0.0008) &   0.024 (0.0005)\\
    & Log    &  1.018 (0.0534)  & 1.128 (0.0211)  &  0.731 (0.0575) & 0.979 (0.0170)\\
    & Raw   &  2.482 (0.0115) &  2.459 (0.0136)  &  2.500 (0.0000) & 2.500 (0.0000)\\       & Power & 2.492 (0.0053) & 2.482 (0.0071)  &  2.500 (0.0000) & 2.500 (0.0000)\\
\end{tabular}}\label{tab:normal}
\end{table}

\begin{table}
\def~{\hphantom{0}}
\caption{Comparisons of mean squared error (standard errors) with $p = 500$ and a gamma log-basis distribution based on 100 simulations}{
\begin{tabular}{llcccc}
& & \multicolumn{2}{c}{Row sparsity} & \multicolumn{2}{c}{Column sparsity}\\ 
         & Method & $q=0$ & $q=1$ & $q=0$ & $q=1$ \\ [5pt]
$n=250$ & Oracle   &   0.016 (0.0005)  &  0.019 (0.0006)  &   0.121 (0.0052) &  0.087 (0.0017)\\
         & Proposed &   0.016 (0.0005)  &  0.019 (0.0006)  &   0.125 (0.0057) &  0.088 (0.0017)\\
         & Log       &  2.181 (0.0166)  &  2.102 (0.0278)  &   1.975 (0.0200) &  1.710 (0.0562)\\
         & Raw      &   3.690 (0.0752)  &  3.715 (0.0749)   &  2.525 (0.0120) &  2.550 (0.0223)\\ 
         & Power & 4.042 (0.0794) & 4.071 (0.0768) & 3.173 (0.0855) & 3.338 (0.0874) \\
$n=500$        & Oracle    & 0.008 (0.0002)  & 0.010 (0.0003)  &  0.048   (0.0020)  &  0.044 (0.0009)\\
    &Proposed  & 0.008 (0.0003) &  0.010 (0.0003)  &  0.048 (0.0020) &   0.045 (0.0009)\\
    &Log        &1.678 (0.0095)  & 1.645 (0.0139)  &  1.414 (0.0129)  &  1.466 (0.0308)\\
    &Raw       & 3.671 (0.0673)  & 3.737 (0.0675)  & 2.534 (0.0197)  &  2.610 (0.0360)\\
    & Power   & 4.129 (0.0738) & 4.138 (0.0746) & 3.254 (0.0851) & 3.394 (0.0858) \\
$n=1000$ & Oracle   &   0.004 (0.0001)  & 0.005 (0.0002)  &  0.019 (0.0008) &  0.025 (0.0005)\\
    & Proposed &   0.004 (0.0001) &  0.006 (0.0002)  &  0.020 (0.0008) &  0.025 (0.0005)\\
    & Log     &    1.359 (0.0053)  & 1.344 (0.0063)  &  1.136 (0.0097) & 1.252 (0.0158)\\
    & Raw    &     3.554 (0.0709)  & 3.637 (0.0745)  &  2.513 (0.0114)  &  2.575 (0.0247)\\
    & Power & 3.960 (0.0782) & 4.066 (0.0772) & 3.328 (0.0817) & 3.480 (0.0852)\\
\end{tabular}}\label{tab:gamma}
\end{table}

Let $\alpha=\exp(a_0)$ with $a_0\in\{-1.5,-1,\ldots,2.5,3\}$ for row sparsity and $a_0\in\{0.5,1,\ldots,$ $4.5,5\}$ for column sparsity. We select an optimal $\alpha$ by the 5-fold cross-validation.
Both $q=0$ and $q=1$ are adopted for the row and column sparsity. 
For each scenario, we repeat 100 simulations. 
 We use the squared distance between the estimator and the true subspace to measure the performance of our proposed approaches in comparison with other methods. 
Tables \ref{tab:normal} and \ref{tab:gamma} summarize the results of various methods with normal and gamma log-basis distributions, respectively. 
The proposed methods are not sensitive to the type of the log-basis distribution.
They perform as well as the oracle ones, which give much less errors than the other methods under all scenarios. As the sample size increases, the errors of our methods and the oracle methods decrease monotonically. However, we do not see the clear monotonicity for the Log, Raw, and Power methods. 

\subsection{Application to text data for statisticians}\label{sec:apply}
We illustrate the proposed methods by applying them to a text dataset for statisticians. The dataset was collected and analyzed by \cite{ji2016coauthorship}, and consists of 3607 authors and 3248 papers published in AoS, JASA, JRSS-B and Biometrika from
2003 to the first half of 2012. We first extract abstracts of the papers and obtain 12,462 distinct words. Then we prune the vocabulary by stemming each term to its root, removing function words, and removing terms that appear in less than 50 papers. After the cleaning, the total vocabulary size is 580. In our study, we focus on the 236 authors in the giant component of the coauthorship network, and obtain a $236\times 580$ count matrix, where the $(i,j)$ element denotes the count of the $j$th word used by the $i$th author. 
We replace zero counts with 0.05 and transform the count data into compositions.

\begin{table}
\def~{\hphantom{0}}
\caption{Results under the column sparsity with $q=0$. Denote the proposed, Log, Raw and Power methods by M1, M2, M3 and M4, respectively. Denote by PC1 and PC2 the first and second principal components}{
\begin{tabular}{@{}llrrrrlrrrr@{}}
     & Word & M1 & M2 & M3 & M4 & Word & M1 & M2 & M3 & M4 \\ [5pt]
 PC1 & active &  & -0.35 &  &  & nonstationary &  & -0.32 &  &  \\
     & autoregression &  & -0.32 &  &  & penalty &  &  & 0.25 &  \\
     & chain &  & -0.32 &  &  & process &  &  & -0.38 &  \\
     & covariance & 0.37 &  & 0.27 & 0.53 & regression & 0.38 & 0.38 & 0.38 & 0.69 \\
     & coefficient & 0.31 &  & 0.22 &  & select &  &  & 0.46 &  \\
     & equal & 0.32 &  &  &  & semiparametric & 0.41 & 0.29 &  & 0.21 \\
     & group &  &  & 0.22 &  & space &  &  & -0.28 &  \\
     & lasso &  &  & 0.15 &  & subspace &  & -0.32 &  &  \\
     & likelihood & 0.37 &  &  & 0.24 & test &  &  & -0.12 &  \\
     & linear & 0.32 &  & 0.18 & 0.33 & time &  &  & -0.34 &  \\
     & movement &  & -0.32 &  &  & volatilization &  & -0.37 &  &  \\
     & nonparametric & 0.33 &  &  & 0.20 &  &  &  &  &  \\ [5pt]
 PC2 & adaptive & -0.22 & -0.18 &  &  & normal &  &  & 0.12 & \\
     & baseline & 0.22 & 0.19 &  &  & number &  &  &  & -0.34 \\
     & cancer & 0.22 & 0.18 &  &  & optimize & -0.26 & -0.21 &  & -0.19 \\
     & censor & 0.33 & 0.28 &  &  & oracle & -0.20 & -0.17 &  & \\
     & classify &  &  & -0.31 &  & penalty &  & -0.16 &  & \\
     & clinic & 0.19 & 0.17 &  &  & posterior &  & 0.12 &  & \\
     & cluster &  &  & -0.26 &  & predict &  &  & -0.28 & \\
     & covariance &  &  & 0.23 &  & proportion &  & 0.16 &  & \\
     & converge & -0.17 & -0.13 &  &  & regular &  & -0.14 &  & \\
     & dimension & -0.28 & -0.20 &  & -0.48 & select & -0.22 & -0.18 &  & -0.28 \\
     & disease &  & 0.15 &  &  & semiparametric &  & 0.09 & 0.16 & \\
     & generalize &  &  &  & -0.25 & smooth & -0.23 & -0.17 &  & \\
     & hazard & 0.31 & 0.26 &  &  & space &  &  & -0.22 & \\
     & high & -0.26 & -0.19 &  & -0.38 & sparse & -0.30 & -0.26 &  & \\
     & inference &  & 0.17 &  &  & statistic &  &  & 0.16 & -0.22 \\
     & lasso &  & -0.15 &  &  & survive & 0.32 & 0.27 &  & \\
     & likelihood &  &  & 0.39 &  & test &  &  & 0.61 & \\
     & maximal &  & 0.16 & 0.12 &  & theory &  &  &  & -0.23 \\
     & measure &  &  &  & 0.18 & time &  & 0.20 & 0.14 & 0.41 \\
     & minimax &  & -0.13 &  &  & treat &  & 0.16 &  & \\
     & missing & 0.19 & 0.18 &  &  & under &  &  &  & -0.15 \\
     & noise &  & -0.16 &  &  & weight &  & 0.15 &  & \\
     & nonparametric &  &  & 0.14 &  &  &  &  &  & \\ 
\end{tabular}}\label{tab:appcol0}
\end{table}

The proposed methods and some commonly used approaches (Log, Raw, Power) are applied to estimate the principal subspace. Common words are selected under the row sparsity, while column sparsity gives $d$ sets of words in the first $d$ eigenvectors, respectively. Here we set $d=2$.
We only provide the results with $q=0$ in the main text. The results with $q=1$ are similar and deferred to Appendix \ref{supp:real}.


Table \ref{tab:appcol0} lists the nonzero loadings of the first two components for the four approaches. Take a look at the results of the first principal component. We observe that the selected words can be used to distinguish whether a author focuses on fundamental problems or specific areas, and all these methods select the critical words well. The proposed and Power methods only select the words related to the fundamental problems and put positive loadings to them. The Log and Raw methods select some representing fundamental research directions and some representing specific areas, and the signs of the loadings of these two kinds of words are opposite. 

For the results in the second principal component, the selected words place emphasis on whether one pays attention to biomedical applications or not. Among the words selected by the proposed method, those related to biostatistics (``baseline'', ``cancer'', ``censor'', ``clinic'', ``hazard'', ``missing'', ``survive'') have positive loadings, and others are given negative loadings. Although the Log method can also identify both biostatistics and nonbiostatistics directions, it does not separate the two types very well, since it puts positive loadings to some nonbiomedical words (e.g., ``inference'', ``maximal'', ``semiparametric'') besides the biomedical words.
The Raw and Power methods cannot identify biostatistics, since those selected by them do not reflect any biomedical characteristics. 

\begin{figure}[htb!]
   \centering
   \includegraphics[width=0.95\textwidth]{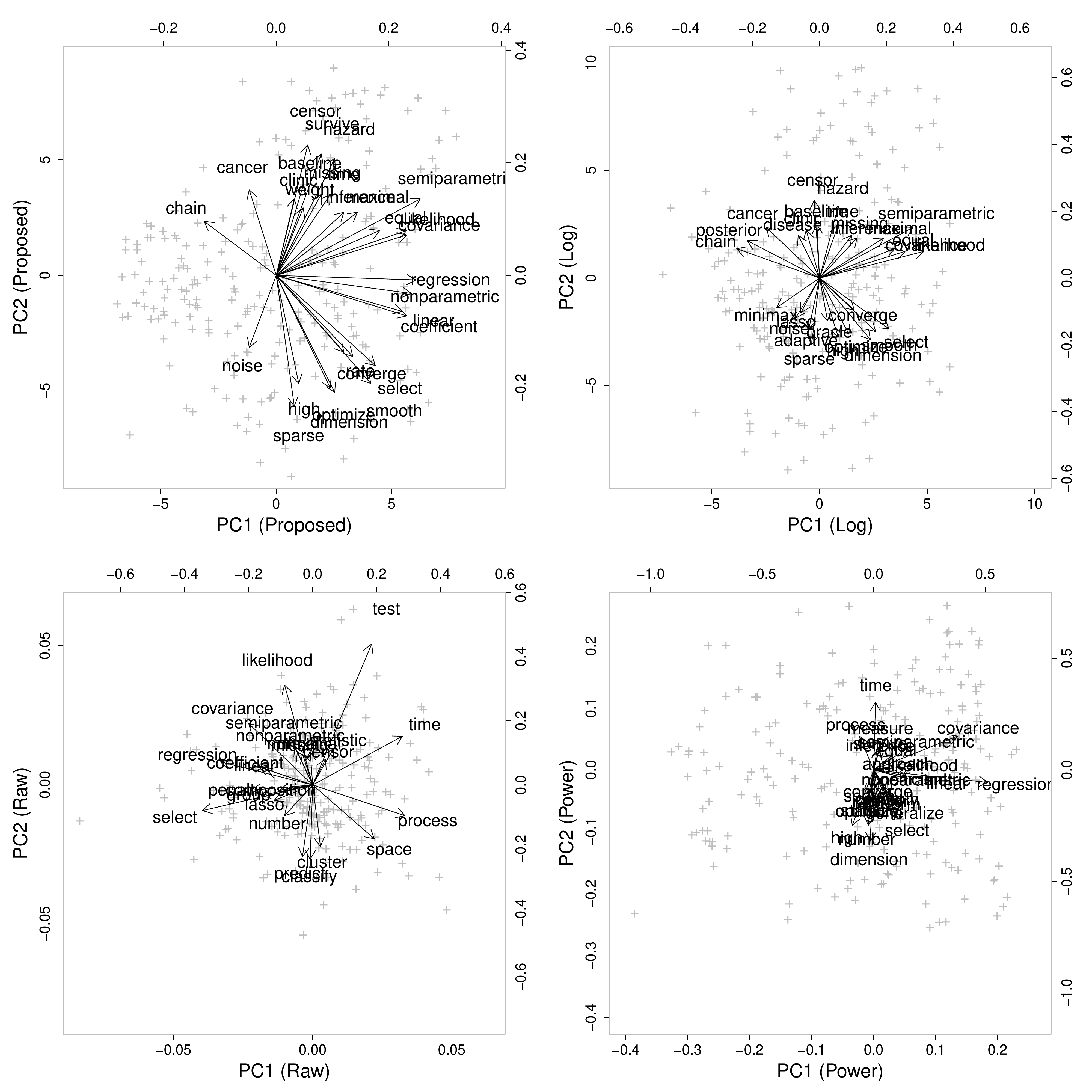}
   \caption{Biplots of the first two principal components for the proposed, Log, Raw and Power methods under the row sparsity with $q=0$.}\label{fig:realrow0}
\end{figure}

The biplots of the first two principal components for various methods are shown in Fig.\ \ref{fig:realrow0}. The plots for the proposed and Log methods are similar with the first principal component distinguishing fundamental and specific research topics and the second principal component separating biostatistics and nonbiostatistics. However, a close look at the two biplots reveals different correlation structures. Compared with the Log method, the proposed method shows larger correlations among the biomedical words (e.g., ``censor'', ``missing'', ``hazard''), a closer relationship between ``high'' and ``sparse'' and a less correlation between ``inference'' and ``missing''. The biplots for the Raw and Power methods provide little information on biostatistics, and moreover, some high loadings, for instance, ``test'' in Raw and ``regression'' in Power, lead others to decline.

\section{Discussion}\label{sec:discuss}
The paper connects the principal subspace of the compositional covariance matrix to that of the basis covariance matrix. The conceptual bridge relies on the simple relationship between the centered log-ratio transformed compositions and the basis. Remarkably, other transformations also have been proposed for principal component analysis for compositional data, for instance, the power transformation \citep{scealy2015robust}. It would be worthwhile to explore their connections with the principal subspace of the basis covariance as well. Notwithstanding, our approach has shown its great advantage of the direct interpretation for the principal subspace estimated from the compositions. 

Because of the unavailability of the basis, we take the centered log-ratio covariance matrix as a proxy, and thus, an identifiability issue emerges. To address this issue, we work under some subspace sparsity assumption and prove that the principal subspace of the basis covariance matrix is approximately identifiable with a diverging dimensionality.
The surprising blessing-of-dimensionality phenomenon is specific for compositional data. 
The sparsity assumption on the principal subspace is also necessary for the estimation in the high-dimensional setting, since standard principal component analysis is inconsistent owing to the curse of dimensionality. In fact, there are kinds of sparsity assumptions on principal component analysis for high-dimensional Euclidean data \citep{johnstone2009consistency,ma2013sparse,cai2013sparse,vu2013minimax}. We choose the notions of subspace sparsity in \cite{vu2013minimax} in this paper, while other definitions of sparsity also seem adoptable.


In practice, the relative abundances are usually observed directly, and compositions are estimated by the normalization. 
For each subject $i$, assume that its relative abundances $(N_{i1},\ldots,N_{ip})^\T$ follows a multinomial distribution with the total number of counts $N_i=\sum_{j=1}^pN_{ij}$ and compositions $\X_i$.
The estimator of compositions is $\widehat\X_i=(N_{i1},\ldots,N_{ip})^\T/N_i$ and we have $\widehat\X_i = \X_i+\varrho_i$ with $\bE(\varrho_i|\X_i)=0$ and $\var(\varrho_i|\X_i) = (\diag(\X_i)-\X_i\X_i^\T)/N_i$.
We apply the centered log-ratio transformation to $\widehat\X_i$ and obtain $\Q_i=\clr(\widehat\X_i)=\Z_i+\bepsilon_i$, where $\bepsilon_i=\clr(1_p+\varrho_i/\X_i)$ with $/$ denoting the element-wise division.
It is easy to see that each element in $\bE(\bepsilon_i)$, $\var(\bepsilon_i)$ and $\cov(\Z_i,\bepsilon_i)$ is $O(1/N_i)$ if $X_i>c_1\ii_p$ for some constant $c_1>0$. 
Apply the proposed algorithm for the row sparsity to the sample covariance matrix of $\Q_i$ and denote the principal subspace estimation by $\widehat\cS_Q$. If $\bepsilon_i~(i=1,\ldots,n)$ are sub-Gaussian random vectors and $N_i=O(N)$, then the error bound is
\begin{equation*}
\|\sin\bTheta(\widehat{\cS}_Q,\cS_\bOmega)\|_F^2\leq c\left\{R_q\left(\frac{\sigma_1^2(d+\log p)}{n}\right)^{1-q/2}\vee\left(\frac{c(q)^2\sigma_2^2d^2R_q^{2/(2-q)}}{p}\right)\vee \frac{d^{1/2}p}{N}\right\}.
\end{equation*}
The additional error term is bounded by the first two terms if the total numbers of counts are sufficiently large, and in this case our result still holds.

\bibliographystyle{apalike}
\bibliography{pca}

\newpage
\appendix

\section{Proof of theorems}
\subsection{Davis-Kahan $\sin\theta$ theorem}
We will apply the Davis-Kahan $\sin\theta$ theorem in proofs of Theorems \ref{th:identify} and \ref{th:pca-Sz}.
\begin{theorem}
(\cite{davis1970rotation}). Let $\bSigma, \widehat{\bSigma} \in \bR^{p\times p}$ be symmetric, with eigenvalues $\lambda_1\geq\ldots\geq\lambda_p$ and $\widehat\lambda_1\geq\ldots\geq\widehat\lambda_p$ respectively. Fix $1\leq r\leq s\leq p$, let $d=s-r+1$, and let $\V=(\bfv_r,\bfv_{r+1},\ldots,\bfv_s)\in\bR^{p\times d}$ and $\widehat \V=(\widehat \bfv_r,\widehat \bfv_{r+1},\ldots,\widehat \bfv_s)\in \bR^{p \times d}$ have orthonormal columns satisfying $\bSigma \bfv_j=\lambda_j\bfv_j$ and $\widehat\bSigma\widehat \bfv_j=\widehat\lambda_j\widehat \bfv_j$ for $j = r,r+1,\ldots,s$. Let $\cS_\bSigma=\col(\V)$ and $\cS_{\widehat\bSigma}=\col(\widehat \V)$. If $\delta=\inf\{|\widehat\lambda-\lambda|:\lambda\in[\lambda_s,\lambda_r],\widehat\lambda\in(-\infty,\widehat\lambda_{s+1}]\cup[\widehat\lambda_{r-b1},\infty)\}>0$, where $\widehat\lambda_0=-\infty$ and $\widehat\lambda_{p+1}=\infty$, then for every unitary-invariant norm, $\delta\|\sin\bTheta(\cS_\bSigma,\cS_{\widehat\bSigma})\|\leq\|(\widehat\bSigma-\bSigma)\V\|$.
\end{theorem}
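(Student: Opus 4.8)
The plan is to convert the $\sin\theta$ bound into a lower bound for a Sylvester operator. Let $\widehat\V_\perp\in\bV_{p,p-d}$ have columns forming an orthonormal basis of $\col(\widehat\V)^\perp$; since $\col(\widehat\V)$ is spanned by eigenvectors of $\widehat\bSigma$, so is its complement, and $\widehat\bSigma\widehat\V_\perp=\widehat\V_\perp\widehat\bLambda_\perp$ with $\widehat\bLambda_\perp=\diag(\widehat\lambda_k:k\notin\{r,\dots,s\})$. First I would invoke the standard description of principal angles: the singular values of $\widehat\V_\perp^\T\V$ coincide with $s_1,\dots,s_d$ up to appended zeros, so $\|\widehat\V_\perp^\T\V\|=\|\sin\bTheta(\cS_\bSigma,\cS_{\widehat\bSigma})\|$ for every unitarily invariant norm. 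Writing $M=\widehat\V_\perp^\T\V$, it then suffices to bound $\|M\|$.

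Second, I would record the key identity. Since $\bSigma\V=\V\bLambda_0$ with $\bLambda_0=\diag(\lambda_r,\dots,\lambda_s)$ and $\widehat\bSigma$ is symmetric,
\[
\widehat\V_\perp^\T(\widehat\bSigma-\bSigma)\V=(\widehat\bSigma\widehat\V_\perp)^\T\V-\widehat\V_\perp^\T(\bSigma\V)=\widehat\bLambda_\perp M-M\bLambda_0=:T(M).
\]
Because the rows of $\widehat\V_\perp^\T$ are orthonormal, left multiplication by $\widehat\V_\perp^\T$ does not increase any unitarily invariant norm, so $\|T(M)\|\le\|(\widehat\bSigma-\bSigma)\V\|$. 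The theorem thus reduces to the lower bound $\|T(M)\|\ge\delta\|M\|$; equivalently, $T$ is invertible and $\|T^{-1}\|\le\delta^{-1}$ as an operator on matrices carrying the given norm.

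Third, and this is the crux, I would bound $T^{-1}$. By hypothesis every eigenvalue of $\widehat\bLambda_\perp$ lies in $(-\infty,\widehat\lambda_{s+1}]\cup[\widehat\lambda_{r-1},\infty)$ and every eigenvalue of $\bLambda_0$ lies in $[\lambda_s,\lambda_r]$, so any such pair differs by at least $\delta$; in particular $\mathrm{spec}(\widehat\bLambda_\perp)\cap\mathrm{spec}(\bLambda_0)=\emptyset$ and $T$ is invertible. Consider first the one-sided configuration where the eigenvalues of $\widehat\bLambda_\perp$ all fall on one side of those of $\bLambda_0$ --- the situation in this paper's applications, where $\widehat\bLambda_\perp$ collects the trailing eigenvalues --- and, after negating both matrices if necessary (which changes neither $\|T(M)\|$ nor $\|M\|$), assume $\lambda_{\min}(\widehat\bLambda_\perp)\ge\lambda_{\max}(\bLambda_0)+\delta$. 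Then $\|e^{-s\widehat\bLambda_\perp}\|_2\|e^{s\bLambda_0}\|_2=e^{-s(\lambda_{\min}(\widehat\bLambda_\perp)-\lambda_{\max}(\bLambda_0))}\le e^{-s\delta}$, so $X=\int_0^\infty e^{-s\widehat\bLambda_\perp}\,N\,e^{s\bLambda_0}\,ds$ converges; differentiating $-e^{-s\widehat\bLambda_\perp}Ne^{s\bLambda_0}$ and integrating shows $T(X)=N$ (the term at $s=\infty$ vanishing). Hence, using $\|ABC\|\le\|A\|_2\|B\|\|C\|_2$,
\[
\|T^{-1}(N)\|=\|X\|\le\int_0^\infty\|e^{-s\widehat\bLambda_\perp}\|_2\,\|N\|\,\|e^{s\bLambda_0}\|_2\,ds\le\|N\|\int_0^\infty e^{-s\delta}\,ds=\delta^{-1}\|N\|.
\]
In the general two-sided configuration one splits $\widehat\bLambda_\perp$ into the block of eigenvalues above $\lambda_r$ and the block below $\lambda_s$ and appeals to the classical fact that, in the joint eigenbasis, $T^{-1}$ is the Schur multiplier with symbol $(\widehat\lambda_k-\lambda_j)^{-1}$ and so has norm $\delta^{-1}$ on every unitarily invariant norm; for the Frobenius norm --- the only one needed in the sequel --- this is immediate, since $\|T(M)\|_F^2=\sum_{k,j}(\widehat\lambda_k-\lambda_j)^2M_{kj}^2\ge\delta^2\|M\|_F^2$. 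Combining the three steps, $\delta\|\sin\bTheta(\cS_\bSigma,\cS_{\widehat\bSigma})\|=\delta\|M\|\le\|T(M)\|\le\|(\widehat\bSigma-\bSigma)\V\|$.

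The step I expect to be the main obstacle is the last point in the crux: establishing $\|T^{-1}\|\le\delta^{-1}$ for an arbitrary unitarily invariant norm in the two-sided configuration, which rests on the nontrivial classical fact that $(\alpha-\beta)^{-1}$, for real $\alpha\in\mathrm{spec}(\widehat\bLambda_\perp)$, $\beta\in\mathrm{spec}(\bLambda_0)$ with $|\alpha-\beta|\ge\delta$, defines a Schur multiplier of norm $\delta^{-1}$. The one-sided case --- which suffices for every use of the theorem in this paper --- is clean via the operator integral, and the Frobenius case is elementary; the rest (the identification of $\sin\bTheta$ with $\|\widehat\V_\perp^\T\V\|$, the Sylvester identity, and the contraction step) is routine.
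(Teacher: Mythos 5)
The paper does not prove this statement at all --- it is quoted verbatim from \cite{davis1970rotation} as a known tool --- so there is no internal proof to compare against; what you have written is the classical proof, and it is essentially correct. Your three steps (identifying $\|\sin\bTheta\|$ with $\|\widehat{\V}_\perp^\T\V\|$, the Sylvester identity $\widehat{\V}_\perp^\T(\widehat\bSigma-\bSigma)\V=\widehat\bLambda_\perp M-M\bLambda_0$ together with the contraction $\|\widehat{\V}_\perp^\T A\|\le\|A\|$, and the lower bound $\|T(M)\|\ge\delta\|M\|$) are the standard route, and your Frobenius-norm computation $\|T(M)\|_F^2=\sum_{k,j}(\widehat\lambda_k-\lambda_j)^2M_{kj}^2\ge\delta^2\|M\|_F^2$ already gives everything the paper ever uses, since only $\|\cdot\|_F$ and $\|\cdot\|_2$ appear downstream. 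One caution on the step you yourself flag as the crux: the ``classical fact'' is \emph{not} that pointwise $\delta$-separation of the two spectra yields a Schur multiplier of norm $\delta^{-1}$ --- for merely $\delta$-separated real spectra the sharp constant is $\pi/(2\delta)$ (Bhatia--Davis--McIntosh), so your phrasing ``with $|\alpha-\beta|\ge\delta$'' overstates it. The constant $\delta^{-1}$ requires the interval separation that the Davis--Kahan hypothesis actually provides ($\mathrm{spec}(\bLambda_0)\subseteq[\lambda_s,\lambda_r]$ and $\mathrm{spec}(\widehat\bLambda_\perp)$ outside the $\delta$-enlargement of that interval), and in that configuration there is a short self-contained argument avoiding Schur multipliers entirely: after centering so that $[\lambda_s,\lambda_r]=[-c,c]$, one has $\|\bLambda_0\|_2\le c$ and $\|\widehat\bLambda_\perp^{-1}\|_2\le(c+\delta)^{-1}$, and iterating $X=\widehat\bLambda_\perp^{-1}N+\widehat\bLambda_\perp^{-1}X\bLambda_0$ gives $\|X\|\le\delta^{-1}\|N\|$ in every unitarily invariant norm. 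With that substitution (or with the restriction to the Frobenius norm) your proof is complete.
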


\subsection{Proof of Theorem \ref{th:identify}}\label{pf:identify}
\begin{proof}
We apply the Davis-Kahan $\sin\theta$ theorem to prove Theorem \ref{th:identify}.

Let $\bSigma = \bOmega, \widehat\bSigma = \bGamma, r = 1$, and $s=d$ in the Davis-Kahan $\sin\theta$ theorem, then for every unitary-invariant norm $\|\cdot\|$, we have 
\[
\|\sin\bTheta(\cS_\bOmega,\cS_\bGamma)\|\leq\frac{\|\bfH\V_0\|}{\delta},
\]
where $\bfH= \bGamma -\bOmega = -p^{-1}\bOmega\J-p^{-1}\J\bOmega+p^{-2}\J\bOmega\J$ and $\delta=\max(\lambda_d-a_{d+1},0)$.

Since
\[
\bfH\V_0 = -p^{-1}\bOmega\J \V_0-p^{-1}\J \V_0\bLambda_0+p^{-2}\J\bOmega\J \V_0,
\]
we have
\begin{align}\label{hv0}
\|\bfH\V_0\|_2& \leq p^{-1}\|\bOmega\|_2\|\J \V_0\|_2+p^{-1}\|\J \V_0\|_2\|\bLambda_0\|_2+p^{-2}\|\J\|_2\|\bOmega\|_2\|\J \V_0\|_2 \\ \notag
& \leq 3c(q)\lambda_1R_q^{1/(2-q)}\sqrt{(d/p)},
\end{align}
where Lemma \ref{lem:vjbound} is applied to bound $\|\J \V_0\|_2$
and
\[
\|\bfH\V_0\|_F\leq\sqrt{d}\|\bfH\V_0\|_2\leq 3c(q)\lambda_1R_q^{1/(2-q)}d/\sqrt{p}.
\]

For $\delta$ in the denominator, we claim that $\delta\geq\lambda_d-\lambda_{d+1}>0$, since $a_{d+1}\leq \lambda_{d+1}$ by Lemma \ref{lem:interlacing}.

Hence, we obtain
\[
\|\sin\bTheta(\cS_\bOmega,\cS_\bGamma)\|_2 \leq\frac{\|\bfH\V_0\|_2}{\delta} \leq \frac{3c(q)\lambda_1R_q^{1/(2-q)}\sqrt{(d/p)}}{\lambda_d-\lambda_{d+1}},
\]
and 
\[
\|\sin\bTheta(\cS_\bOmega,\cS_\bGamma)\|_F \leq\frac{\|\bfH\V_0\|_F}{\delta} \leq \frac{3c(q)\lambda_1R_q^{1/(2-q)}d/\sqrt{p}}{\lambda_d-\lambda_{d+1}}.
\]
\end{proof}

\ignore{\subsection{Proof of Theorem \ref{th:pca-Sy}}\label{pf:pca-Sy}
\begin{proof}
Adopting Davis-Kahan $\sin\theta$ theorem, we have
\begin{equation*}
\|\sin\bTheta(\widehat\cS_Y,\cS_\bOmega)\|_F\leq\frac{\|\bDelta \V_0\|_F}{\delta}\leq\frac{\sqrt{d}\|\bDelta\|_2}{\delta},
\end{equation*}
where $\bDelta =\bfS_Y-\bOmega$ and $\delta=\max(\lambda_d-\widehat{\lambda}_{d+1},0)$. 

Lemma \ref{lem:ineq-Sn} implies 
\begin{equation}\label{eq:pca-Sy}
\bP(\|\bDelta\|_2\geq t) \leq 2\cdot 7^p\exp\left(-\frac{t^2/18}{18\lambda_1^2/n+t\lambda_1/n}\right).
\end{equation}
Define 
\[
A_1 = \{\|\bDelta\|_2\geq t_1\}, \quad t_1 = 18\lambda_1\max\left(\sqrt{\frac{p+\log(1/\tau)}{n}},\frac{p+\log(1/\tau)}{n}\right).
\]
Taking $t=t_1$ into (\ref{eq:pca-Sy}), we have 
\[
\bP(A_1) = \bP(\|\bDelta\|_2\geq t_1) \leq \tau.
\]
Define
\[
A_\delta = \{\widehat{\lambda}_{d+1}\geq \lambda_{d+1}+t_\delta\}, \quad t_\delta = t_1.
\]
Since
\[
\widehat{\lambda}_{d+1} = \lambda_{d+1}(\bfS_Y)\leq \lambda_{d+1}+\|\bDelta\|_2
\]
by Lemma \ref{lem:weyl}, we obtain
\[
\bP(A_\delta) \leq \bP(\|\bDelta\|_2\geq t_\delta) \leq \tau
\]
Thus, on $A_1^c\cap A_\delta^c$,
\[
\|\sin\bTheta(\widehat\cS_Y,\cS_\bOmega)\|_F \leq c\lambda_1\sqrt{d} \max\left(\sqrt{\frac{p+\log(1/\tau)}{n}},\frac{p+\log(1/\tau)}{n}\right)/\delta,
\]
where $\delta = \max(\lambda_d-\lambda_{d+1}-t_\delta,0)$ with probability at least $1-2\tau$.
\end{proof}}

\subsection{Proof of Theorem \ref{th:pca-Sz}}\label{pf:pca-Sz}
\begin{proof}
Adopting the Davis-Kahan $\sin\theta$ theorem, we have
\begin{equation}\label{pca-0}
\|\sin\bTheta(\widehat\cS_Z,\cS_\bOmega)\|_F\leq\frac{\|\bDelta \V_0\|_F}{\delta}\leq\frac{\sqrt{d}\|\bDelta \V_0\|_2}{\delta},
\end{equation}
where $\bDelta =\bfS_Z-\bOmega$ and $\delta=\max(\lambda_d-\widehat{\lambda}_{d+1},0)$.

Since
\[
\bDelta = \bfS_Z-\bGamma+\bfH,
\]
where $\bfH = \bGamma-\bOmega$,
we have
\begin{align*}
\|\bDelta \V_0\|_2 \leq &~ \|\bfS_Z-\bGamma\|_2 + \|\bfH\V_0\|_2 \\
= &~ T_1 + T_2. 
\end{align*}

We first bound the term $T_1$. Lemma \ref{lem:ineq-Sn} implies 
\begin{align}\label{eq:pca-Sz}
& ~\bP(\|\bfS_Z-\bGamma\|_2\geq t) \leq 2\cdot 7^p\exp\left(-\frac{t^2/18}{18a_1^2/n+ta_1/n}\right) \leq 2\cdot 7^p\exp\left(-\frac{t^2/18}{18\lambda_1^2/n+t\lambda_1/n}\right),
\end{align}
where $a_1\leq \lambda_1$ by Lemma \ref{lem:interlacing} gives the last inequality.
Define 
\[
A_1 = \{T_1\geq t_1\}, \quad t_1 = 18\lambda_1\max\left(\left(\frac{p-\log\tau}{n}\right)^{1/2},\frac{p-\log\tau}{n}\right).
\]

Plugging $t=t_1$ into (\ref{eq:pca-Sz}), we have 
\[
\bP(A_1) = \bP(\|\bfS_Z-\bGamma\|_2\geq t_1) \leq \tau.
\]

To bound the term $T_2$, we use (\ref{hv0}) and obtain
\[
T_2 = \|\bfH\V_0\|_2 \leq 3c(q)\lambda_1R_q^{1/(2-q)}\sqrt{(d/p)}.
\]

Now we consider $\delta=\max(\lambda_d-\widehat{\lambda}_{d+1},0)$.
By Lemma \ref{lem:weyl}, we have
\[
\widehat{\lambda}_{d+1}=\lambda_{d+1}(\bfS_Z)\leq a_{d+1}+\|\bfS_Z-\bGamma\|_2.
\]
Since $a_{d+1}\leq\lambda_{d+1}$ by Lemma \ref{lem:interlacing},
we obtain
\[
\widehat{\lambda}_{d+1} \leq \lambda_{d+1} + \|\bfS_Z-\bGamma\|_2.
\]
Define
\[
A_\delta =\{\widehat{\lambda}_{d+1} \geq \lambda_{d+1} + t_\delta\}, \quad t_\delta = t_1.
\]
and we have
\[
\bP(A_\delta) \leq \bP(\|\bfS_Z-\bGamma\|_2\geq t_\delta)\leq \tau.
\]

Now on $A_1^c\cap A_\delta^c$, we have
\begin{align}\label{pca-1}
\|\bDelta \V_0\|_2\leq 72\lambda_1\max\left(\left(\frac{p-\log\tau}{n}\right)^{1/2},\frac{p-\log\tau}{n}\right) +3c(q)\lambda_1R_q^{1/(2-q)}\sqrt{(d/p)}
\end{align}
and 
\begin{equation}\label{pca-2}
\lambda_d-\widehat{\lambda}_{d+1}\geq \lambda_d-\lambda_{d+1}-t_\delta.
\end{equation}
Plugging (\ref{pca-1}) and (\ref{pca-2}) into (\ref{pca-0}), we have
\begin{align*}
&\|\sin\bTheta(\widehat{\cS}_Z,\cS_\bOmega)\|_F \\
\leq &c\lambda_1\left\{\sqrt{d}\max\left(\left(\frac{p-\log\tau}{n}\right)^{1/2},\frac{p-\log\tau}{n}\right)\vee\frac{c(q)R_q^{1/(2-q)}d}{\sqrt{p}}\right\}/{\delta},
\end{align*}
where $\delta=\max\left(\lambda_d-\lambda_{d+1}-t_\delta,0\right)$
with probability at least $1-2\tau$.
\end{proof}

\subsection{Proof of Theorem \ref{th:upperbound1}}\label{pf:upperbound}
\begin{proof}
We will use some techniques as in \cite{vu2013minimax}. We start from applying Lemma \ref{lem0}, which gives
\begin{equation}\label{eq:error}
\widehat\v^2=\|\sin\bTheta(\widehat\cS_Z,\cS_\bOmega)\|_F^2\leq\frac{\langle \bfS_Z-\bOmega, \widehat{\V}_0\widehat{\V}_0^\T-\V_0\V_0^\T\rangle}{\lambda_d-\lambda_{d+1}}.
\end{equation}
Let
\[
\W_\bGamma = \bfS_Z-\bGamma \quad\text{ and }\quad \bfH= \bGamma - \bOmega. 
\]
Let
\[
\bPi = \V_0\V_0^\T \quad\text{ and } \quad\widehat\bPi = \widehat{\V}_0\widehat{\V}_0^\T.
\]
For an orthogonal projector $\bPi$, we write $\bPi^\bot=\I-\bPi=\V_1\V_1^\T$. The numerator in (\ref{eq:error}) can be decoupled into estimation error part and approximation error part:
\[
\langle \bfS_Z-\bOmega, \widehat{\V}_0\widehat{\V}_0^\T-\V_0\V_0^\T\rangle = \langle \W_\bGamma,\widehat\bPi-\bPi \rangle + \langle \bfH,\widehat\bPi-\bPi \rangle,
\] 
where
\begin{align*}
\langle \W_\bGamma,\widehat\bPi-\bPi \rangle & = -\langle \W_\bGamma,\bPi\widehat\bPi^\bot\bPi \rangle + 2\langle \W_\bGamma,\bPi^\bot\widehat\bPi\bPi \rangle + \langle \W_\bGamma,\bPi^\bot\widehat\bPi\bPi^\bot \rangle \\ \notag
& = -T_1 + 2T_2 + T_3
\end{align*}
by Proposition \ref{prop:eq}, and
\[
\langle \bfH,\widehat{\bPi}-\bPi\rangle = \langle \bfH,\widehat{\bPi}\rangle - \langle \bfH,\bPi\rangle=T_4 - T_5.
\]
We will control $T_1, \ldots, T_5$ separately.

For the term $T_1$, 
\begin{align*}
|T_1| & = |\langle \W_\bGamma,\bPi\widehat{\bPi}^\bot\bPi\rangle| = |\langle\bPi \W_\bGamma\bPi,\bPi\widehat{\bPi}^\bot\bPi\rangle| \\
& \leq \|\bPi \W_\bGamma\bPi\|_2\|\bPi\widehat{\bPi}^\bot\bPi\|_* = \|\bPi \W_\bGamma\bPi\|_2\|\bPi\widehat{\bPi}^\bot\widehat{\bPi}^\bot\bPi\|_* \\
& = \|\bPi \W_\bGamma\bPi\|_2\|\bPi\widehat{\bPi}^\bot\|_F^2 = \|\bPi \W_\bGamma\bPi\|_2\widehat{\v}^2.
\end{align*}
Lemma \ref{lem:T1} implies
\begin{equation}\label{eq:T1}
\bP(\|\bPi \W_\bGamma\bPi\|_2\geq t) \leq 2\cdot 7^d \exp\left(-\frac{t^2/18}{18\alpha^2/n+\alpha t/n}\right),
\end{equation}
where $\alpha = \lambda_1\left(1+2c(q)^2R_q^{2/(2-q)}d/p\right)$.
Define
\[
A_1 :\left\{|T_1| \geq t_1\widehat{\v}^2\right\}, \quad t_1 = c_1\alpha\left(\frac{d+\log n}{n}\right)^{1/2},
\]
where $c_1$ is a large enough positive constant. Plugging $t = t_1$ into (\ref{eq:T1}), we have
\begin{align*}
\bP(A_1) & \leq \bP(\|\bPi \W_\bGamma\bPi\|_2\geq t_1)\leq n^{-1}.
\end{align*}

For the term $T_2$, 
\begin{align}\label{eq:T2-1}
|T_2| & = |\langle \W_\bGamma,\bPi^\bot\widehat\bPi\bPi \rangle| = |\langle \bPi^\bot \W_\bGamma\bPi,\bPi^\bot\widehat\bPi \rangle| \\ \notag
& \leq \|\bPi^\bot \W_\bGamma\bPi\|_{2,\infty}\|\bPi^\bot\widehat\bPi\|_{2,1}.
\end{align}
To bound $\|\bPi^\bot\widehat\bPi\|_{2,1}$, let the rows of $\bPi^\bot\widehat\bPi$ be denoted by $\bphi_1,\ldots,\bphi_p$ and $t>0$. Using a standard argument of bounding $l_1$ norm by the $l_q$ and $l_2$ norms [for example, from Lemma 5 of \cite{raskutti2011minimax}], we have for all $t>0$, $0<q\leq 1$, 
\begin{align}\label{eq:T2-2}
\|\bPi^\bot\widehat\bPi\|_{2,1} = &~\sum_{i=1}^p\|\bphi_i\|_2 \\ \notag
\leq &~\left(\sum_{i=1}^p\|\bphi_i\|_2^q\right)^{1/2} \left(\sum_{i=1}^p\|\bphi_i\|_2^2\right)^{1/2}t^{-q/2} + \left(\sum_{i=1}^p\|\bphi_i\|_2^q\right)t^{1-q} \\ \notag
= &~ \|\bPi^\bot\widehat\bPi\|_{2,q}^{q/2}\|\bPi^\bot\widehat\bPi\|_F t^{-q/2} + \|\bPi^\bot\widehat\bPi\|_{2,q}^qt^{1-q} \\ \notag
\leq &~\sqrt{(2R_q)}t^{-q/2}\widehat\v + 2R_qt^{1-q},
\end{align}
where the last step uses the fact that
\begin{align*}
\|\bPi^\bot\widehat\bPi\|_{2,q}^q & = \|\bPi^\bot\widehat \V_0\|_{2,q}^q = \|\widehat \V_0-\bPi\widehat \V_0\|_{2,q}^q \leq \|\widehat \V_0\|_{2,q}^q + \|\V_0\V_0^\T\widehat \V_0\|_{2,q}^q \\
& \leq \|\widehat \V_0\|_{2,q}^q + \|\V_0\|_{2,q}^q \leq 2R_q.
\end{align*}
When $q=0$, for all $t>0$, we have
\begin{equation}\label{eq:T2-2-0}
\|\bPi^\bot\widehat\bPi\|_{2,1}\leq\|\bPi^\bot\widehat\bPi\|_{2,0}^{1/2}\|\bPi^\bot\widehat\bPi\|_F+\|\bPi^\bot\widehat\bPi\|_{2,0}t\leq\sqrt{(2R_0)}\widehat\v+2R_0t,
\end{equation}
where the last step uses the fact that
\begin{align*}
\|\bPi^\bot\widehat\bPi\|_{2,0}& = \|\bPi^\bot\widehat \V_0\|_{2,0} = \|\widehat \V_0-\bPi\widehat \V_0\|_{2,0} \leq \|\widehat \V_0\|_{2,0} + \|\V_0\V_0^\T\widehat \V_0\|_{2,0} \\
& \leq \|\widehat \V_0\|_{2,0} + \|\V_0\|_{2,0} \leq 2R_0.
\end{align*}
Combining (\ref{eq:T2-1}), (\ref{eq:T2-2}) and (\ref{eq:T2-2-0}) we obtain, for all $t>0$, $0\leq q\leq 1$,
\begin{equation}\label{eq:T2-3}
|T_2|\leq\|\bPi^\bot \W_\bGamma\bPi\|_{2,\infty}(2^{1/2}R_q^{1/2}t^{-q/2}\widehat\v+2R_qt^{1-q}).
\end{equation}
Now define
\begin{align*}
A_2 = & \left\{|T_2|\geq t_{2,1}\left(2^{1/2}R_q^{1/2}t_{2,2}^{-q/2}\widehat\v + 2R_qt_{2,2}^{1-q}\right)\right\}, \\
t_{2,1} = & 20\beta\left(\frac{d+\log p}{n}\right)^{1/2}, \\
t_{2,2} = & \frac{\sqrt{(\lambda_1\lambda_{d+1})}}{\lambda_d-\lambda_{d+1}}\left(\frac{d+\log p}{n}\right)^{1/2},
\end{align*}
where $\beta = \sqrt{(\lambda_1\lambda_{d+1})} + 7c(q)\lambda_1R_q^{1/(2-q)}\sqrt{(d/p)}$.
Taking $t=t_{2,2}$ in (\ref{eq:T2-3}) and using the tail bound result in Lemma \ref{lem:T2}, we have
\begin{align}
\bP(A_2) & \leq \bP(\|\bPi^\bot \W_\bGamma\bPi\|_{2,\infty}\geq t_{2,1}) \\ \notag
& \leq 2p5^d\exp\left(-\frac{t_{2,1}^2/8}{2\beta^2/n+t_{2,1}\beta/n}\right) \\ \notag
& \leq p^{-1}.
\end{align}

For the term $T_3 = \langle \W_\bGamma,\bPi^\bot\widehat\bPi\bPi^\bot \rangle = \langle \bfS_Z-\bGamma,\bPi^\bot\widehat\bPi\bPi^\bot \rangle $,
we use the same bound in \cite{vu2013minimax}. Define
\[
A_3 = \left\{|T_3|\geq c_3(\log n)^{5/2}\lambda_{d+1}\left(\v_n\widehat\v^2 + \v_n^2\widehat\v + \v_n^4\right)\right\}
\]
and we have 
\[
\bP(A_3) \leq \frac{6\log n}{n} + \frac{3}{n}.
\]

For the term $T_4$, recalling that
\[
\bfH = -p^{-1}\bOmega\J-p^{-1}\J\bOmega + p^{-2}\J\bOmega\J,
\]
we have
\begin{align}\label{eq:T4}
|T_4| & = |\langle \bfH,\widehat{\bPi}\rangle| = |\langle \widehat{\bPi},\bfH\widehat{\bPi}\rangle| \\ \notag
& \leq p^{-1}|\langle \widehat{\bPi},\bOmega\J\widehat{\bPi}\rangle| + p^{-1}|\langle \widehat{\bPi},\J\bOmega\widehat{\bPi}\rangle| + p^{-2}|\langle \widehat{\bPi},\J\bOmega\J\widehat{\bPi}\rangle| \\ \notag
& = 2p^{-1}|\langle \widehat{\bPi},\bOmega\J\widehat{\bPi}\rangle| + p^{-2}|\langle \widehat{\bPi},\J\bOmega\J\widehat{\bPi}\rangle| \\ \notag
& = 2p^{-1}T_{4,1} + p^{-2}T_{4,2}.
\end{align}
To control $T_{4,1}$, we use
\[
\langle\widehat{\bPi},\bOmega\J\widehat{\bPi}\rangle = \langle \bLambda_0\V_0^\T\widehat{\bPi},\V_0^\T\J\widehat{\bPi}\rangle + \langle \V_1\widehat{\bPi},\bLambda_1 \V_1^\T\J\widehat{\bPi}\rangle.
\]
Since
\[
\|\widehat{\bPi}\V_0\bLambda_0\|_*\leq d \|\widehat{\bPi}\V_0\bLambda_0\|_2\leq \lambda_1d,
\]
and
\[
\|\V_0^\T\J\widehat{\bPi}\|_2\leq\|\V_0^\T\ii\ii^\T\widehat{\V}_0\|_2\leq\|\ii^\T\widehat{\V}_0\|_2^2\leq d\|\ii^\T\widehat{\V}_0\|_{\max}^2\leq c(q)^2dR_q^{2/(2-q)},
\]
where Lemma \ref{lem:vjbound} is adopted in the last step,
we have
\[
|\langle \bLambda_0\V_0^\T\widehat{\bPi},\V_0^\T\J\widehat{\bPi}\rangle|\leq \|\widehat{\bPi}\V_0\bLambda_0\|_*\|\V_0^\T\J\widehat{\bPi}\|_2\leq c(q)^2\lambda_1d^2R_q^{2/(2-q)}.
\]
Since
\[
\|\V_1\widehat{\bPi}\|_*\leq \sqrt{d}\|\widehat{\bPi}\V_1\|_F=\sqrt{d}\widehat{\v},
\] and
\[
\|\bLambda_1 \V_1^\T\J\widehat{\bPi}\|_2\leq \lambda_{d+1}\|\J\widehat{\V}_0\|_2\leq c(q)\lambda_{d+1}\sqrt{(pd)}R_q^{1/(2-q)},
\]
where Lemma \ref{lem:vjbound} is adopted in the last step,
we obtain
\[
|\langle \V_1\widehat{\bPi},\bLambda_1 \V_1^\T\J\widehat{\bPi}\rangle|\leq \|\V_1\widehat{\bPi}\|_*\|\bLambda_1 \V_1^\T\J\widehat{\bPi}\|_2 \leq c(q)\lambda_{d+1}d\sqrt{p}R_q^{1/(2-q)}\widehat{\v}.
\]
Thus, 
\begin{equation}\label{eq:T4-1}
|T_{4,1}| = |\langle \widehat{\bPi},\bOmega\J\widehat{\bPi}\rangle|\leq c(q)^2\lambda_1d^2R_q^{2/(2-q)} + c(q)\lambda_{d+1}d\sqrt{p}R_q^{1/(2-q)}\widehat{\v}.
\end{equation}
To control $T_{4,2}$, we use
\[
|T_{4,2}| = |\langle\widehat{\bPi},\J\bOmega\J\widehat{\bPi}\rangle| \leq \|\widehat{\bPi}\J\|_*\|\bOmega\J\widehat{\bPi}\|_2. 
\]
Since
\[
\|\widehat{\bPi}\J\|_*\leq \sqrt{d}\|\widehat{\bPi}\J\|_F=\sqrt{d}\|\J\widehat{\V_0}\|_F\leq c(q)d\sqrt{p}R_q^{1/(2-q)},
\]
where Lemma \ref{lem:vjbound} is adopted in the last step, and
\[
\|\bOmega\J\widehat{\bPi}\|_2\leq \lambda_1\|\J\widehat{\V}_0\|_2\leq c(q)\lambda_1\sqrt{(pd)}R_q^{1/(2-q)},
\]
\begin{equation}\label{eq:T4-2}
|T_{4,2}|\leq c(q)^2\lambda_1d\sqrt{d}pR_q^{2/(2-q)}.
\end{equation}
Plugging (\ref{eq:T4-1}) and (\ref{eq:T4-2}) into (\ref{eq:T4}), we obtain
\[
|T_4|\leq 3c(q)^2\lambda_1R_q^{2/(2-q)}d^2/p + 2c(q)\lambda_{d+1}R_q^{1/(2-q)}\widehat{\v}d/\sqrt{p}.
\]

For the term $T_5=\langle \bfH,\bPi\rangle$, we have
\[
|T_5|=|\langle \bfH,\bPi\rangle| = |\langle \bPi,\bPi \bfH\bPi\rangle| \leq \|\bPi\|_*\|\bPi \bfH\bPi\|_2,
\]
where
 $\|\bPi\|_*\leq d\|\bPi\|_2=d$,
and
$\|\bPi \bfH\bPi\|_2\leq \|\V_0^\T\bfH\V_0\|_2\leq 3c(q)^2\lambda_1R_q^{2/(2-q)}d/p$ by Lemma \ref{lem:v0hv0}.
Hence,
\[
|T_5|\leq 3c(q)^2\lambda_1R_q^{2/(2-q)}d^2/p.
\]

On $A_1^c\cap A_2^c\cap A_3^c$, plugging all terms into (\ref{eq:error}) and under Conditions \ref{condition1} and \ref{condition2}, we have that
\begin{align*}
(\lambda_d-\lambda_{d+1})\widehat\v^2 \leq & ~\left\{c_1\lambda_1\left(1+2c(q)^2R_q^{2/(2-q)}d/p\right)(d+\log n)^{1/2}/n^{1/2}+ c_3\lambda_{d+1}(\log n)^{5/2}\v_n\right\}\widehat\v^2 \\
& + 40\left\{(\lambda_1\lambda_{d+1})^{1/2}+7c(q)\lambda_1R_q^{1/(2-q)}\sqrt{(d/p)}\right\}(\lambda_1\lambda_{d+1})^{-q/4}(\lambda_d-\lambda_{d+1})^{q/2}\v_n\widehat\v \\
& + 40\left\{(\lambda_1\lambda_{d+1})^{1/2}+7c(q)\lambda_1R_q^{1/(2-q)}\sqrt{(d/p)}\right\}(\lambda_1\lambda_{d+1})^{(1-q)/2}(\lambda_d-\lambda_{d+1})^{q-1}\v_n^2 \\
& + 2c(q)\lambda_{d+1}R_q^{1/(2-q)}d/\sqrt{p}\widehat\v + 6c(q)^2\lambda_1R_q^{2/(2-q)}d^2/p\\
& + c_3\lambda_{d+1}(\log n)^{5/2}\v_n^2(\widehat\v + \v_n^2).
\end{align*}
Therefore,
\begin{align*}
\frac{1}{2}(\lambda_d-\lambda_{d+1})\widehat\v^2 \leq & \frac{k_1}{2}(\lambda_1\lambda_{d+1})^{1/2-q/4}(\lambda_d-\lambda_{d+1})^{q/2}\v_n\widehat\v + \frac{k_1}{2}(\lambda_1\lambda_{d+1})^{1-q/2}(\lambda_d-\lambda_{d+1})^{q-1}\v_n^2 \\
& + \frac{k_2}{2}c(q)\lambda_{d+1}R_q^{1/(2-q)}d/\sqrt{p}\widehat\v + \frac{k_2}{2}c(q)^2\lambda_1R_q^{2/(2-q)}d^2/p,
\end{align*}
where $k_1 = 82+560c_2$ and $k_2 = 12$.
Let 
\[
\cE_1 = \sigma_1^{1-q/2}\v_n \quad\text{ and } \quad \cE_2 = c(q)\sigma_2R_q^{1/(2-q)}d/\sqrt{p},
\]
where $\sigma_1^2 = \lambda_1\lambda_{d+1}/(\lambda_d-\lambda_{d+1})^2$, $\sigma_2^2 = \lambda_1^2/(\lambda_d-\lambda_{d+1})^2$ and $\v_n = \sqrt{2R_q}(d+\log p/n)^{1/2-q/4}$.
We obtain 
\[
\widehat{\v}^2 - (k_1\cE_1+k_2\cE_2)\widehat{\v} - (k_1\cE_1^2+k_2\cE_2^2) \leq 0.
\]
Thus,
\begin{align*}
\widehat\v \leq  (k_1+1)\cE_1+(k_2+1)\cE_2 \leq c\left\{\sqrt{R_q}\sigma_1^{1-q/2}\left(\frac{d+\log p}{n}\right)^{1/2-q/4}\vee\left(c(q)\sigma_2\frac{dR_q^{1/(2-q)}}{\sqrt{p}}\right)\right\}
\end{align*}
with probability at least $1-4/n-6\log n/n-1/p$.
\end{proof}

\section{Related lemmas and propositions}
\begin{lemma}\label{lem:ineq-Sn}
Let $\X_1,\ldots,\X_n\in\bR^p$ be $n$ i.i.d. sub-Gaussian random vectors such that $\bE(\X_i\X_i^\T)=\bSigma$ and $\bE(\X_i)=0$. The sample covariance matrix $\widehat{\bSigma}$ defined by
\[
\widehat{\bSigma} = \frac{1}{n}\sum_{i=1}^n\X_i\X_i^\T.
\]
Then 
\begin{align*}
\bP(\|\widehat{\bSigma}-\bSigma\|_2\geq t) \leq 2\cdot 7^p\exp\left(-\frac{t^2/18}{18\|\bSigma\|_2^2/n+t\|\bSigma\|_2/n}\right)
\end{align*}
and
\begin{align*}
\bP(\|\widehat{\bSigma}-\bSigma\|_{\max}\geq t) \leq 2p^2\exp \left(-\frac{t^2/2}{18\|\bSigma\|_2^2/n+3t\|\bSigma\|_2/n}\right).
\end{align*}
\end{lemma}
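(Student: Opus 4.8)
The lemma is the standard concentration of an empirical covariance matrix about its mean, and the plan is to carry out the classical two-pronged argument: first reduce the operator norm to quadratic forms over an $\varepsilon$-net, separately reduce the $\max$-norm to its $p^2$ entries, and feed both into a single Bernstein-type tail bound for averages of centered sub-exponential variables. Write $M=\|\bSigma\|_2$. The sub-Gaussianity hypothesis enters only through two facts, both of which hold in the way the lemma is applied (with $\X_i=\A\bfT_i$, $\A\A^\T=\bSigma$, $\|\bfT_i\|_{\psi_2}\le 1$ as in (\ref{yi})): for every unit vector $b$, $\langle\X_i,b\rangle$ is centered sub-Gaussian with $\|\langle\X_i,b\rangle\|_{\psi_2}=\|\A^\T b\|_2=(b^\T\bSigma b)^{1/2}\le M^{1/2}$; and consequently each coordinate $X_{ij}$ has $\|X_{ij}\|_{\psi_2}\le\bSigma_{jj}^{1/2}\le M^{1/2}$.

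The common ingredient is one Bernstein estimate. If $g$ is centered sub-Gaussian with $\|g\|_{\psi_2}\le M^{1/2}$, then $g^2-\bE g^2$ is centered with $\psi_1$-norm at most a constant times $M$; the same holds for $g_1g_2-\bE g_1g_2$ for two such variables, since $\|g_1g_2\|_{\psi_1}\le\|g_1\|_{\psi_2}\|g_2\|_{\psi_2}$. For $n$ i.i.d.\ copies $Y_i$ of such a centered sub-exponential variable, bounding $\bE e^{\lambda Y_i}$ for $|\lambda|\lesssim M^{-1}$ and optimizing the Chernoff bound gives, for every $s>0$,
\[
\bP\Bigl(\Bigl|\tfrac1n\sum_{i=1}^n Y_i\Bigr|\ge s\Bigr)\le 2\exp\Bigl(-\frac{ns^2/2}{18M^2+3Ms}\Bigr),
\]
with the numbers $18$, $3$, $\tfrac12$ coming from tracking the constants in the moment-generating-function bound for $g^2-\bE g^2$.

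For the operator norm, take $\cN$ a $1/3$-net of the sphere $S^{p-1}$; the usual volumetric estimate gives $|\cN|\le 7^p$, and for symmetric matrices $\|\A\|_2\le 3\max_{b\in\cN}|b^\T\A b|$. For fixed unit $b$, $b^\T(\widehat\bSigma-\bSigma)b=\tfrac1n\sum_i(\langle\X_i,b\rangle^2-\bE\langle\X_i,b\rangle^2)$ is exactly an average of the above form, so the displayed estimate with $s=t/3$ and a union bound over $\cN$ yield
\[
\bP\bigl(\|\widehat\bSigma-\bSigma\|_2\ge t\bigr)\le 2\cdot 7^p\exp\Bigl(-\frac{n(t/3)^2/2}{18M^2+3M(t/3)}\Bigr)=2\cdot 7^p\exp\Bigl(-\frac{t^2/18}{18\|\bSigma\|_2^2/n+t\|\bSigma\|_2/n}\Bigr),
\]
the first claim. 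For the $\max$-norm no net is needed: $(\widehat\bSigma-\bSigma)_{jk}=\tfrac1n\sum_i(X_{ij}X_{ik}-\bE X_{ij}X_{ik})$ is again an average of the above form, so the displayed estimate together with a union bound over the $p^2$ pairs $(j,k)$ gives $2p^2\exp\bigl(-\tfrac{t^2/2}{18\|\bSigma\|_2^2/n+3t\|\bSigma\|_2/n}\bigr)$, the second claim. (The statement is for the uncentered $\widehat\bSigma=n^{-1}\sum_i\X_i\X_i^\T$; when it is applied to a centered sample covariance the extra rank-one term is controlled in the same way.)

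The conceptual steps --- covering the sphere, squaring a sub-Gaussian into a sub-exponential, Bernstein, union bound --- are routine (the result is the textbook empirical-covariance bound, in the spirit of \cite{vershynin2018high}). The only real work, and the main obstacle, is quantitative: one must fix the net radius at $1/3$ so that the covering number is exactly $7^p$ and the symmetrization factor is exactly $3$, and one must calibrate the constants in the moment-generating-function bound $\bE e^{\lambda(g^2-\bE g^2)}\le e^{CM^2\lambda^2}$ (valid for $|\lambda|\le c/M$), and likewise for $g_1g_2-\bE g_1g_2$, sharply enough that the exponents come out with the stated factors $18$, $3$, $\tfrac12$ rather than with unspecified universal constants. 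It is also worth recording explicitly the elementary reduction $\|\langle\X_i,b\rangle\|_{\psi_2}\le\|\bSigma\|_2^{1/2}$, so that the effective sub-Gaussian parameter is $\|\bSigma\|_2$ and not a dimension-dependent quantity.
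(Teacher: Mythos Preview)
Your proposal is correct and follows essentially the same route as the paper: a $1/3$-net on $S^{p-1}$ (giving $|\cN|\le 7^p$ and the symmetrization factor $3$), the reduction of each quadratic form or entry to an i.i.d.\ average of centered sub-exponential variables with $\psi_1$-norm at most $3\|\bSigma\|_2$ via $\|g_1g_2\|_{\psi_1}\le\|g_1\|_{\psi_2}\|g_2\|_{\psi_2}$, and then the Bernstein inequality of Lemma~\ref{lem:bern} together with a union bound. The only cosmetic difference is that the paper obtains the constant $3\|\bSigma\|_2$ explicitly by writing $\|Y-\bE Y\|_{\psi_1}\le\|Y\|_{\psi_1}+2|\bE Y|$ and then plugs it into Lemma~\ref{lem:bern}, whereas you package the same computation as an MGF estimate; either way the constants $18$, $3$, $\tfrac12$ in the displayed Bernstein bound arise from this $3\|\bSigma\|_2$ cap on the summands' $\psi_1$-norm.
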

\begin{proof}

Let $\cN_\delta$ be a minimal $\delta$-net of $S_2^{p-1}$ for some $\delta\in(0,1)$. Proposition \ref{prop:net2} implies
\[
\|\widehat{\bSigma}-\bSigma\|_2 \leq (1-2\delta)^{-1}\max_{\bfu\in\cN_\delta}|\langle \bfu,(\widehat{\bSigma}-\bSigma)\bfu\rangle|.
\]
Note that
\[
\langle \bfu,(\widehat{\bSigma}-\bSigma)\bfu\rangle = \frac{1}{n}\sum_{i=1}^n\left(( \X_i^\T\bfu)^2- \langle \bfu,\bSigma \bfu\rangle\right),
\]
and 
\begin{align*}
\|(\X_i^\T\bfu)^2 - \langle \bfu,\bSigma \bfu\rangle\|_{\psi_1} & \leq \|(\X_i^\T\bfu)^2 \|_{\psi_1} + 2\bfu^\T\bSigma \bfu \\
& \leq \|\X_i^\T\bfu\|_{\psi_2}^2 + 2\|\bSigma\|_2 \\
& \leq 3\|\bSigma\|_2,
\end{align*}
where Proposition \ref{prop:ineq} is applied in the second inequality. 

We choose 
$\delta = 1/3$ and have $|\cN_\delta|\leq 7^p$ by Proposition \ref{prop:netnum}. 
Then Lemma \ref{lem:bern} gives
\begin{align*}
\bP(\|\widehat{\bSigma}-\bSigma\|_2\geq t) & \leq \bP(\max_{\bfu\in\cN_\delta}|\langle \bfu,(\widehat{\bSigma}-\bSigma)\bfu\rangle|\geq t/3) \\
& \leq |N_\delta| \bP(|\langle \bfu,(\widehat{\bSigma}-\bSigma)\bfu\rangle|\geq t/3) \\
& \leq 2\cdot 7^p\exp\left(-\frac{t^2/18}{18\|\bSigma\|_2^2/n+t\|\bSigma\|_2/n}\right).
\end{align*}

Note that
\[
(\widehat{\bSigma}-\bSigma)_{jk} = \frac{1}{n}\sum_{i=1}^n(X_{ij}X_{ik}-\Sigma_{jk}),
\]
and 
\begin{align*}
\|X_{ij}X_{ik}-\Sigma_{jk}\|_{\psi_1} & \leq \|X_{ij}X_{ik}\|_{\psi_1} + 2|\Sigma_{jk}| \\
& \leq \|X_{ij}\|_{\psi_2} \|X_{ik}\|_{\psi_2} + 2\sqrt{(\Sigma_{jj}\Sigma_{kk})} \\
& \leq 3\|\bSigma\|_2,
\end{align*}
where Proposition \ref{prop:ineq}  is applied in the second inequality and $\Sigma_{jj}\leq \|\bSigma\|_2$ for $j=1,\ldots,p$ gives the last step.
Then Lemma \ref{lem:bern} implies 
\begin{align*}
\bP(|(\widehat{\bSigma}-\bSigma)_{jk}| \geq t) \leq 2\exp \left(-\frac{t^2/2}{18\|\bSigma\|_2^2/n+3t\|\bSigma\|_2/n}\right).
\end{align*}
Hence,
\begin{align*}
\bP(\|\widehat{\bSigma}-\bSigma\|_{\max}\geq t) &\leq p^2 \bP(|(\widehat{\bSigma}-\bSigma)_{jk}| \geq t) \\
& \leq 2p^2\exp \left(-\frac{t^2/2}{18\|\bSigma\|_2^2/n+3t\|\bSigma\|_2/n}\right).
\end{align*}

\end{proof}

\begin{lemma}\label{lem:vjbound}
If $q\in[0,1]$ and $\cS_\bOmega = \text{span}\{\bfv_1,\ldots,\bfv_d\} = \col(\V_0)\in \cM_q(R_q)\cup\cM_q^*(R_q)$, then 
\[
\max_{j=1,\ldots,d}\|\bfv_j\|_1 \leq c(q)R_q^{1/(2-q)}
\]
\[
\|\J \V_0\|_2\leq c(q)\sqrt{(pd)}R_q^{1/(2-q)}\quad \text{ and }\quad \|\J \V_0\|_F\leq c(q)\sqrt{(pd)}R_q^{1/(2-q)},
\]
where
\[
c(q)=\frac{2-q}{2(1-q)}\left\{\frac{2(1-q)}{q}\right\}^{q/(2-q)}\text{I}(q\in (0,1)) + 2\text{I}(q\in\{0,1\})
\]
and $\text{I}(\cdot)$ is an indicator function.
\end{lemma}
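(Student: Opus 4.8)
The plan is to prove the first inequality, $\max_{j}\|\bfv_j\|_1\le c(q)R_q^{1/(2-q)}$, and then read off the two bounds on $\J\V_0$ from it, since $\J\V_0$ is a rank-one matrix. I would first reduce both membership hypotheses to a single scalar statement about the columns $\bfv_1,\ldots,\bfv_d$ of $\V_0$: each is a unit vector that is ``$R_q$-sparse in the $l_q$ sense''. If $\col(\V_0)\in\cM_q^*(R_q)$ with $q\in(0,1]$, this is exactly the definition, $\|\bfv_j\|_q^q\le\|\V_0\|_{*,q}^q\le R_q$. If $\col(\V_0)\in\cM_q(R_q)$, then for each row index $i$ the $(i,j)$ entry obeys $|(\V_0)_{ij}|\le\|(\V_0)_{i*}\|_2$; raising to the $q$th power and summing over $i$ gives $\|\bfv_j\|_q^q\le\sum_i\|(\V_0)_{i*}\|_2^q=\|\V_0\|_{2,q}^q\le R_q$. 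When $q=0$, both hypotheses force each column to have at most $R_0$ nonzero entries. Hence it suffices to bound $\|\bfv\|_1$ for a unit vector $\bfv\in\bR^p$ with $\|\bfv\|_q^q\le R_q$ (or $\|\bfv\|_0\le R_0$).

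For this I would use the standard threshold-truncation estimate: fix $t>0$ and split the coordinates at absolute value $t$. The part below the threshold contributes $\sum_{|\bfv_i|\le t}|\bfv_i|\le t^{1-q}\|\bfv\|_q^q\le R_qt^{1-q}$, while the index set $S=\{i:|\bfv_i|>t\}$ satisfies $|S|\le R_qt^{-q}$ (since $|S|t^q\le\|\bfv\|_q^q$), so by Cauchy--Schwarz the part above the threshold contributes $\sum_{i\in S}|\bfv_i|\le|S|^{1/2}\|\bfv\|_2\le R_q^{1/2}t^{-q/2}$. This yields $\|\bfv\|_1\le R_q^{1/2}t^{-q/2}+R_qt^{1-q}$ for every $t>0$. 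The one genuine computation is to minimize the right-hand side over $t$: the minimum is attained at $t=\{q/(2(1-q))\}^{2/(2-q)}R_q^{-1/(2-q)}$, and substituting it back makes the two terms coalesce into $\frac{2-q}{2(1-q)}\{2(1-q)/q\}^{q/(2-q)}R_q^{1/(2-q)}=c(q)R_q^{1/(2-q)}$, exactly as claimed for $q\in(0,1)$. The two boundary cases are immediate: when $q=1$ the hypothesis already reads $\|\bfv\|_1\le R_1=R_1^{1/(2-1)}$, and when $q=0$ Cauchy--Schwarz gives $\|\bfv\|_1\le\|\bfv\|_0^{1/2}\|\bfv\|_2\le R_0^{1/2}$, so both hold with the slack constant $c(0)=c(1)=2$.

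Finally, for the matrix bounds, I would write $\J\V_0=\ii_p\ii_p^\T\V_0=\ii_p(\V_0^\T\ii_p)^\T$, a rank-one matrix whose unique nonzero singular value equals $\|\ii_p\|_2\,\|\V_0^\T\ii_p\|_2$. Here $\|\ii_p\|_2=p^{1/2}$ and $\|\V_0^\T\ii_p\|_2^2=\sum_{j=1}^d(\ii_p^\T\bfv_j)^2\le d\max_{j}\|\bfv_j\|_1^2\le c(q)^2dR_q^{2/(2-q)}$, so both $\|\J\V_0\|_2$ and $\|\J\V_0\|_F$ equal $p^{1/2}\|\V_0^\T\ii_p\|_2\le c(q)(pd)^{1/2}R_q^{1/(2-q)}$, which is the stated bound. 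I expect the only delicate point to be the threshold optimization producing precisely the constant $c(q)$; the reductions and the rank-one computation are routine bookkeeping.
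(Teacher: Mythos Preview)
Your proposal is correct and follows essentially the same route as the paper: both reduce the two membership hypotheses to the columnwise statement $\|\bfv_j\|_q^q\le R_q$ (resp.\ $\|\bfv_j\|_0\le R_0$), invoke the same threshold-splitting inequality $\|\bfv\|_1\le R_q^{1/2}t^{-q/2}+R_qt^{1-q}$, and then optimize in $t$ to extract the constant $c(q)$ for $q\in(0,1)$, handling the endpoints separately. The only minor difference is cosmetic: for the matrix bounds the paper uses the crude entrywise estimate $\|\J\V_0\|_2,\|\J\V_0\|_F\le\sqrt{pd}\,\|\J\V_0\|_{\max}\le\sqrt{pd}\,\max_j\|\bfv_j\|_1$, whereas you note that $\J\V_0=\ii_p(\V_0^\T\ii_p)^\T$ has rank one so its spectral and Frobenius norms coincide and equal $\sqrt{p}\,\|\V_0^\T\ii_p\|_2$; your route is slightly cleaner but lands on the identical bound.
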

\begin{proof}
We first show that $\max_{j=1,\ldots,d}\|\bfv_j\|_q^q\leq R_q$ as follows.

For $q\in(0,1]$, we have
\[
R_q\geq \sum_{i=1}^p\|\bfv_{i*}\|_2^q\geq\sum_{i=1}^p|\bfv_{i*}|_{\max}^q\geq\max_{j=1,\ldots,d}\|\bfv_j\|_q^q
\]
when $\cS_\bOmega\in\cM_q(R_q)$ ($R_q\geq\max_{j=1,\ldots,d}\|\bfv_j\|_q^q$ when $\cS_\bOmega\in\cM^*_q(R_q)$), where $\bfv_{i*}$ denotes the $i$th row of $\V_0$.

For $q=0$, we have
\[
R_0\geq\sum_{i=1}^pI(\|\bfv_{i*}\|_2\neq 0)\geq\max_{j=1,\ldots,d}\|\bfv_j\|_0
\]
when $\cS_\bOmega\in\cM_q(R_q)$ ($R_q\geq\max_{j=1,\ldots,d}\|\bfv_j\|_0$ when $\cS_\bOmega\in\cM^*_q(R_q)$).

Then applying a standard argument of bounding $l_1$ norm by the $l_q$ and $l_2$ norms [for example, from Lemma 5 of \cite{raskutti2011minimax}] and noticing that $\|\bfv_j\|_2=1$ for $j=1,\ldots,d$, we have
\begin{equation}\label{vjbound}
\max_{j=1,\ldots,d}\|\bfv_j\|_1\leq \sqrt{R_q}\tau^{-q/2} + R_q\tau^{1-q}
\end{equation}
for all $\tau>0$ and $0\leq q\leq 1$.

For $q\in (0,1)$, let $\tau = (2\sqrt{R_q}(1-q)/q)^{2/(q-2)}$ in (\ref{vjbound}). We have 
\[
\max_{j=1,\ldots,d}\|\bfv_j\|_1\leq \frac{2-q}{2(1-q)}\left\{\frac{2(1-q)}{q}\right\}^{q/(2-q)}R_q^{1/(2-q)}.
\]

For $q=0$, plugging $\tau=1/\sqrt{R_q}$ into (\ref{vjbound}), we obtain
\[
\max_{j=1,\ldots,d}\|\bfv_j\|_1\leq 2\sqrt{R_q}.
\]

For $q=1$, plugging $\tau=1/R_q$ into (\ref{vjbound}), we obtain
\[
\max_{j=1,\ldots,d}\|\bfv_j\|_1\leq 2R_q.
\]

Hence,
\begin{align*}
\|\J \V_0\|_2 & \leq\sqrt{(pd)}\|\J \V_0\|_{\max} \\
& =\sqrt{(pd)}\max_{j=1,\ldots,d}|\ii^\T\bfv_j|\leq\sqrt{(pd)}\max_{j=1,\ldots,d}\|\bfv_j\|_1\leq c(q)\sqrt{(pd)}R_q^{1/(2-q)},\\
\|\J \V_0\|_F & \leq\sqrt{(pd)}\|\J \V_0\|_{\max}\leq c(q)\sqrt{(pd)}R_q^{1/(2-q)}.
\end{align*}
\end{proof}

\begin{lemma}\label{lem:v0hv0}
If $q\in[0,1]$, $\bfH=\bGamma-\bOmega$ and $\cS_\bOmega = \text{span}\{\bfv_1,\ldots,\bfv_d\} = \col(\V_0)\in \cM_q(R_q)\cup\cM_q^*(R_q)$, then 
\[
\|\V_0^\T \bfH \V_0\|_2\leq 3c(q)^2\lambda_1R_q^{2/(2-q)}d/p.
\]
\end{lemma}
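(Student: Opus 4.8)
The plan is to expand $\bfH$ explicitly and then exploit the eigenstructure of $\bOmega$ relative to $\V_0$. By the identity $\bGamma=\G\bOmega\G$ in \eqref{eq:identity} with $\G=\I-p^{-1}\J$, we have
\[
\bfH=\bGamma-\bOmega=-p^{-1}\bOmega\J-p^{-1}\J\bOmega+p^{-2}\J\bOmega\J.
\]
Conjugating by $\V_0$ and using that $\V_0$ collects the leading $d$ eigenvectors, i.e.\ $\bOmega\V_0=\V_0\bLambda_0$, the first two terms become $-p^{-1}\bLambda_0(\V_0^\T\J\V_0)$ and $-p^{-1}(\V_0^\T\J\V_0)\bLambda_0$. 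For the third term, since $\J\bOmega\J=(\ii_p^\T\bOmega\ii_p)\J$, it equals $p^{-2}(\ii_p^\T\bOmega\ii_p)(\V_0^\T\J\V_0)$. So $\V_0^\T\bfH\V_0$ is a sum of three $d\times d$ matrices, each built from the single matrix $\V_0^\T\J\V_0$ together with either $\bLambda_0$ or the scalar $\ii_p^\T\bOmega\ii_p$.

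Next I would control $\V_0^\T\J\V_0$. Setting $\bfa=\V_0^\T\ii_p\in\bR^d$, one has $\V_0^\T\J\V_0=\bfa\bfa^\T$, so $\|\V_0^\T\J\V_0\|_2=\|\bfa\|_2^2$. The $j$th entry of $\bfa$ is $\ii_p^\T\bfv_j$, and Lemma \ref{lem:vjbound} gives $|\ii_p^\T\bfv_j|\leq\|\bfv_j\|_1\leq c(q)R_q^{1/(2-q)}$ for $j=1,\ldots,d$, whence $\|\bfa\|_2^2\leq c(q)^2dR_q^{2/(2-q)}$. It then remains to bound the scalar $\ii_p^\T\bOmega\ii_p=\|\bOmega^{1/2}\ii_p\|_2^2\leq\|\bOmega\|_2\|\ii_p\|_2^2=\lambda_1 p$.

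Finally I would assemble the pieces via the triangle inequality and submultiplicativity of the operator norm, using $\|\bLambda_0\|_2=\lambda_1$:
\[
\|\V_0^\T\bfH\V_0\|_2\leq\frac{2\lambda_1}{p}\|\V_0^\T\J\V_0\|_2+\frac{\ii_p^\T\bOmega\ii_p}{p^2}\|\V_0^\T\J\V_0\|_2\leq\frac{2\lambda_1}{p}c(q)^2dR_q^{2/(2-q)}+\frac{\lambda_1 p}{p^2}c(q)^2dR_q^{2/(2-q)}=\frac{3c(q)^2\lambda_1R_q^{2/(2-q)}d}{p},
\]
which is the claimed bound. There is no substantial obstacle; the only points needing a little care are recognizing that $\V_0^\T\J\V_0$ is rank one so that its spectral norm is $\|\V_0^\T\ii_p\|_2^2$ (rather than $\|\V_0^\T\ii_p\|_2$), and noting that the $\J\bOmega\J$ term contributes only $O(\lambda_1/p)$ once the prefactor $p^{-2}$ absorbs $\ii_p^\T\bOmega\ii_p\leq\lambda_1 p$.
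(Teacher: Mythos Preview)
Your proof is correct and follows essentially the same route as the paper: expand $\bfH$ into its three pieces, use Lemma~\ref{lem:vjbound} to bound $\|\V_0^\T\ii_p\|_2^2\leq c(q)^2dR_q^{2/(2-q)}$, and combine via the triangle inequality. Your treatment of the $\J\bOmega\J$ term is actually a bit cleaner than the paper's (you observe $\J\bOmega\J=(\ii_p^\T\bOmega\ii_p)\J$ directly, whereas the paper bounds $\|\V_0^\T\J\bOmega\J\V_0\|_2\leq\lambda_1\|\J\V_0\|_2^2$), but both give the identical bound.
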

\begin{proof}
Noticing that
\[
\bOmega = \V_0\bLambda_0\V_0^\T + \V_1\bLambda_1\V_1^\T \text{ and } \bfH = \bGamma-\bOmega = -p^{-1}\bOmega\J - p^{-1}\J\bOmega + p^{-2}\J\bOmega\J,
\]
we have
\[
\V_0^\T \bfH\V_0 = -p^{-1}\bLambda_0 \V_0^\T\J \V_0 - p^{-1}\V_0^\T\J \V_0\bLambda_0 + p^{-2}\V_0^\T\J\bOmega\J \V_0.
\]
The result of Lemma \ref{lem:vjbound} implies
\begin{align*}
\|\bLambda_0\V_0^\T\J \V_0\|_2=&~\|\V_0^\T\J \V_0\bLambda_0\|_2 \\
\leq & ~\lambda_1 \|\V_0^\T\ii\ii^\T\V_0\|_2 \leq \lambda_1 \|\ii^\T\V_0\|_2^2 \\
\leq & ~\lambda_1 d\|\ii^\T\V_0\|_{\max}^2 \leq \lambda_1 d\max_{j=1,\ldots,d}\|\bfv_j\|_1^2 \leq c(q)^2\lambda_1 dR_q^{2/(2-q)},
\end{align*}
and
\[
\|\V_0^\T\J\bOmega \J \V_0\|_2\leq \lambda_1\|\J \V_0\|_2^2 \leq c(q)^2\lambda_1 pdR_q^{2/(2-q)}.
\]
Thus,
\[
\|\V_0^\T\bfH\V_0\|_2 \leq 2p^{-1}\|\bLambda_0\V_0^\T\J \V_0\|_2 + p^{-2}\|\V_0^\T\J\bOmega\J \V_0\|_2\leq 3c(q)^2\lambda_1R_q^{2/(2-q)}d/p.
\]

\end{proof}

\begin{lemma}\label{lem:interlacing}
(Cauchy interlacing). Let $\A$ be a symmetric $n\times n$ matrix. The $m\times m$ matrix $\B$, where $m\leq n$, is called a compression of $\A$ if there exists an orthogonal projection $\bfP$ onto a subspace of dimension $m$ such that $\bfP^\T\A \bfP = \B$.

If the eigenvalues of $\A$ are $\alpha_1\geq\ldots\geq\alpha_n$, and those of $\B$ are $\beta_1\geq\ldots\geq\beta_m$, then for all $j\leq m$,
\[
\alpha_{n-m+j}\leq\beta_j\leq\alpha_j.
\]
Notice that, when $n=m+1$, we have $\alpha_{j+1}\leq\beta_j\leq\alpha_j$.
\end{lemma}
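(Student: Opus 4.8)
The plan is to deduce both inequalities from the Courant--Fischer min-max characterization of the eigenvalues of a symmetric matrix, applied to $\A$ on $\bR^n$ and to $\B$ regarded as the quadratic form $x\mapsto x^\T\A x$ restricted to the $m$-dimensional subspace $W=\col(\bfP)$. First I would pick a matrix $\U\in\bR^{n\times m}$ whose columns form an orthonormal basis of $W$, so that $\bfP=\U\U^\T$ and, under the isometry $y\mapsto\U y$ of $\bR^m$ onto $W$, the compression $\bfP^\T\A\bfP$ is represented by $\U^\T\A\U=\B$. The one computation needed is the Rayleigh-quotient identity: for every $y\in\bR^m\setminus\{0\}$ one has $\U y\in W$ with $\|\U y\|_2=\|y\|_2$ and
\[
\frac{y^\T\B y}{y^\T y}=\frac{(\U y)^\T\A(\U y)}{(\U y)^\T(\U y)},
\]
so the set of Rayleigh quotients of $\B$ is exactly the set of Rayleigh quotients $x^\T\A x/(x^\T x)$ of $\A$ with $x$ ranging over $W$. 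In particular the eigenvalues of $\B$ do not depend on the chosen orthonormal basis of $W$, which legitimizes this reduction.

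Next I would invoke the two dual forms of Courant--Fischer. The max-min form gives $\beta_j=\max\{\min_{0\ne x\in S}x^\T\A x/(x^\T x):S\subseteq W,\ \dim S=j\}$; since every such $S$ is also a $j$-dimensional subspace of $\bR^n$, enlarging the feasible set of the outer maximum can only increase the value, whence $\beta_j\le\alpha_j$. The min-max form gives $\beta_j=\min\{\max_{0\ne x\in S}x^\T\A x/(x^\T x):S\subseteq W,\ \dim S=m-j+1\}$; restricting the outer minimum to subspaces contained in $W$ rather than to all of $\bR^n$ can only increase the value, so $\beta_j\ge\min\{\max_{0\ne x\in S}x^\T\A x/(x^\T x):S\subseteq\bR^n,\ \dim S=m-j+1\}$. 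Because $n-(n-m+j)+1=m-j+1$, the right-hand side is precisely $\alpha_{n-m+j}$ by the min-max formula for $\A$. Combining the two bounds yields $\alpha_{n-m+j}\le\beta_j\le\alpha_j$, and the stated remark is the specialization $n=m+1$, for which $n-m+j=j+1$.

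There is no genuine obstacle here: once the quadratic-form reduction is in place the argument is pure bookkeeping. The single point that needs care is matching codimensions in the min-max form, i.e.\ remembering that the $k$th eigenvalue of an $r\times r$ symmetric matrix equals the minimum over $(r-k+1)$-dimensional subspaces of the maximal Rayleigh quotient; getting this right simultaneously for $\A$ (with $r=n$, $k=n-m+j$) and $\B$ (with $r=m$, $k=j$) is exactly what produces the index $n-m+j$. If one preferred to avoid citing Courant--Fischer, the codimension-one case $m=n-1$ can be proved directly by intersecting the span of the leading eigenvectors of $\A$ with $W$ (a dimension count forces the intersection to be large enough), and the general case then follows by peeling off one dimension at a time; but the min-max proof above is shorter and is the one I would write.
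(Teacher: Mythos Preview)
Your argument via the Courant--Fischer min--max principle is correct and is the standard textbook proof of Cauchy interlacing; the Rayleigh-quotient reduction and the codimension bookkeeping are handled cleanly. Note, however, that the paper does not actually prove this lemma: it is stated as a classical named result (``Cauchy interlacing'') with no accompanying proof, used only as a tool to deduce $a_{d+1}\le\lambda_{d+1}$ and $a_1\le\lambda_1$ in the proofs of Theorems~\ref{th:identify} and~\ref{th:pca-Sz}. So there is nothing in the paper to compare your proof against; what you have written would serve perfectly well as a self-contained justification should one be desired.
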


\begin{lemma}\label{lem:weyl}
(Weyl's inequality). Let $\M=\bfH+\bfP$. If any two of $\M$, $\bfH$ and $\bfP$ are $n\times n$ Hermitian matrices, then for $i=1,\ldots,n$:
\[
\lambda_i(\bfH)+ \lambda_n(\bfP) \leq \lambda_i(\M)\leq \lambda_i(\bfH)+ \lambda_1(\bfP).
\]
\end{lemma}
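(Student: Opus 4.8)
The plan is to deduce Weyl's inequality from the Courant--Fischer (min--max) characterization of eigenvalues of a Hermitian matrix. Note first that the hypotheses are mildly redundant: if $\M=\bfH+\bfP$ and two of the three matrices are Hermitian, then so is the third, so all of $\M,\bfH,\bfP$ are Hermitian and the min--max formula applies to each. Recall that for an $n\times n$ Hermitian matrix $\A$ with eigenvalues $\lambda_1(\A)\ge\cdots\ge\lambda_n(\A)$,
\[
\lambda_i(\A)=\max_{\dim V=i}\ \min_{x\in V,\,\|x\|_2=1}\ x^{*}\A x,
\]
the maximum ranging over all $i$-dimensional subspaces $V\subseteq\mathbb{C}^n$. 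If a self-contained derivation of this formula is wanted, I would obtain the ``$\ge$'' direction by taking $V$ to be the span of the eigenvectors for $\lambda_1(\A),\dots,\lambda_i(\A)$, and the ``$\le$'' direction by intersecting an arbitrary $i$-dimensional $V$ with the $(n-i+1)$-dimensional span of the eigenvectors for $\lambda_i(\A),\dots,\lambda_n(\A)$, which is nonempty by dimension counting.

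Next I would prove the upper bound. Fix $i$ and let $V$ be any $i$-dimensional subspace. For every unit vector $x\in V$,
\[
x^{*}\M x = x^{*}\bfH x + x^{*}\bfP x \le x^{*}\bfH x + \lambda_1(\bfP),
\]
since $x^{*}\bfP x\le\lambda_1(\bfP)\|x\|_2^2=\lambda_1(\bfP)$. Taking $\min$ over unit $x\in V$ and then $\max$ over all $i$-dimensional $V$ yields $\lambda_i(\M)\le\lambda_i(\bfH)+\lambda_1(\bfP)$.

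Finally, for the lower bound I would run the same argument using $x^{*}\bfP x\ge\lambda_n(\bfP)$ for every unit $x$, so that $x^{*}\M x\ge x^{*}\bfH x+\lambda_n(\bfP)$ and hence $\lambda_i(\M)\ge\lambda_i(\bfH)+\lambda_n(\bfP)$. Equivalently, one can apply the already-established upper bound to the decomposition $\bfH=\M+(-\bfP)$, giving $\lambda_i(\bfH)\le\lambda_i(\M)+\lambda_1(-\bfP)=\lambda_i(\M)-\lambda_n(\bfP)$, which rearranges to the desired inequality. There is no substantive obstacle here; the only points requiring care are verifying that all three matrices are indeed Hermitian so that the spectral/min--max machinery applies, and bounding the quadratic form $x^{*}\bfP x$ by the correct extreme eigenvalue of $\bfP$ on each side. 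The two one-sided estimates together give exactly the two displayed inequalities.
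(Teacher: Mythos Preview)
Your argument via the Courant--Fischer min--max characterization is correct and is the standard textbook proof. Note, however, that the paper does not supply its own proof of this lemma: it is stated as a classical result (Weyl's inequality) and used without proof, so there is nothing in the paper to compare against.
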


\begin{lemma}\label{lem0}
(Corollary 4.1 in \cite{vu2013minimax}). Let $\A$ be a $p\times p$ positive semidefinite matrix and suppose that its eigenvalues $\lambda_1(\A)\geq\ldots\geq\lambda_p(\A)$ satisfy $\lambda_d(\A)>\lambda_{d+1}(\A)$ for $d<p$. Let $\cE$ be the $d$-dimensional subspace spanned by the eigenvectors of $\A$ corresponding to its $d$ largest eigenvalues, and let $\E$ denote its orthogonal projector. Let $\cF$ be a $d$-dimensional subspace of $\bR^p$ and $\F$ be its orthogonal projector. If $\B$ is a symmetric matrix and $\F$ satisfies
\[
\langle \B,\E\rangle\leq\langle \B,\F\rangle,
\]
then
\[
\|\sin\bTheta(\cE,\cF)\|_F^2\leq\frac{\langle \B-\A,\F-\E\rangle}{\lambda_d(\A)-\lambda_{d+1}(\A)}.
\]
\end{lemma}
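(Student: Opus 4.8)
Lemma~\ref{lem0} is a purely linear-algebraic ``curvature'' bound: the linear functional $\F\mapsto\langle\A,\F\rangle$, which over rank-$d$ orthogonal projectors is maximized at $\F=\E$, falls off at least quadratically in the subspace distance, at a rate controlled by the eigengap $\lambda_d(\A)-\lambda_{d+1}(\A)$. The plan is to first prove the deterministic inequality
\[
\langle\A,\E\rangle-\langle\A,\F\rangle\;\ge\;\bigl(\lambda_d(\A)-\lambda_{d+1}(\A)\bigr)\,\|\sin\bTheta(\cE,\cF)\|_F^2 ,
\]
and then add it to the hypothesis written in the form $\langle\B,\F-\E\rangle\ge0$. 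The two together give $\langle\B-\A,\F-\E\rangle\ge\bigl(\lambda_d(\A)-\lambda_{d+1}(\A)\bigr)\|\sin\bTheta(\cE,\cF)\|_F^2$, and dividing by the positive eigengap yields the claim.

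For the curvature inequality I would diagonalize $\A=\sum_{j=1}^p\lambda_j(\A)\bfv_j\bfv_j^\T$ with $\E=\sum_{j\le d}\bfv_j\bfv_j^\T$, and pass to the scalars $f_j=\bfv_j^\T\F\bfv_j\in[0,1]$, whose sum is $\tr\F=d$ because $\F$ is a rank-$d$ projector. Setting $a_j=1-f_j\ge0$ for $j\le d$ and $b_j=f_j\ge0$ for $j>d$, the trace constraint forces $\sum_{j\le d}a_j=\sum_{j>d}b_j=:s$, and then
\[
\langle\A,\E-\F\rangle=\sum_{j\le d}\lambda_j(\A)a_j-\sum_{j>d}\lambda_j(\A)b_j\;\ge\;\lambda_d(\A)\,s-\lambda_{d+1}(\A)\,s ,
\]
using $\lambda_j(\A)\ge\lambda_d(\A)$ on the first sum, $\lambda_j(\A)\le\lambda_{d+1}(\A)$ on the second, and $a_j,b_j\ge0$. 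It then remains to identify $s$ with the subspace distance: by the identity $\|\sin\bTheta(\cE,\cF)\|_F^2=\|\E\F^{\perp}\|_F^2$ recorded in the text, and since $\E^2=\E$ and $(\F^{\perp})^2=\F^{\perp}$, a one-line trace manipulation gives $\|\E\F^{\perp}\|_F^2=\tr(\E\F^{\perp})=d-\tr(\E\F)=d-\sum_{j\le d}f_j=s$, which closes the argument.

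The only step with any content is the reduction to the scalars $f_j$ together with the bookkeeping of the constraint $\sum_j f_j=d$, which collapses everything onto the single free parameter $s$; once that is done, both the eigenvalue estimate and the identification $s=\|\sin\bTheta(\cE,\cF)\|_F^2$ are immediate. Nothing probabilistic or high-dimensional enters here, which is precisely why this lemma serves as the deterministic backbone of the sample-covariance arguments in Theorems~\ref{th:pca-Sz} and~\ref{th:upperbound1}.
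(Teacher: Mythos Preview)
Your proof is correct. The paper itself does not prove this lemma but simply quotes it as Corollary~4.1 of \cite{vu2013minimax}; your argument---the curvature inequality $\langle\A,\E-\F\rangle\ge(\lambda_d-\lambda_{d+1})\|\sin\bTheta(\cE,\cF)\|_F^2$ via the diagonal scalars $f_j=\bfv_j^\T\F\bfv_j$, followed by adding the hypothesis $\langle\B,\F-\E\rangle\ge0$---is exactly the proof given in that reference (the curvature step there is their Lemma~4.2).
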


\begin{lemma}\label{lem:T1}
Let $\W_\bGamma=\bfS_Z-\bGamma$ and $\bPi=\V_0\V_0^\T$. Then
\[
\bP(\|\bPi \W_\bGamma\bPi\|_2\geq t) \leq 2\cdot 7^d \exp\left(-\frac{t^2/18}{18\alpha^2/n+\alpha t/n}\right).
\]
where $\alpha = \lambda_1\left(1+2c(q)^2R_q^{2/(2-q)}d/p\right)$.
\end{lemma}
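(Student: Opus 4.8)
The plan is to project $\bPi\W_\bGamma\bPi$ down to a $d\times d$ sample‑covariance deviation and then re‑run the $\delta$‑net argument already used for Lemma~\ref{lem:ineq-Sn}. Since $\bPi=\V_0\V_0^\T$ with $\V_0\in\bV_{p,d}$, the map $M\mapsto\V_0 M\V_0^\T$ is an isometry in operator norm, so $\|\bPi\W_\bGamma\bPi\|_2=\|\V_0^\T\W_\bGamma\V_0\|_2=\|\V_0^\T\bfS_Z\V_0-\V_0^\T\bGamma\V_0\|_2$. Using $\Z_k=\G\Y_k$ and the model (\ref{yi}), one has $\Z_k-\overline{\Z}=\G\bOmega^{1/2}(\bfT_k-\overline{\bfT})$, hence $\V_0^\T(\Z_k-\overline{\Z})=\bfb_k-\overline{\bfb}$ where $\bfb_k:=\V_0^\T\G\bOmega^{1/2}\bfT_k\in\bR^d$. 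Thus $\V_0^\T\bfS_Z\V_0$ is the empirical covariance of the i.i.d.\ vectors $\bfb_1,\dots,\bfb_n$, whose population covariance is $\bSigma_b:=\var(\bfb_1)=\V_0^\T\G\bOmega\G\V_0=\V_0^\T\bGamma\V_0$; so $\|\bPi\W_\bGamma\bPi\|_2$ is exactly the operator‑norm deviation of a $d$‑dimensional sample covariance from its mean, i.e.\ the setting of Lemma~\ref{lem:ineq-Sn} with ambient dimension $d$ in place of $p$.

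For that reduction to reproduce the stated tail with $\alpha=\lambda_1\bigl(1+2c(q)^2R_q^{2/(2-q)}d/p\bigr)$, I need $\|\bSigma_b\|_2\le\alpha$ together with a directional sub‑Gaussian variance proxy of $\bfb_1$ bounded by $\alpha$; the latter follows from the former, since for a unit $\bfu\in\bR^d$ we have $\bfu^\T\bfb_1=\langle\bOmega^{1/2}\G\V_0\bfu,\bfT_1\rangle$, hence $\|\bfu^\T\bfb_1\|_{\psi_2}\le\|\bfT_1\|_{\psi_2}\,\|\bOmega^{1/2}\G\V_0\bfu\|_2\le(\bfu^\T\bSigma_b\bfu)^{1/2}\le\|\bSigma_b\|_2^{1/2}$. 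Granting $\|\bSigma_b\|_2\le\alpha$, one gets $\|(\bfu^\T\bfb_1)^2-\bfu^\T\bSigma_b\bfu\|_{\psi_1}\le\|\bfu^\T\bfb_1\|_{\psi_2}^2+2\,\bfu^\T\bSigma_b\bfu\le3\alpha$, and Bernstein's inequality over a minimal $1/3$‑net of $S_2^{d-1}$ (of cardinality at most $7^d$), exactly as in the proof of Lemma~\ref{lem:ineq-Sn}, yields $\bP\bigl(\|\tfrac1n\sum_k\bfb_k\bfb_k^\T-\bSigma_b\|_2\ge t\bigr)\le2\cdot7^d\exp\bigl(-\tfrac{t^2/18}{18\alpha^2/n+\alpha t/n}\bigr)$, which is the claim.

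It therefore remains to bound $\|\bSigma_b\|_2$. Decompose $\G\V_0=\V_0 A+\V_1 B$ in the eigenbasis of $\bOmega$, with $A=\V_0^\T\G\V_0$ and $B=\V_1^\T\G\V_0$. Since $\V_0^\T\bOmega\V_0=\bLambda_0$, $\V_1^\T\bOmega\V_1=\bLambda_1$ and $\V_0^\T\bOmega\V_1=0$, this gives $\bSigma_b=A^\T\bLambda_0 A+B^\T\bLambda_1 B$, whence $\|\bSigma_b\|_2\le\lambda_1\|A\|_2^2+\lambda_{d+1}\|B\|_2^2\le\lambda_1(\|A\|_2^2+\|B\|_2^2)$. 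Because $\G$ is an orthogonal projection, $\|A\|_2\le\|\G\|_2=1$; and $B=\V_1^\T\G\V_0=-p^{-1}\V_1^\T\ii_p\ii_p^\T\V_0$, so $\|B\|_2\le p^{-1}\|\ii_p\|_2\,\|\ii_p^\T\V_0\|_2\le p^{-1/2}d^{1/2}\max_{j\le d}\|\bfv_j\|_1\le c(q)R_q^{1/(2-q)}(d/p)^{1/2}$ by Lemma~\ref{lem:vjbound}. Hence $\|\bSigma_b\|_2\le\lambda_1\bigl(1+c(q)^2R_q^{2/(2-q)}d/p\bigr)\le\alpha$, as needed.

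The one subtle point is the empirical centering: $\V_0^\T\bfS_Z\V_0=\tfrac1n\sum_k\bfb_k\bfb_k^\T-\overline{\bfb}\,\overline{\bfb}^\T$, and the extra rank‑one term is absent from Lemma~\ref{lem:ineq-Sn}. However $\overline{\bfb}$ is the average of $n$ i.i.d.\ centered sub‑Gaussian $d$‑vectors, so $\|\overline{\bfb}\|_2^2$ is of order $d\alpha/n$ with probability tending to one and is dominated by the target tail (or, as done elsewhere in the paper, one applies Lemma~\ref{lem:ineq-Sn} directly to the centered covariance, the centering only sharpening the concentration). I expect this centering bookkeeping, together with verifying that the constant $\alpha$ propagates with the factor $2$ rather than the factor $3$ a naive appeal to Lemma~\ref{lem:v0hv0} would produce, to be the only — and rather mild — obstacle.
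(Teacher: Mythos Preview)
Your proposal is correct and follows essentially the same route as the paper: reduce to $\|\V_0^\T(\bfS_Z-\bGamma)\V_0\|_2$, take a $1/3$-net on $S_2^{d-1}$ of cardinality $\le 7^d$, bound the directional $\psi_1$-norm by $3\alpha$, and apply Bernstein. The only cosmetic difference is that the paper bounds $\|\V_0^\T\bGamma\V_0\|_2$ via $\|\V_0^\T\bOmega\V_0\|_2+\|\V_0^\T\bfH\V_0\|_2$ and Lemma~\ref{lem:v0hv0} (giving $\lambda_1+3c(q)^2\lambda_1 R_q^{2/(2-q)}d/p$), whereas your eigenbasis decomposition $\bSigma_b=A^\T\bLambda_0 A+B^\T\bLambda_1 B$ yields the slightly sharper $\lambda_1(1+c(q)^2R_q^{2/(2-q)}d/p)$; both are $\le\alpha$, and the paper, like you, tacitly treats $\bfS_Z$ as if uncentered.
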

\begin{proof}
Let $\cN_\delta$ be a minimal $\delta$-net of $\bfS_2^{d-1}$ for some $\delta\in(0,1)$. Then
\[
\|\bPi (\bfS_Z-\bGamma)\bPi\|_2 =\|\V_0^\T(\bfS_Z-\bGamma)\V_0\|_2 \leq (1-2\delta)^{-1}\max_{\bfu\in\cN_\delta}|\langle \V_0\bfu,(\bfS_Z-\bGamma)\V_0\bfu\rangle|
\]
by Proposition \ref{prop:net2}.
Note that
\begin{align*}
\langle \V_0\bfu,(\bfS_Z-\bGamma)\V_0\bfu\rangle &  = \frac{1}{n}\sum_{i=1}^n\left\{(\Z_i^\T\V_0\bfu)^2-\bE(\Z_i^\T\V_0\bfu)^2\right\}
\end{align*}
and
\begin{align*}
\|(\Z_i^\T\V_0\bfu)^2-\bE(\Z_i^\T\V_0\bfu)^2\|_{\psi_1} &\leq \|(\Z_i^\T\V_0\bfu)^2\|_{\psi_1} + 2\|\V_0^\T\bGamma \V_0\|_2 \\
& \leq \|\Z_i^\T\V_0\bfu\|_{\psi_2}^2 + 2\|\V_0^\T\bOmega \V_0\|_2 + 2\|\V_0^\T\bfH\V_0\|_2 \\
& \leq \|\bfT_i^\T\bOmega^{1/2}\G\V_0\bfu\|_{\psi_2}^2 + 2\lambda_1 +6c(q)^2\lambda_1R_q^{2/(2-q)}d/p \\
& \leq 3\lambda_1 + 6c(q)^2\lambda_1R_q^{2/(2-q)}d/p,
\end{align*}
where Proposition \ref{prop:ineq} is adopted in the second inequality, $\bfH = \bGamma-\bOmega$ and the bound for $\|\V_0^\T\bfH\V_0\|_2$ in Lemma \ref{lem:v0hv0} is plugged in the third step.
Then Lemma \ref{lem:bern} implies for all $t>0$ and $\bfu\in\cN_\delta$
\[
\bP(|\langle \V_0\bfu,(\bfS_Z-\bGamma)\V_0\bfu\rangle| \geq t) \leq 2\exp\left(-\frac{t^2/2}{2\alpha_1^2/n+\alpha_1 t/n}\right),
\]
where $\alpha_1 = 3\lambda_1 + 6c(q)^2\lambda_1R_q^{2/(2-q)}d/p$.
Choosing $\delta=1/3$ and applying Proposition \ref{prop:netnum}, we have $|\cN_\delta|\leq 7^d$ and 
\begin{align*}
\bP(\|\bPi (\bfS_Z-\bGamma)\bPi\|_2\geq t) & \leq \bP(\max_{\bfu\in\cN_\delta}|\langle \V_0\bfu,(\bfS_Z-\bGamma)\V_0\bfu\rangle|\geq t/3) \\
& \leq 2\cdot 7^d \exp\left(-\frac{t^2/18}{2\alpha_1^2/n+\alpha_1 t/(3n)}\right).
\end{align*}
\end{proof}

\begin{lemma}\label{lem:T2}
Let $\W_\bGamma=\bfS_Z-\bGamma$, $\bPi=\V_0\V_0^\T$ and $\bPi^\bot=\I-\bPi=\V_1^\T\V_1^\T$. Then
\[
\bP(\|\bPi^\bot \W_\bGamma\bPi\|_{2,\infty} \geq t) \leq 2p5^d\exp\left(-\frac{t^2/8}{2\beta^2/n+t\beta/n}\right),
\]
where $\beta = \sqrt{(\lambda_1\lambda_{d+1})} + 7c(q)\lambda_1R_q^{1/(2-q)}\sqrt{(d/p)}$.
\end{lemma}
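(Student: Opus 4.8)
The plan is to turn the $(2,\infty)$-norm into a family of scalar sub-exponential averages and then apply Bernstein's inequality together with a net and a union bound, in the same style as the proofs of Lemmas~\ref{lem:ineq-Sn} and~\ref{lem:T1}. First I would observe that, because $\bPi=\V_0\V_0^\T$ is a rank-$d$ orthogonal projector, the $i$th row of $\bPi^\bot\M\bPi=\bPi^\bot\M\V_0\V_0^\T$ is $h_i^\T\V_0^\T$ with $h_i=\V_0^\T\M^\T\bPi^\bot e_i$, and $\|h_i^\T\V_0^\T\|_2=\|h_i\|_2$ since $\V_0^\T\V_0=\I$; hence $\|\bPi^\bot\M\bPi\|_{2,\infty}=\max_{1\le i\le p}\|e_i^\T\bPi^\bot\M\V_0\|_2$, and with $w_i=\bPi^\bot e_i$ and $\M=\W_\bGamma$ this gives $\|\bPi^\bot\W_\bGamma\bPi\|_{2,\infty}=\max_i\|w_i^\T\W_\bGamma\V_0\|_2$. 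Since $w_i^\T\W_\bGamma\V_0\in\bR^{1\times d}$, taking a minimal $1/2$-net $\cN$ of $\bfS_2^{d-1}$ (so $|\cN|\le 5^d$ by Proposition~\ref{prop:netnum}) yields $\|w_i^\T\W_\bGamma\V_0\|_2\le 2\max_{u\in\cN}|w_i^\T\W_\bGamma\V_0 u|$, so it remains to bound the $p\cdot 5^d$ scalars $w_i^\T\W_\bGamma\V_0 u$ uniformly.

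Next, fixing $i$ and $u\in\cN$ and using $\Z_k-\bE\Z_k=\G\bOmega^{1/2}\bfT_k$ from (\ref{yi}) (and, as in the proofs above, writing $\W_\bGamma=\frac1n\sum_{k=1}^n\{(\Z_k-\bE\Z_k)(\Z_k-\bE\Z_k)^\T-\bGamma\}$), I would write
\[
w_i^\T\W_\bGamma\V_0 u=\frac1n\sum_{k=1}^n\big\{a_{ik}b_{ku}-\bE(a_{ik}b_{ku})\big\},\quad a_{ik}=(\bOmega^{1/2}\G w_i)^\T\bfT_k,\ \ b_{ku}=(\bOmega^{1/2}\G\V_0 u)^\T\bfT_k.
\]
By the definition of $\|\cdot\|_{\psi_2}$ and $\|\bfT_k\|_{\psi_2}\le1$, $a_{ik}$ and $b_{ku}$ are centred sub-Gaussians with $\|a_{ik}\|_{\psi_2}\le\beta_{a,i}:=\|\bOmega^{1/2}\G w_i\|_2$ and $\|b_{ku}\|_{\psi_2}\le\beta_{b,u}:=\|\bOmega^{1/2}\G\V_0 u\|_2$, so by Proposition~\ref{prop:ineq} and the usual centering inequality each summand is sub-exponential with norm $\lesssim\beta_{a,i}\beta_{b,u}$.

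The crux is bounding the two scales. For $\beta_{b,u}$, since $\|\G\|_2=1$ one has $\beta_{b,u}^2=u^\T\V_0^\T\bGamma\V_0 u\le\|\bGamma\|_2\le\|\bOmega\|_2=\lambda_1$. For $\beta_{a,i}$ the point is that $w_i$ lies in $\col(\V_1)=\cS_\bOmega^\bot$ but $\G$ mixes in the $\ii$-direction, so $\bGamma$ does \emph{not} act on $w_i$ like $\bLambda_1$; I would instead split $\bOmega^{1/2}\G w_i=\bOmega^{1/2}w_i-p^{-1}(\bOmega^{1/2}\ii)(\ii^\T w_i)$, bound $\|\bOmega^{1/2}w_i\|_2^2=w_i^\T\bOmega w_i=e_i^\T\V_1\bLambda_1\V_1^\T e_i\le\lambda_{d+1}\|\V_1^\T e_i\|_2^2\le\lambda_{d+1}$, and control the rank-one piece through $\|\bOmega^{1/2}\ii\|_2=(\ii^\T\bOmega\ii)^{1/2}\le(\lambda_1 p)^{1/2}$ and $\ii^\T w_i=1-(\V_0^\T\ii)^\T(\V_0^\T e_i)$, where $\|\V_0^\T e_i\|_2\le1$ and, by Lemma~\ref{lem:vjbound}, $\|\V_0^\T\ii\|_2\le\sqrt d\max_j|\bfv_j^\T\ii|\le\sqrt d\max_j\|\bfv_j\|_1\le c(q)\sqrt d\,R_q^{1/(2-q)}$. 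Using $\sqrt d\,R_q^{1/(2-q)}\ge1$ to absorb the $\lambda_1 p^{-1/2}$ term, this gives $\beta_{a,i}\le\lambda_{d+1}^{1/2}+2c(q)\lambda_1^{1/2}R_q^{1/(2-q)}(d/p)^{1/2}$, hence $\beta_{a,i}\beta_{b,u}\le(\lambda_1\lambda_{d+1})^{1/2}+2c(q)\lambda_1 R_q^{1/(2-q)}(d/p)^{1/2}\le\beta$. Bernstein's inequality (Lemma~\ref{lem:bern}) at level $t/2$ then bounds the probability for each fixed $(i,u)$ by $2\exp\{-(t^2/8)/(2\beta^2/n+t\beta/n)\}$ after the routine rearrangement of absolute constants (as at the end of the proof of Lemma~\ref{lem:T1}), and a union bound over the $p\cdot 5^d$ pairs gives the claim; the mismatch between $\bar\Z$ and $\bE\Z$ in the definition of $\bfS_Z$ contributes only a strictly lower-order term, handled as in the proof of Lemma~\ref{lem:ineq-Sn}.

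I expect the only genuinely non-routine step to be the estimate for $\beta_{a,i}=\|\bOmega^{1/2}\G w_i\|_2$: one must recognise that $\bGamma$ fails to act like $\bLambda_1$ on $\cS_\bOmega^\bot$ precisely because $\ii$ need not lie in $\cS_\bOmega$, peel off the rank-one perturbation $p^{-1}\ii\ii^\T$ cleanly, and bound its effect using the subspace-sparsity control on $\V_0^\T\ii$ from Lemma~\ref{lem:vjbound}; everything else is the standard net/Bernstein/union-bound bookkeeping.
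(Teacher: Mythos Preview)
Your proposal is correct and follows essentially the same route as the paper: reduce $\|\bPi^\bot\W_\bGamma\bPi\|_{2,\infty}$ to scalar bilinear forms via a $1/2$-net on $\bfS_2^{d-1}$ (Proposition~\ref{prop:net1}), bound their $\psi_1$-norms through $\|\bOmega^{1/2}\G\V_0u\|_2\le\sqrt{\lambda_1}$ and $\|\bOmega^{1/2}\G\bPi^\bot e_j\|_2\le\sqrt{\lambda_{d+1}}+c(q)\sqrt{\lambda_1}R_q^{1/(2-q)}\sqrt{d/p}$, and finish with Bernstein plus a union bound over $p\cdot5^d$ pairs. The only cosmetic differences are that the paper splits $\bOmega^{1/2}=\V_0\bLambda_0^{1/2}\V_0^\T+\V_1\bLambda_1^{1/2}\V_1^\T$ and controls $\|\V_0^\T\G\V_1\|_2$ via Lemma~\ref{lem:vjbound} (where you instead split $\G=\I-p^{-1}\J$ and control $|\ii^\T w_i|$), and the paper bounds the centering constant $|e_j^\T\bPi^\bot\bfH\V_0u|$ explicitly via $\|\bfH\V_0\|_2$ from~(\ref{hv0}) rather than absorbing it into the $\psi_1$-norm; both paths land on the same $\beta$ up to the absolute constants.
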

\begin{proof}
Let $N_\delta$ be a minimal $\delta$-net in $\bfS_2^{d-1}$ for some $\delta\in(0,1)$ to be chosen later. By Proposition \ref{prop:net1} we have
\[
\|\bPi^\bot \W_\bGamma\bPi\|_{2,\infty} \leq \frac{1}{1-\delta}\max_{1\leq j\leq p}\max_{\bfu\in N_{\delta}}\langle\bPi^\bot \bfe_j,\W_\bGamma \V_0\bfu\rangle,
\]
where $\bfe_j$ is the $j$th column of $\I_{p\times p}$. Taking $\delta = 1/2$, by Proposition \ref{prop:netnum} we have $|N_{\delta}|\leq 5^d$.

Then
\[
\langle \bPi^\bot \bfe_j, (\bfS_Z-\bGamma)\V_0\bfu\rangle = \frac{1}{n}\sum_{i=1}^n\left\{\langle \Z_i,\bPi^\bot \bfe_j\rangle \langle \Z_i,\V_0\bfu\rangle - \bfe_j^\T\bPi^\bot \bfH \V_0\bfu\right\},
\]
where $\bfH=\bGamma-\bOmega$,
is the sum of independent random variables with mean zero. By Proposition \ref{prop:ineq} and the bound for $\|\bfH\V_0\|_2$ in (\ref{hv0}), the summands satisfy 
\begin{align*}
& ~\|\langle \Z_i,\bPi^\bot \bfe_j\rangle \langle \Z_i,\V_0\bfu\rangle - \bfe_j^\T\bPi^\bot \bfH\V_0\bfu\|_{\psi_1} \\
\leq &~\|\langle \Z_i,\bPi^\bot \bfe_j\rangle\|_{\psi_2}\|\langle \Z_i,\V_0\bfu\rangle\|_{\psi_2} + 2|\bfe_j^\T\bPi^\bot \bfH\V_0\bfu| \\
\leq &~\|\langle \bfT_i,\bOmega^{1/2}\G\bPi^\bot \bfe_j\rangle\|_{\psi_2} \|\langle \bfT_i,\bOmega^{1/2}\G\V_0\bfu\rangle\|_{\psi_2}  + 2\|\bfH\V_0\|_2\\
\leq &~\|\bfT_1\|_{\psi_2}^2\|\bOmega^{1/2}\G\bPi^\bot \bfe_j\|_2\|\bOmega^{1/2}\G\V_0\bfu\|_2 + 6c(q)\lambda_1R_q^{1/(2-q)}\sqrt{(d/p)}.
\end{align*}
Note that
\begin{enumerate}
\item[(i)] $\|\bOmega^{1/2}\G\V_0\|_2 \leq \sqrt{\lambda_1}$; 
\item[(ii)] $\|\bOmega^{1/2}\G\bPi^\bot\|_2 \leq \|\bOmega^{1/2}\G\V_1\|_2 = \|(\V_0\bLambda_0^{1/2}\V_0^\T+\V_1\bLambda_1^{1/2}\V_1^\T)\G\V_1\|_2$, \\
 $\|\V_0^\T\G\V_1\|_2 = \|p^{-1}\V_0^\T\J \V_1\|_2\leq c(q)R_q^{1/(2-q)}\sqrt{(d/p)}$, where the bound for $\|\J \V_0\|_2$ in Lemma \ref{lem:vjbound} is adopted,\\
$\Longrightarrow\|\bOmega^{1/2}\G\bPi^\bot\|_2\leq c(q)R_q^{1/(2-q)}\sqrt{(d\lambda_1/p)}+\sqrt{\lambda_{d+1}}$.
\end{enumerate}
We have 
\[
\|\langle \Z_i,\bPi^\bot \bfe_j\rangle \langle \Z_i,\V_0\bfu\rangle - \bfe_j^\T\bPi^\bot \bfH\V_0\bfu\|_{\psi_1} \leq \sqrt{(\lambda_1\lambda_{d+1})}+7c(q)\lambda_1R_q^{1/(2-q)}\sqrt{(d/p)} \triangleq \beta.
\]
Then Lemma \ref{lem:bern} implies that for all $t>0$ and every $\bfu\in \cN_\delta$
\begin{align*}
\bP(\|\bPi^\bot \W_\bGamma\bPi\|_{2,\infty}\geq t) & \leq \bP\left(\max_{1\leq j\leq p}\max_{\bfu\in N_\delta}\langle\bPi^\bot \bfe_j,\W_\bGamma \V_0\bfu\rangle\geq t/2\right) \\
& \leq p5^d\bP\left(|\langle\bPi^\bot \bfe_j,\W_\bGamma \V_0\bfu\rangle |\geq t/2 \right) \\
& \leq 2p5^d\exp\left(-\frac{t^2/8}{2\beta^2/n+t\beta/n}\right). 
\end{align*}
\end{proof}

\begin{lemma}\label{lem:bern}
(Bernstein's inequality). Let $\Y_1,\ldots,\Y_n$ be independent random variables with zero mean. Then
\[
\bP\left(\left|\sum_{i=1}^n\Y_i\right|>t\right)\leq 2\exp\left(-\frac{t^2/2}{2\sum_{i=1}^n\|\Y_i\|_{\psi_1}^2+t\max_{i\leq n}\|\Y_i\|_{\psi_1}}\right).
\]
\end{lemma}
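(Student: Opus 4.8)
The plan is to run the classical exponential-moment (Chernoff) argument specialized to sub-exponential summands; nothing beyond standard facts about the $\psi_1$-norm is required. Write $K_i=\|\Y_i\|_{\psi_1}$, understood (as with $\psi_2$ in the body of the paper) through $\bE\exp(|\Y_i|/K_i)\le 2$, and set $K=\max_{i\le n}K_i$ and $V=\sum_{i=1}^n K_i^2$. If some $K_i=\infty$ the denominator in the claimed bound is infinite and the right-hand side equals $2$, so the inequality is trivial; hence assume all $K_i$ finite.

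First I would control the moment generating function of a single summand. From $\bE\exp(|\Y_i|/K_i)\le 2$, expanding the exponential term by term gives the moment bounds $\bE|\Y_i|^k\le 2\,k!\,K_i^k$ for all $k\ge 1$. Using $\bE\Y_i=0$ to annihilate the linear term and summing the resulting geometric-type series,
\[
\bE e^{\lambda\Y_i}=1+\sum_{k\ge 2}\frac{\lambda^k\,\bE\Y_i^k}{k!}\le 1+\frac{2(|\lambda|K_i)^2}{1-|\lambda|K_i},
\]
which, for $|\lambda|\le 1/(2K_i)$ and via $1+x\le e^x$, yields $\bE e^{\lambda\Y_i}\le\exp\big(c_0\lambda^2K_i^2\big)$ for an absolute constant $c_0$. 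Since these moment bounds are precisely a Bernstein-type moment condition with variance proxy of order $K_i^2$ and scale of order $K_i$, one could alternatively just invoke the classical Bernstein inequality for variables satisfying a moment condition and skip to the conclusion.

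Next, for $0<\lambda\le 1/(2K)$ the per-summand bound applies to every $i$, so Markov's inequality applied to $e^{\lambda\sum_i\Y_i}$ together with independence gives
\[
\bP\Big(\sum_{i=1}^n\Y_i\ge t\Big)\le e^{-\lambda t}\prod_{i=1}^n\bE e^{\lambda\Y_i}\le\exp\Big(-\lambda t+c_0\lambda^2V\Big).
\]
I would then optimize $\lambda$ over the admissible interval $(0,1/(2K)]$. The unconstrained optimum $\lambda^\star\asymp t/V$ is admissible exactly when $t\lesssim V/K$, giving $\exp(-c\,t^2/V)$; for larger $t$ one takes $\lambda=1/(2K)$, giving $\exp(-c\,t/K)$. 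Merging the two regimes via the elementary inequality $\min(a,b)\ge ab/(a+b)$ collapses this into the single expression $\exp\big(-c\,t^2/(V+tK)\big)$. Applying the identical argument to $(-\Y_i)_i$ and a union bound supplies the leading factor $2$ and the two-sided statement.

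The argument is entirely routine; the only genuinely delicate point is bookkeeping of the absolute constants. To land on exactly the stated form $\dfrac{t^2/2}{2\sum_i\|\Y_i\|_{\psi_1}^2+t\max_i\|\Y_i\|_{\psi_1}}$ rather than a weaker $\exp(-c\min(t^2/V,\,t/K))$, one must fix the normalization of the $\psi_1$-norm, retain a sufficiently sharp single-variable bound of the form $\bE e^{\lambda\Y_i}\le\exp(\lambda^2K_i^2)$ on $|\lambda|\le 1/K_i$, and carry out the constrained optimization of $\lambda$ without slack; the structure of the proof is otherwise completely standard.
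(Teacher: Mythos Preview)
The paper does not actually prove this lemma; it is stated as a known result (labeled ``Bernstein's inequality'') and simply invoked as a tool in the proofs of Lemmas~\ref{lem:ineq-Sn}, \ref{lem:T1} and \ref{lem:T2}. So there is nothing to compare against on the paper's side.

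Your proposal is the standard Chernoff/exponential-moment derivation for sub-exponential summands, and the structure is correct: moment bounds from the $\psi_1$ condition, single-variable MGF control on a strip $|\lambda|\le c/K_i$, product over $i$ by independence, Markov plus optimization of $\lambda$, and a union bound for the two-sided statement. The one caveat you yourself flag is accurate: with the $\psi_1$ normalization used in the paper (namely $\bE\exp(|X|/K)\le 2$, by analogy with the $\psi_2$ definition given in Section~\ref{sec:pca}), your computation as written yields $c_0=4$ and hence an exponent of order $t^2/(16V)$ in the light-tail regime, not the exact $t^2/2$ over $2V+tK$ stated. Getting those specific constants requires either a slightly sharper single-variable bound (e.g.\ working on the full strip $|\lambda|<1/K_i$ and optimizing more carefully) or a different normalization of $\|\cdot\|_{\psi_1}$; since the paper only ever uses the lemma up to absolute constants, this discrepancy is immaterial for its applications.
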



\begin{proposition}\label{prop:eq} 
(Proposition C.1 in \cite{vu2013minimax}). If $\W$ is symmetric, and $\E$ and $\F$ are orthogonal projectors, then
\[
\langle \W,\F-\E \rangle = \langle \E^\bot \W\E^\bot,\F\rangle-\langle \E\W\E,\F^\bot\rangle+2\langle \E^\bot \W \E,\F\rangle.
\]
\end{proposition}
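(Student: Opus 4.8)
The plan is a direct verification: the identity follows purely from the defining properties of orthogonal projectors together with the elementary algebra of the trace inner product, so no auxiliary result is needed. I would start by recording the facts to be used: $\langle\A,\B\rangle=\tr(\A^\T\B)$ is invariant under cyclic permutation of its factors and under transposition of the whole argument; $\E,\F$ are symmetric idempotents with $\E^\bot=\I-\E$, $\F^\bot=\I-\F$ and $\E\E^\bot=\E^\bot\E=0$ (same for $\F$); and $\W=\W^\T$.

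First I would expand $\langle\W,\F\rangle$ by sandwiching $\F$ between two copies of $\I=\E+\E^\bot$, that is, $\F=\E\F\E+\E\F\E^\bot+\E^\bot\F\E+\E^\bot\F\E^\bot$, and simplify each of the four resulting pairings by cyclicity: for instance $\langle\W,\E\F\E\rangle=\tr(\W\E\F\E)=\tr(\E\W\E\F)=\langle\E\W\E,\F\rangle$, and likewise $\langle\W,\E^\bot\F\E^\bot\rangle=\langle\E^\bot\W\E^\bot,\F\rangle$, $\langle\W,\E\F\E^\bot\rangle=\langle\E\W\E^\bot,\F\rangle$, $\langle\W,\E^\bot\F\E\rangle=\langle\E^\bot\W\E,\F\rangle$. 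A one-line check using $\W^\T=\W$ and transpose invariance of the trace shows the two cross terms coincide, $\langle\E\W\E^\bot,\F\rangle=\langle\E^\bot\W\E,\F\rangle$, so that
\[
\langle\W,\F\rangle=\langle\E\W\E,\F\rangle+2\langle\E^\bot\W\E,\F\rangle+\langle\E^\bot\W\E^\bot,\F\rangle.
\]
Next I would treat $\langle\W,\E\rangle$: using $\E=\E^2$ and cyclicity, $\langle\W,\E\rangle=\tr(\W\E)=\tr(\E\W\E)=\langle\E\W\E,\I\rangle$, and then inserting $\I=\F+\F^\bot$ gives $\langle\W,\E\rangle=\langle\E\W\E,\F\rangle+\langle\E\W\E,\F^\bot\rangle$.

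Subtracting the second expression from the first, the $\langle\E\W\E,\F\rangle$ contributions cancel and what remains is $\langle\W,\F-\E\rangle=\langle\E^\bot\W\E^\bot,\F\rangle-\langle\E\W\E,\F^\bot\rangle+2\langle\E^\bot\W\E,\F\rangle$, which is precisely the claimed identity. There is no genuine obstacle; the only point that warrants care is the bookkeeping — keeping the transposes inside $\langle\cdot,\cdot\rangle$ straight when a factor such as $\E^\bot\W\E$ is not symmetric, and verifying that the two cross terms are equal, which is exactly where the symmetry of $\W$ enters (without it, the coefficient $2$ would instead be the sum $\langle\E\W\E^\bot,\F\rangle+\langle\E^\bot\W\E,\F\rangle$).
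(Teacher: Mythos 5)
Your verification is correct and complete: the resolution-of-identity expansion $\F=(\E+\E^\bot)\F(\E+\E^\bot)$, the cyclicity/transpose manipulations, and the use of $\W=\W^\T$ to merge the two cross terms are all sound, and the subtraction of $\langle\W,\E\rangle=\langle\E\W\E,\F\rangle+\langle\E\W\E,\F^\bot\rangle$ yields exactly the claimed identity. The paper itself does not prove this proposition (it is quoted from Proposition C.1 of \cite{vu2013minimax}), and your argument is the standard one for that result, so there is nothing further to reconcile; the only cosmetic remark is that the listed fact $\E\E^\bot=0$ is never actually needed.
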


\begin{proposition}\label{prop:net1} 
(Proposition D.1 in \cite{vu2013minimax}). Let $\A$ be a $p\times d$ matrix, $(e_1,\ldots,e_p)$ be the canonical basis of $\bR^p$ and $\cN_\delta$ be a $\delta$-net of $\bfS_2^{d-1}$ for some $\delta\in[0,1)$. Then
\[
\|\A\|_{2,\infty}\leq(1-\delta)^{-1}\max_{1\leq j\leq p}\max_{\bfu\in\cN_\delta}\langle \bfe_j,\A \bfu\rangle.
\]
\end{proposition}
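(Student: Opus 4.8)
The plan is to prove this by the standard covering (net) argument on the unit sphere $\bfS_2^{d-1}$, carried out one row at a time. First I would rewrite the left‑hand side in variational form: by definition $\|\A\|_{2,\infty}=\max_{1\le j\le p}\|\bfa_{j*}\|_2$, where $\bfa_{j*}^\T=\bfe_j^\T\A$ is the $j$th row of $\A$, and for each fixed $j$,
\[
\|\bfa_{j*}\|_2=\max_{\bfu\in\bfS_2^{d-1}}\langle\bfa_{j*},\bfu\rangle=\max_{\bfu\in\bfS_2^{d-1}}\langle\bfe_j,\A\bfu\rangle,
\]
the maximum being attained at $\bfu_j^{\ast}=\bfa_{j*}/\|\bfa_{j*}\|_2$ when $\bfa_{j*}\neq0$ (the case $\bfa_{j*}=0$ is trivial). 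So the whole content of the lemma is that replacing the sphere by the net $\cN_\delta$ in this identity costs only the factor $(1-\delta)^{-1}$.

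Next I would pass from the sphere to the net. Fix $j$ with $\bfa_{j*}\neq0$ and pick $\bfu\in\cN_\delta$ with $\|\bfu-\bfu_j^{\ast}\|_2\le\delta$, which exists because $\cN_\delta$ is a $\delta$-net of $\bfS_2^{d-1}$. Writing $\bfu=\bfu_j^{\ast}-(\bfu_j^{\ast}-\bfu)$ and applying Cauchy--Schwarz,
\[
\langle\bfe_j,\A\bfu\rangle=\langle\bfa_{j*},\bfu\rangle=\|\bfa_{j*}\|_2-\langle\bfa_{j*},\bfu_j^{\ast}-\bfu\rangle\ge\|\bfa_{j*}\|_2-\|\bfa_{j*}\|_2\,\|\bfu_j^{\ast}-\bfu\|_2\ge(1-\delta)\|\bfa_{j*}\|_2 .
\]
Hence $\max_{\bfu\in\cN_\delta}\langle\bfe_j,\A\bfu\rangle\ge(1-\delta)\|\bfa_{j*}\|_2$, and since $\delta<1$ we divide by $1-\delta$ to obtain $\|\bfa_{j*}\|_2\le(1-\delta)^{-1}\max_{\bfu\in\cN_\delta}\langle\bfe_j,\A\bfu\rangle$. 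Taking the maximum over $j=1,\dots,p$ gives the stated inequality.

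There is no genuine obstacle here; this is a one–paragraph covering estimate. The only points deserving a moment's care are the degenerate zero‑row case, which must be excluded before writing $\bfu_j^{\ast}=\bfa_{j*}/\|\bfa_{j*}\|_2$, and the \emph{direction} of the bound: because we are lower bounding $\langle\bfe_j,\A\bfu\rangle$ over the net rather than upper bounding a spectral‑type quantity, the argument proceeds by exhibiting the explicit near‑optimal direction $\bfu_j^{\ast}$ and approximating it, rather than invoking a generic net inequality. The constant $(1-\delta)^{-1}$ is exactly the Cauchy--Schwarz slack and cannot be removed by this method; the analogous two–sided sphere‑to‑net passage used elsewhere in the paper (for operator norms) is what produces the $(1-2\delta)^{-1}$ factors there.
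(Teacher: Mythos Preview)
Your argument is correct and is exactly the standard one-row-at-a-time covering estimate; the paper itself does not supply a proof for this proposition but simply cites it as Proposition~D.1 of \cite{vu2013minimax}, whose proof is the same net argument you give.
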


\begin{proposition}\label{prop:net2}
Let $\A$ be a $d\times d$ matrix, $\cN_\delta$ be a $\delta$-net of $\bfS_2^{d-1}$ for some $\delta\in [0,1)$. Then
\[
\|\A\|_2\leq (1-2\delta)^{-1}\max_{\bfu\in \cN_\delta}\langle \bfu,\A \bfu\rangle.
\]
\end{proposition}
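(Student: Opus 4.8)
The plan is to reduce to the standard covering-number argument for the spectral norm of a symmetric matrix; indeed, Proposition \ref{prop:net2} is only ever invoked for symmetric $\A$ (e.g.\ $\V_0^\T(\bfS_Z-\bGamma)\V_0$ in Lemmas \ref{lem:ineq-Sn} and \ref{lem:T1}), so I would work under that hypothesis and with $|\langle\bfu,\A\bfu\rangle|$ in place of $\langle\bfu,\A\bfu\rangle$. First I would record the variational identity $\|\A\|_2=\max_{\bfu\in\bfS_2^{d-1}}|\langle\bfu,\A\bfu\rangle|$, which holds because a symmetric $\A$ has an orthonormal eigenbasis and its spectral norm equals the largest eigenvalue in modulus. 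Let $\bfu^*\in\bfS_2^{d-1}$ attain this maximum, and pick $\bfu\in\cN_\delta$ with $\|\bfu^*-\bfu\|_2\le\delta$.

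The key step is to bound the defect $\bigl|\langle\bfu^*,\A\bfu^*\rangle-\langle\bfu,\A\bfu\rangle\bigr|$. Writing the difference as $\langle\bfu^*-\bfu,\A\bfu^*\rangle+\langle\bfu,\A(\bfu^*-\bfu)\rangle$ and applying the triangle inequality followed by Cauchy--Schwarz gives the upper bound $\|\bfu^*-\bfu\|_2\,\|\A\bfu^*\|_2+\|\bfu\|_2\,\|\A\|_2\,\|\bfu^*-\bfu\|_2$, which is at most $2\delta\|\A\|_2$ since $\|\A\bfu^*\|_2\le\|\A\|_2$ and $\|\bfu\|_2=1$. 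Consequently $|\langle\bfu,\A\bfu\rangle|\ge|\langle\bfu^*,\A\bfu^*\rangle|-2\delta\|\A\|_2=(1-2\delta)\|\A\|_2$; taking the maximum over $\cN_\delta$ and dividing by $1-2\delta>0$ yields $\|\A\|_2\le(1-2\delta)^{-1}\max_{\bfu\in\cN_\delta}|\langle\bfu,\A\bfu\rangle|$.

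There is no real obstacle here; the only thing to be careful about is that, in contrast to Proposition \ref{prop:net1} where a single slot of a bilinear form is perturbed and the loss is $(1-\delta)^{-1}$, here \emph{both} arguments of the quadratic form $\bfu\mapsto\langle\bfu,\A\bfu\rangle$ move at once, which is exactly what produces the factor $2\delta$ and restricts the useful range to $\delta<1/2$. (If one wanted the statement for a general matrix, one would instead net the left and right unit vectors separately and use $\|\A\|_2=\max_{\|\bfu\|_2=\|\bfv\|_2=1}\langle\bfu,\A\bfv\rangle$, but symmetry suffices for every application in this paper.)
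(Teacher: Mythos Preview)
Your argument is correct and is essentially the paper's own proof: both pick a maximizer $\bfu_*$ of the quadratic form, approximate it by some $\bfu\in\cN_\delta$, use the same two-term splitting $\langle\bfu_*,\A\bfu_*\rangle-\langle\bfu,\A\bfu\rangle=\langle\bfu_*-\bfu,\A\bfu_*\rangle+\langle\bfu,\A(\bfu_*-\bfu)\rangle$ (the paper writes the symmetric variant with $\bfu$ and $\bfu_*$ swapped in the two slots), bound each piece by $\delta\|\A\|_2$, and rearrange. Your added remark that one needs symmetry and an absolute value (since $\|\A\|_2=\max_{\bfu}|\langle\bfu,\A\bfu\rangle|$ rather than $\max_{\bfu}\langle\bfu,\A\bfu\rangle$ in general) is a genuine clarification the paper glosses over, but it does not change the method.
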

\begin{proof}
There exist $\bfu_*\in \bfS^{d-1}$ and $\bfu\in \cN_\delta$ such that
\[
\|\A\|_2 = \langle \bfu_*,\A \bfu_*\rangle \text{ and } \|\bfu_*-\bfu\|_2\leq\delta.
\]
Then we have
\begin{align*}
\|\A\|_2 & =  \langle \bfu,\A \bfu\rangle +  \langle \bfu_*-\bfu,\A \bfu\rangle + \langle \bfu_*,\A(\bfu_*-\bfu)\rangle \\
& \leq \max_{\bfu\in \cN_\delta}\langle \bfu,\A \bfu\rangle + 2\delta\|\A\|_2
\end{align*}
Thus, 
\[
\|\A\|_2\leq (1-2\delta)^{-1}\max_{\bfu\in \cN_\delta}\langle \bfu,\A \bfu\rangle.
\]
\end{proof}

\begin{proposition}\label{prop:netnum} 
(Covering number of the sphere). Let $\cN_\delta$ be a minimal $\delta$-net of $\bfS_2^{d-1}$ for $\delta\in(0,1)$. Then
\[
|\cN_\delta|\leq(1+2/\delta)^d.
\]
\end{proposition}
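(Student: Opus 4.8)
The plan is to give the classical volumetric (packing) argument for the covering number of the Euclidean sphere. First I would reduce to bounding the size of a single convenient net: since a minimal $\delta$-net is, by definition, a $\delta$-net of smallest cardinality, it suffices to exhibit \emph{some} $\delta$-net of $\bfS_2^{d-1}$ of cardinality at most $(1+2/\delta)^d$. To that end, let $\cN$ be a maximal subset of $\bfS_2^{d-1}$ whose points are pairwise at Euclidean distance at least $\delta$; such a set exists (and is finite) because $\bfS_2^{d-1}$ is bounded. Maximality forces $\cN$ to be a $\delta$-net: if some $x\in\bfS_2^{d-1}$ satisfied $\|x-y\|_2\ge\delta$ for every $y\in\cN$, then $\cN\cup\{x\}$ would still be $\delta$-separated, contradicting maximality; hence every $x$ lies within distance $\delta$ of $\cN$, so $|\cN_\delta|\le|\cN|$.

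Next I would set up the two geometric facts that drive the volume comparison. Because the points of $\cN$ are at pairwise distance at least $\delta$, the open Euclidean balls $B(y,\delta/2)$, $y\in\cN$, are pairwise disjoint (a common point would force two centers to lie within distance $\delta$). Moreover, since each $y$ has $\|y\|_2=1$, every ball $B(y,\delta/2)$ is contained in the ball $B(0,1+\delta/2)$ about the origin. Writing $V_d$ for the volume of the unit ball in $\bR^d$ and using the scaling identity $\mathrm{vol}(B(z,r))=r^d V_d$, the disjoint union of the small balls sits inside the large ball, giving
\[
|\cN|\,(\delta/2)^d V_d\;\le\;(1+\delta/2)^d V_d .
\]

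Finally I would cancel $V_d$ and rearrange: $|\cN|\le\bigl((1+\delta/2)/(\delta/2)\bigr)^d=(1+2/\delta)^d$, and combining with $|\cN_\delta|\le|\cN|$ yields the stated bound. There is no genuine obstacle here; the argument is entirely elementary. The only points deserving a sentence of care are (i) confirming that the object actually bounded, a maximal $\delta$-separated set, is itself a $\delta$-net, so that the passage from $\cN_\delta$ to $\cN$ is valid, and (ii) the trivial but essential observation that volume scales as $r^d$ under dilation by $r$, which is what makes the ambient dimension $d$ appear in the exponent.
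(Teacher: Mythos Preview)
Your argument is correct and is exactly the standard volumetric (packing) proof of this bound; there is nothing to fix. Note that the paper itself states Proposition~\ref{prop:netnum} without proof, treating it as a well-known fact, so there is no ``paper's own proof'' to compare against---your write-up supplies precisely the classical justification one would expect.
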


\begin{proposition}\label{prop:ineq} 
(Proposition D.3 in \cite{vu2013minimax}). Let $X$ and $Y$ be random variables. Then
\[
\|XY\|_{\psi_1}\leq\|X\|_{\psi_2}\|Y\|_{\psi_2}.
\]
\end{proposition}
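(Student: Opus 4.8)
The plan is to reduce to a normalized case and then combine the elementary pointwise bound $|xy|\le\tfrac12(x^2+y^2)$ with the Cauchy--Schwarz inequality. First I would dispose of the degenerate cases: if $\|X\|_{\psi_2}=\infty$ or $\|Y\|_{\psi_2}=\infty$ the asserted bound is vacuous, while if $\|X\|_{\psi_2}=0$ then $X=0$ almost surely, so $XY=0$ and $\|XY\|_{\psi_1}=0$. Hence assume $\|X\|_{\psi_2}$ and $\|Y\|_{\psi_2}$ are both finite and positive. Since $\|aX\|_{\psi_2}=|a|\,\|X\|_{\psi_2}$ and $\|aZ\|_{\psi_1}=|a|\,\|Z\|_{\psi_1}$ for any scalar $a$ and any random variable $Z$, replacing $X$ by $X/\|X\|_{\psi_2}$ and $Y$ by $Y/\|Y\|_{\psi_2}$ shows that it suffices to treat the case $\|X\|_{\psi_2}=\|Y\|_{\psi_2}=1$; the goal then becomes $\|XY\|_{\psi_1}\le1$.

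Next I would extract the consequence of this normalization. Applied to a scalar, the sub-Gaussian norm is $\|X\|_{\psi_2}=\inf\{C>0:\bE\exp(X^2/C^2)\le2\}$, so $\|X\|_{\psi_2}=1$ gives $\bE\exp(X^2/C^2)\le2$ for every $C>1$; letting $C\downarrow1$ and using monotone convergence yields $\bE\exp(X^2)\le2$, and likewise $\bE\exp(Y^2)\le2$. (If one prefers not to argue that the infimum is attained, one can instead keep a generic $C>1$ throughout and take $C\downarrow1$ only at the very end.)

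The heart of the argument is then the chain
\[
\bE\exp\!\big(|XY|\big)\ \le\ \bE\!\left[\exp\!\Big(\tfrac12X^2\Big)\exp\!\Big(\tfrac12Y^2\Big)\right]\ \le\ \big(\bE\exp(X^2)\big)^{1/2}\big(\bE\exp(Y^2)\big)^{1/2}\ \le\ \sqrt2\cdot\sqrt2\ =\ 2,
\]
where the first step uses $|XY|\le\tfrac12(X^2+Y^2)$ and monotonicity of $\exp$, and the second is Cauchy--Schwarz. Since $\|Z\|_{\psi_1}=\inf\{C>0:\bE\exp(|Z|/C)\le2\}$, this shows $C=1$ is admissible, i.e.\ $\|XY\|_{\psi_1}\le1=\|X\|_{\psi_2}\|Y\|_{\psi_2}$, and undoing the normalization recovers the general inequality. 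I do not anticipate a genuine obstacle here: the only mildly delicate points are the homogeneity reduction and the harmless passage to the limit $C\downarrow1$, both of which are routine.
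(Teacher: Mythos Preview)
Your argument is correct: the normalization by homogeneity, the attainment of the infimum via monotone convergence, and the core chain $|XY|\le\tfrac12(X^2+Y^2)$ followed by Cauchy--Schwarz are all sound and yield exactly $\bE\exp(|XY|)\le2$, hence $\|XY\|_{\psi_1}\le1$ in the normalized case.

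The paper does not actually prove this proposition; it simply quotes it as Proposition~D.3 of \cite{vu2013minimax} and uses it as a black box. Your proof is the standard one (and is essentially the argument given in the Vu--Lei supplement), so there is nothing to compare beyond noting that you have supplied what the paper only cites.
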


\section{More details for the alternating direction method of multipliers algorithms}\label{supp:algo}
\subsection{$V$-update of Algorithm \ref{admm1}}\label{supp:algo1}
For $q\in(0,1)$, the subproblems amount to
\begin{equation}\label{sub2-2}
\bfv = \arg\min_\bfv\frac{k_1}{2}\|\bfv\|_2^2+k_2\|\bfv\|_2^q+\bfv^\T\bxi\triangleq\argmin_\bfv f(\bfv),
\end{equation}
where $k_1=\beta+\rho$, $k_2=\alpha$ and $\bxi=\bfb_{i*},i=1,\ldots,p$. 
We take the derivative and set it to 0, and obtain
\begin{equation}\label{sub2-2-3}
k_1\bfv+k_2q\|\bfv\|_2^{q-2}\bfv+\bxi = 0 \quad \text{ if } \bfv\neq 0.
\end{equation}                                                                                
Let $x=\|\bfv\|_2>0$ and $k_3=\|\bxi\|_2$. We have 
\begin{equation}\label{sub2-equ}
k_1x+k_2qx^{q-1}=k_3.
\end{equation} 
The equation (\ref{sub2-equ}) has closed-form solutions for some $q$. For example, if $q=1/2$, then setting $z=\sqrt{x}$ leads to $k_1z^3-k_3z+k_2/2=0$; if $q=2/3$, then setting $z=x^{1/3}$ leads to $k_1z^4-k_3z+2k_2/3=0$. In both cases, we can obtain the analytic expressions of $x=\|\bfv\|_2$. Plugging it into (\ref{sub2-2-3}), we have the solution $\tilde \bfv$. The solution to (\ref{sub2-2}) is $\bfv=\argmin_{\bfv\in\{\tilde \bfv,0\}}f(\bfv)$. (If we cannot get a positive solution to (\ref{sub2-equ}), then $\bfv=0$ is the solution to (\ref{sub2-2}).)

\subsection{Algorithm for the column sparse principal subspace estimator}\label{supp:algo2}
Analogously, the penalized version of (\ref{opt-col}) is
\begin{align}\label{opt-col-new}
\text{minimize }& -\langle \bfS_Z,\U\U^\T\rangle + \sum_{j=1}^d\alpha_j\|\bfv_{* j}\|_q^q + \frac{\mu}{2}\|\Y\|_F^2\\ \notag
\text{subject to }& \U\in\bV_{p,d} \\ \notag
&  \U-\V-\Y = 0,
\end{align}
where $\bfv_{* j}$ denotes the $j$th column of $\V$.
The augmented Lagrangian function for problem (\ref{opt-col-new}) is
\begin{align*}
\cT_\beta(\U,\V,\Y,\bLambda) = &-\langle \bfS_Z,\U\U^\T\rangle +\sum_{j=1}^d\alpha_j\|\bfv_{* j}\|_q^q + \frac{\mu}{2}\|\Y\|_F^2 + \langle \V-\U+\Y,\bLambda\rangle \\
& +\frac{\beta}{2}\|\V-\U+\Y\|_F^2,
\end{align*}
where $\bLambda$ is the Lagrange multiplier, $\beta>0$ is a penalty hyperparameter. Similarly, we linearize the objective function and define the approximated augmented Lagrangian function:
\begin{align*}
\widehat\cT_\beta^\U(\U;\widehat \U,\widehat \V,\widehat \Y,\bLambda) = & -\langle \bfS_Z,\widehat \U\widehat \U^\T\rangle +  \frac{\mu}{2}\|\widehat \Y\|_F^2 -2\langle \bfS_Z\widehat \U,\U-\widehat \U\rangle + \sum_{j=1}^d\alpha_j\|\widehat \bfv_{* j}\|_q^q \\ 
&+ \langle \widehat \V-\U+\widehat \Y,\bLambda\rangle +\frac{\beta}{2}\|\widehat \V-\U+\widehat \Y\|_F^2. 
\end{align*}
The linearized proximal alternating direction method of multipliers algorithm is described in Algorithm \ref{admm2}.
\vspace{-0.6cm}
\begin{algorithm}\label{admm2}
\caption{Linearized proximal alternating direction method of multipliers for column sparsity.}
\begin{tabbing}
  \qquad \enspace Input: Initial values $\U^0,\V^0,\Y^0,\bLambda^0$ and hyperparameters $\alpha_j,\beta,\mu,\rho$ \\
  \qquad \enspace For $k=0$ to $k=K-1$ \\
  \qquad\qquad  $\U^{k+1} = \arg\min\limits_{\U\in\bV_{p,d}} \widehat\cT_\beta^\U(\U;\U^k,\V^k,\Y^k,\bLambda^k)+\frac{\rho}{2}\|\U-\U^k\|_F^2$ \\
  \qquad\qquad  $\V^{k+1} = \arg\min\limits_\V\cT_\beta(\U^{k+1},\V,\Y^k,\bLambda^k)+\frac{\rho}{2}\|\V-\V^k\|_F^2$ \\
  \qquad\qquad  $\Y^{k+1} = \frac{1}{\mu+\beta}\left[\beta(\U^{k+1}-\V^{k+1})-\bLambda^{k+1}\right]$ \\
  \qquad\qquad  $\bLambda^{k+1} = \bLambda^k+\beta(\V^{k+1}-\U^{k+1}+\Y^{k+1})$ \\
  \qquad \enspace Output $\U^K$, $\V^K$
\end{tabbing}
\end{algorithm}

$\U$-update of Algorithm \ref{admm2} is the same as that of Algorithm \ref{admm1}. $\V$-update of Algorithm \ref{admm2} can be decoupled into one-dimensional subproblems:
\begin{align*}
& v_{ij} = \arg\min_{v_{ij}}\frac{\beta+\rho}{2}v_{ij}^2 + \alpha_j |v_{ij}|^q + b_{ij}v_{ij},~i = 1,\ldots,p, ~j = 1,\ldots,d,\text{ if } q\in(0,1], \\
& v_{ij} = \arg\min_{v_{ij}}\frac{\beta+\rho}{2}v_{ij}^2 + \alpha_j \text{I}(v_{ij}\neq 0) + b_{ij}v_{ij},~i = 1,\ldots,p, ~j = 1,\ldots,d,\text{ if } q=0,
\end{align*}
where $b_{ij}$ is the $(i,j)$ element of the matrix $\B = \bLambda^k+\beta(\Y^k-\U^{k+1})-\rho \V^k$ and $\text{I}(\cdot)$ is an indicator function. 
After some calculations, we obtain the solutions for $q = 1$ and $q = 0$.
\begin{proposition}
For $q=1$, the solution to $V$-update of Algorithm \ref{admm2} is $v_{ij} = S_{\alpha_j/(\beta+\rho)}(-b_{ij}/(\beta$ $+\rho))$, where $S_\lambda(y) = \max(|y|-\lambda,0)\sgn(y)$ is the soft thresholding function.
For $q=0$, the solution to $V$-update of Algorithm \ref{admm2} is $v_{ij} = -\text{I}(|b_{ij}|^2>2\alpha_j(\beta+\rho))b_{ij}/(\beta+\rho)$.
\end{proposition}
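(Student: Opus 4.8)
The plan is to solve in closed form the two scalar minimization subproblems that define the $V$-update of Algorithm~\ref{admm2}. Writing $c:=\beta+\rho>0$ and completing the square, $\frac{c}{2}v_{ij}^2+b_{ij}v_{ij}=\frac{c}{2}(v_{ij}-y)^2-\frac{b_{ij}^2}{2c}$ with $y:=-b_{ij}/c$, each subproblem becomes (up to the additive constant $-b_{ij}^2/(2c)$) a one-dimensional proximal problem: minimize $\frac{c}{2}(v_{ij}-y)^2+\alpha_j|v_{ij}|$ when $q=1$, and minimize $\frac{c}{2}(v_{ij}-y)^2+\alpha_j\,\text{I}(v_{ij}\neq 0)$ when $q=0$. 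Since the objective in the $V$-update is already separable across the entries $v_{ij}$, it suffices to analyze a single coordinate and then read off the formulas.

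For $q=1$ I would exploit strict convexity (guaranteed by $c>0$): the unique minimizer is characterized by the first-order condition $0\in c(v_{ij}-y)+\alpha_j\,\partial|v_{ij}|$, and I would split into the cases $v_{ij}>0$, $v_{ij}<0$, and $v_{ij}=0$. These give, respectively, $v_{ij}=y-\alpha_j/c$ (admissible iff $y>\alpha_j/c$), $v_{ij}=y+\alpha_j/c$ (admissible iff $y<-\alpha_j/c$), and $v_{ij}=0$ (optimal iff $|y|\le\alpha_j/c$), which assemble into $v_{ij}=S_{\alpha_j/c}(y)=S_{\alpha_j/(\beta+\rho)}(-b_{ij}/(\beta+\rho))$, the stated formula. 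Equivalently, one may simply invoke the standard fact that the proximal operator of $\lambda|\cdot|$ is the soft-thresholding operator $S_\lambda$.

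For $q=0$ the objective $\frac{c}{2}(v_{ij}-y)^2+\alpha_j\,\text{I}(v_{ij}\neq 0)$ is nonconvex, so I would argue by directly comparing two regimes. On $\{v_{ij}\neq 0\}$ the penalty is the constant $\alpha_j$, so the infimum there equals $\alpha_j-\frac{b_{ij}^2}{2c}$, attained at $v_{ij}=y$ when $b_{ij}\neq 0$; at $v_{ij}=0$ the value is $0$. Hence the nonzero point $v_{ij}=y$ is the global minimizer exactly when $\alpha_j-\frac{b_{ij}^2}{2c}<0$, i.e.\ $b_{ij}^2>2\alpha_j c$ (which in particular forces $b_{ij}\neq 0$), and $v_{ij}=0$ otherwise. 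Recalling $y=-b_{ij}/c$ and $c=\beta+\rho$, this is exactly $v_{ij}=-\text{I}(|b_{ij}|^2>2\alpha_j(\beta+\rho))\,b_{ij}/(\beta+\rho)$.

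The computations here are elementary; the only point that needs care is the $q=0$ case, where the nonconvexity rules out a subgradient/KKT argument and one must instead compare optimal values across $\{v_{ij}=0\}$ and $\{v_{ij}\neq 0\}$ and fix a tie-breaking convention at the threshold $b_{ij}^2=2\alpha_j(\beta+\rho)$ — the stated formula breaks the tie in favor of $v_{ij}=0$. (By contrast, the analogous row-sparsity Proposition in Section~\ref{sec:admm} couples each row $\bfv_{i*}$ through its $\ell_2$ norm, so there the same reasoning produces block soft-/hard-thresholding of rows rather than of individual coordinates.)
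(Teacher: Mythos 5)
Your proof is correct and is exactly the routine calculation the paper elides with ``after some calculations'': completing the square reduces each scalar subproblem to a proximal problem, the subgradient condition gives soft-thresholding for $q=1$, and direct comparison of the objective on $\{v_{ij}=0\}$ versus $\{v_{ij}\neq 0\}$ gives the hard-thresholding rule for $q=0$, with the tie at $|b_{ij}|^2=2\alpha_j(\beta+\rho)$ broken toward zero as the stated formula does. No gaps.
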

For $q\in(0,1)$, the subproblems amount to
\begin{equation}\label{col1-1}
v = \arg\min_v\frac{k_1}{2}v^2+k_2|v|^q+bv\triangleq\arg\min_vg(v),
\end{equation}
where $k_1=\beta+\rho$, $k_2=\alpha_j$ and $b=b_{ij},i=1,\ldots,p$, $j = 1,\ldots,d$.
We take the derivative and set it to 0, and obtain
\begin{equation}\label{col1-2}
k_1v+k_2q|v|^{q-2}v+b = 0 \quad \text{ if } v\neq 0.
\end{equation}                                                                                                         
Let $x=|v|>0$ and $k_3=|b|$. We have 
\begin{equation}\label{col1-equ}
k_1x+k_2qx^{q-1}=k_3.
\end{equation} 
The equation (\ref{col1-equ}) has closed-form solutions for some $q$. For example, if $q=1/2$, then setting $z=\sqrt{x}$ leads to $k_1z^3-k_3z+k_2/2=0$; if $q=2/3$, then setting $z=x^{1/3}$ leads to $k_1z^4-k_3z+2k_2/3=0$. In both cases, we can obtain the analytic expressions of $x=|v|$. Plugging it into (\ref{col1-2}), we have the solution $\tilde v$. The solution to (\ref{col1-1}) is $v=\argmin_{v\in\{\tilde v,0\}}g(v)$. (If we cannot get a positive solution to (\ref{col1-equ}), then $v=0$ is the solution to (\ref{col1-1}).) 

For the column sparsity, to simplify the tuning procedure of $\alpha_j$, we let $\alpha_j=\alpha/\|\bfv^0_{*j}\|_1$, where $\bfv_{*j}^0$ is the $j$th column of $\V^0$, $j=1,\ldots,d$, and select $\alpha$ by 5-fold cross-validation. 

\subsection{Selection of $\mu$}\label{supp:mu}

\begin{figure}[htb!]
  \centering
  \includegraphics[width=0.49\textwidth]{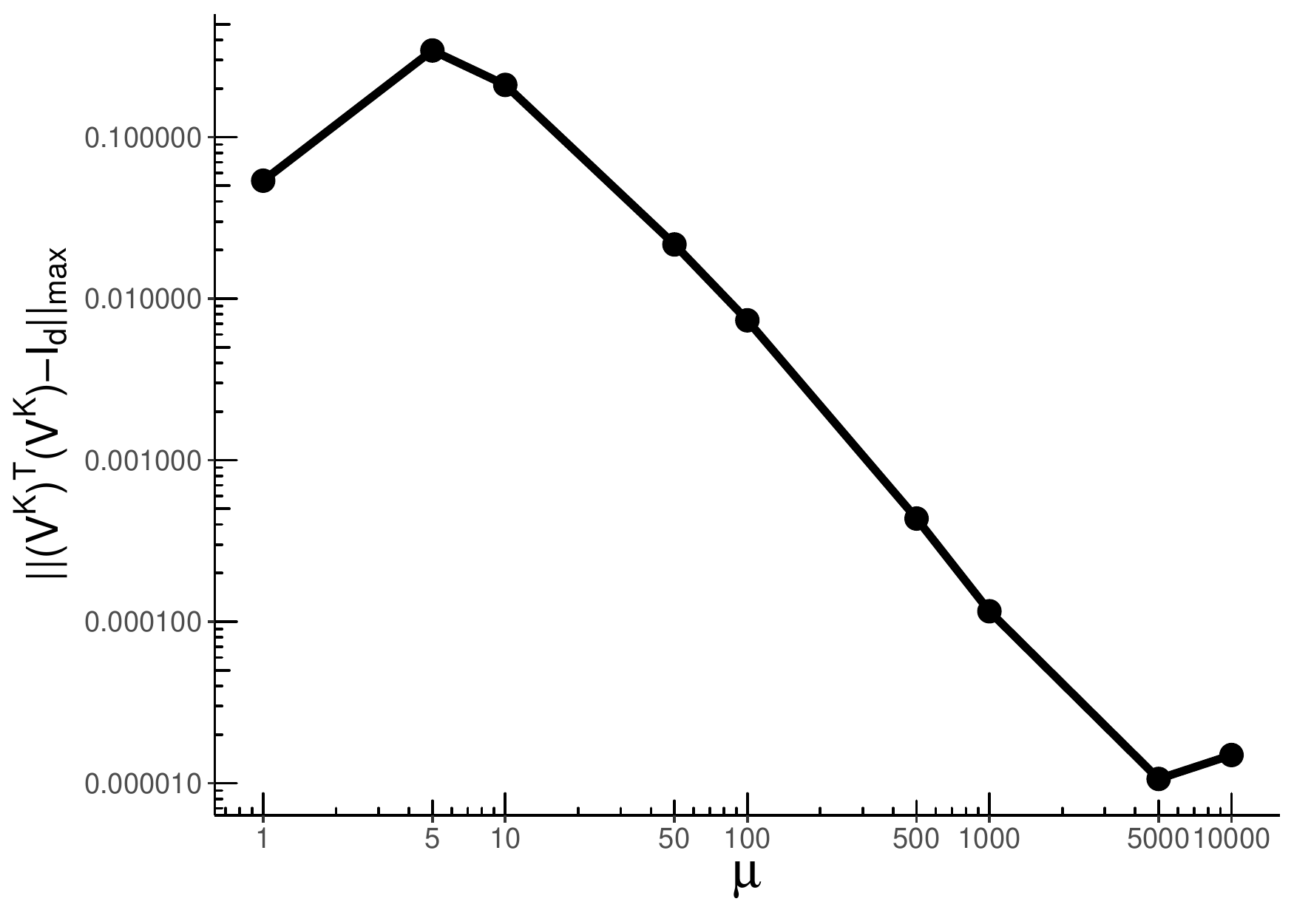}
  \includegraphics[width=0.49\textwidth]{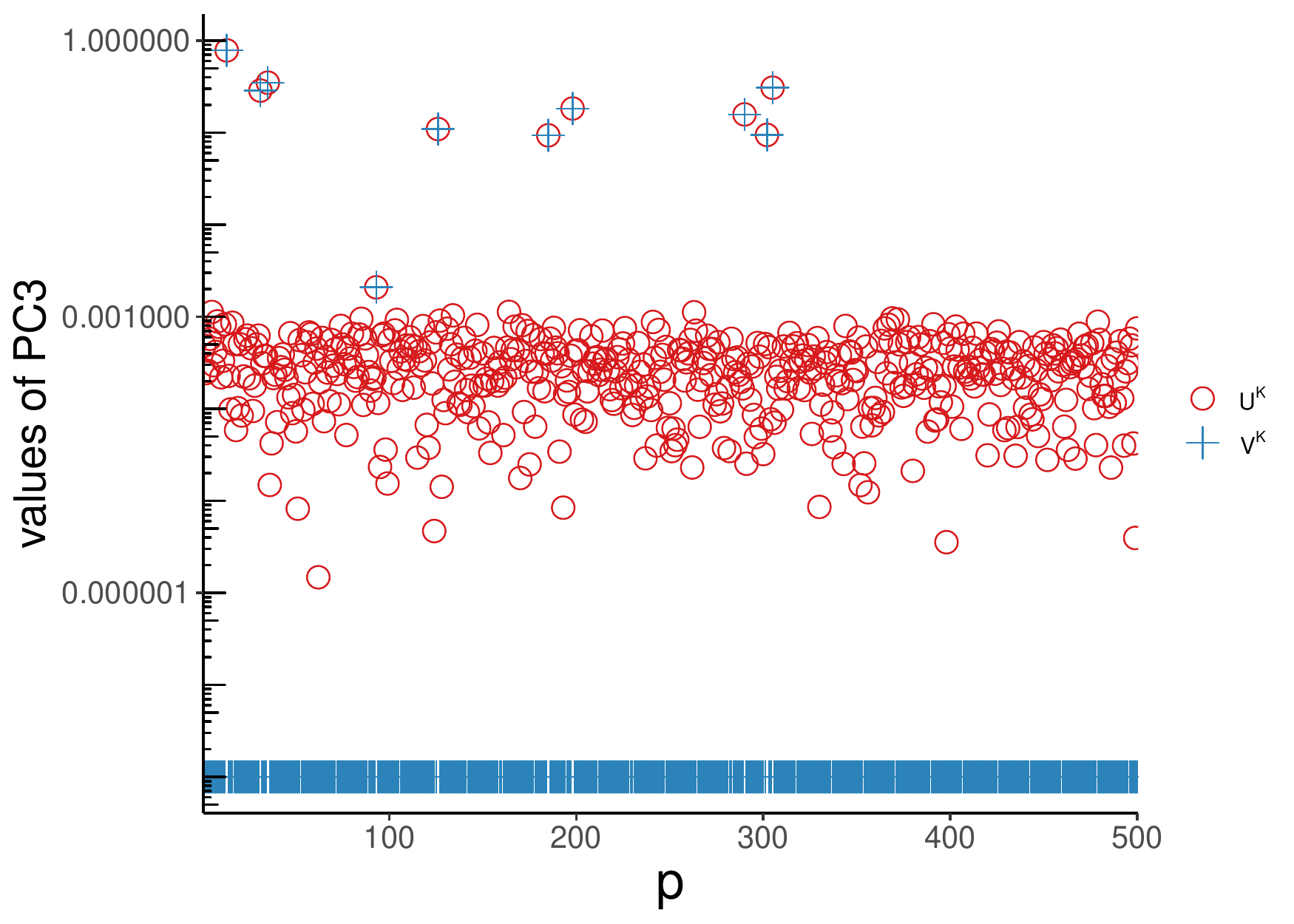}
  \caption{Under $n=250$ and $p=500$, the output of one randomly chosen simulation by the proposed method with row sparsity $q=0$. Left: the values of $\|(\V^K)^\T(\V^K)-\I_d\|_{\max}$ with varying $\mu$. Right: the values of the third column of the output $\U^K$ and $\V^K$ when $\mu=1000$.} \label{fig:UV}
  \end{figure}
  
We randomly choose one simulation setting under $n=250$ and $p=500$ with the row sparsity $q=0$, to examine the influence of the hyperparameter $\mu$. Shown in the left panel of Fig.\ \ref{fig:UV} is how $\| (\V^K)^\T(\V^K)-\I_5\|_{\max}$ changes with  $\mu$. Since $\|(\V^K)^\T(\V^K)-\I_5\|_{\max}=1.16\times 10^{-4}$ when $\mu=1000$, $\V^K$ is nearly orthonormal. The right panel of Fig.\ \ref{fig:UV} shows the difference between $\U^K$ and $\V^K$ under $\mu=1000$. Take a look at the values of the output $\U^K$. Although the important features pop out, one still needs to set a threshold carefully to identify them.
By comparison, $\V^K$ is nearly orthonormal and sparse, whose nonzero values automatically correspond to the important features. Hence we take $\V^K$ as our estimator.

\section{Additional results of the real application}\label{supp:real}
Table \ref{tab:appcol1} lists the words selected under the column sparsity with $q=1$.
The biplots of the first two principal components under the row sparsity with $q=1$ are shown in Fig.\ \ref{fig:realrow1}.

\begin{table}
\def~{\hphantom{0}}
\caption{Results under the column sparsity with $q=1$. Denote the proposed, Log, Raw and Power methods by M1, M2, M3 and M4, respectively. Denote by PC1 and PC2 the first and second principal components}{
\begin{tabular}{@{}llrrrrlrrrr@{}}
     & Word & M1 & M2 & M3 & M4 & Word & M1 & M2 & M3 & M4 \\ [5pt]
 PC1 & bias & 0.15 &  &  &  & normal & 0.11 &  &  & 0.02\\
     & covariance & 0.40 & 0.51 & 0.13 & 0.53 & number &  &  & 0.04 & \\
     & coefficient & 0.22 &  & 0.15 & 0.08 & oracle &  &  & 0.06 & \\
     & composition &  &  & 0.06 &  & penalty &  &  & 0.22 & \\
     & continuous &  &  & -0.01 &  & point &  &  & -0.04 & -0.01 \\
     & criterion &  &  & 0.01 &  & process &  &  & -0.40 & -0.02 \\
     & equal & 0.21 &  &  & 0.01 & property &  &  & 0.02 & \\
     & generalize &  &  & 0.02 & 0.01 & random &  &  & -0.01 &  \\
     & group &  &  & 0.18 &  & regression & 0.36 & 0.53 & 0.35 & 0.75 \\
     & hierarchy & -0.02 &  &  &  & response &  &  &  & 0.01 \\
     & high &  &  &  & -0.05 & select & 0.03 &  & 0.50 & 0.06 \\
     & inference & 0.03 &  &  &  & semiparametric & 0.48 & 0.61 &  & 0.13 \\
     & lasso &  &  & 0.11 &  & space &  &  & -0.22 & -0.02 \\
     & likelihood & 0.39 & 0.23 & 0.04 & 0.14 & sparse &  &  & 0.01 &  \\
     & linear & 0.26 & 0.13 & 0.12 & 0.30 & test &  &  & -0.02 & \\
     & maximal & 0.19 &  &  &  & time &  &  & -0.50 & \\
     & missing & 0.08 &  &  &  & volatilization &  &  & -0.02 & \\
     & nonparametric & 0.25 &  &  & 0.09 &  &  &  &  & \\ [5pt]
 PC2 & adaptive & -0.04 &  &  &  & null &  &  & 0.03 &  \\    
     & baseline & 0.18 &  &  &  & number &  &  & -0.04 & -0.27 \\    
     & Bayes &  &  & -0.01 &  & optimize &  &  &  & -0.01 \\    
     & cancer & 0.10 &  &  &  & predict &  &  & -0.23 &  \\    
     & censor & 0.50 & 0.58 &  &  & process &  &  &  & 0.25 \\    
     & classify &  &  & -0.23 &  & proportion & 0.19 &  &  &  \\    
     & clinic & 0.10 &  &  &  & select &  &  &  & -0.18 \\    
     & cluster &  &  & -0.18 &  & smooth & -0.04 &  &  &  \\    
     & composition &  &  &  & -0.03 & space & -0.03 &  & -0.01 &  \\    
     & cumulative & 0.07 &  &  &  & sparse & -0.13 &  &  &  \\    
     & dimension & -0.16 &  &  & -0.54 & statistic &  &  & 0.18 &  \\    
     & disease & 0.06 &  &  &  & survive & 0.50 & 0.63 &  &  \\    
     & equal &  &  & 0.01 &  & test &  &  & 0.86 &  \\    
     & hazard & 0.49 & 0.50 &  &  & time & 0.25 &  &  & 0.66 \\    
     & high & -0.15 &  &  & -0.33 & transform & 0.12 &  &  &  \\    
     & hypothesis &  &  & 0.02 &  & value &  &  & 0.04 &  \\    
     & likelihood &  &  & 0.28 &  &  &  &  &  &  \\    
\end{tabular}}\label{tab:appcol1}
\end{table}

\begin{figure}[htb!]
   \centering
   \includegraphics[width=0.95\textwidth]{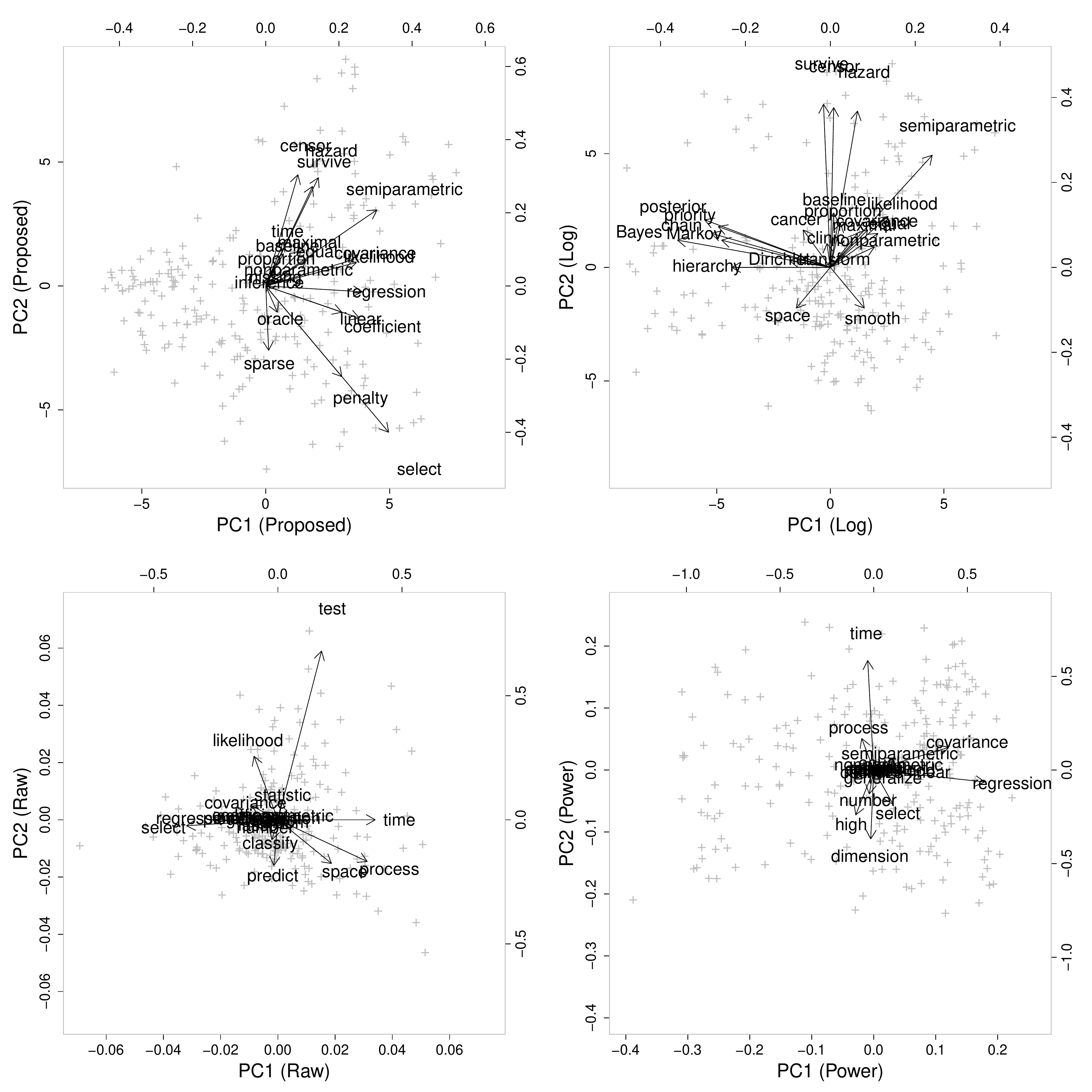}
   \caption{Biplots of the first two principal components for the proposed, Log, Raw and Power methods under the row sparsity with $q=1$.}\label{fig:realrow1}
\end{figure}

\end{document}